\numberwithin{equation}{section}
\newtheorem{Theorem}{Theorem}[section]
\newtheorem{Corollary}[Theorem]{Corollary}
\newtheorem{Lemma}[Theorem]{Lemma}
\newtheorem{Proposition}[Theorem]{Proposition}
 { \theoremstyle{definition}
\newtheorem{Definition}[Theorem]{Definition}
\newtheorem{Remark}[Theorem]{Remark} }
\newcommand{\Ja}{\mathcal{J}}
\begin{document}
\allowdisplaybreaks

\newcommand{\arXivNumber}{1907.01436}

\renewcommand{\PaperNumber}{022}

\FirstPageHeading

\ShortArticleName{The Differential Geometry of the Orbit Space of Extended Affine Jacobi Group $A_1$}

\ArticleName{The Differential Geometry of the Orbit Space\\ of Extended Affine Jacobi Group $\boldsymbol{A_1}$}

\Author{Guilherme F.~ALMEIDA}

\AuthorNameForHeading{G.F.~Almeida}

\Address{SISSA, via Bonomea 265, Trieste, Italy}
\Email{\href{mailto:galmeida@sissa.it}{galmeida@sissa.it}}

\ArticleDates{Received May 30, 2020, in final form February 11, 2021; Published online March 09, 2021}

\Abstract{We define certain extensions of Jacobi groups of $A_1$, prove an analogue of Chevalley theorem for their invariants, and construct a Dubrovin--Frobenius structure on its orbit space.}

\Keywords{Dubrovin--Frobenius manifolds; Hurwitz spaces; extended Jacobi groups}

\Classification{53D45}

\begin{flushright}
\it Dedicated to the memory of Professor Boris Dubrovin
\end{flushright}


\section{Introduction}
 Dubrovin--Frobenius manifold is a geometric interpretation of a remarkable system of differential equations called WDVV equations~\cite{B.Dubrovin2}. Since early nineties, there has been a~continuous exchange of ideas from fields that are not trivially related to each other, such as topological quantum field theory, non-linear waves, singularity theory, random matrices theory, integrable systems, and Painlev\'e equations. Dubrovin--Frobenius manifolds theory is a bridge between them.

\subsection{Orbit space of reflection groups and its extensions}
In~\cite{B.Dubrovin2}, Dubrovin pointed out that WDVV solutions with certain good analytic properties are rela\-ted with partition functions of TFT. Afterwards, Dubrovin conjectured that WDVV solutions with certain good analytic properties are in one to one correspondence with discrete groups. This conjecture is supported by ideas which come from singularity theory, because in this setting there exists an integrable systems/discrete group correspondence. Furthermore, in minimal models, such as Gepner chiral rings, there exists a correspondence between physical models and discrete groups.
In~\cite{C.Hertling}, Hertling proved that a particular class of Dubrovin--Frobenius manifold, called polynomial Dubrovin--Frobenius manifold, is isomorphic to the orbit space of a~finite Coxeter group, which are spaces such that their geometric structure is invariant under the finite Coxeter group.
In~\cite{BertolaM.1,BertolaM.2, B.Dubrovin2, B.Dubrovin1, B.DubrovinI.A.B.StrachanY.ZhangD.Zuo, B.DubrovinandY.Zhang, D.Zuo}, there are many examples of WDVV solutions that are associated with orbit spaces of natural extensions of finite Coxeter groups, such as
extended affine Weyl groups, and Jacobi groups. Therefore, the construction of Dubrovin--Frobenius manifolds on~orbit space of reflection groups and its extensions is a prospective project of the classification of WDVV solutions. In addition, WDDV solutions arising from orbit spaces may also have some applications in TFT or some combinatorial problem, because previously these relationships were demonstrated in some examples, such as the orbit space of the finite Coxeter group $A_1$, and the extended affine Weyl group $A_1$~\cite{Dubrovin3,B.DubrovinandY.Zhang2}.

\subsection{Hurwtiz space/orbit space correspondence}
There are several other non-trivial connections that Dubrovin--Frobenius manifolds theory can make. For example, Hurwitz spaces is the one of the main sources of examples of Dubrovin--Frobenius manifolds. Hurwitz spaces $H_{g,n_0,n_1,\dots,n_m}$ are moduli space of covering over $\mathbb{CP}^1$ with a fixed ramification profile. More specifically, $H_{g,n_0,n_1,\dots,n_m}$ is moduli space of pairs
\begin{gather*}
\big\{C_g ,\lambda\colon C_g\mapsto \mathbb{CP}^1\big\},
\end{gather*}
where $C_g$ is a compact Riemann surface of genus $g$ and $\lambda$ is meromorphic function with poles in
\begin{gather*}
\lambda^{-1}(\infty)=\{\infty_0,\infty_1,\dots,\infty_{m} \}.
\end{gather*}
 Moreover, $\lambda$ has degree $n_i+1$ near $\infty_i$. Hurwitz space, with a choice of a specific Abelian differential, called quasi-momentum or primary differential, give rise to a Dubrovin--Frobenius manifold; see section~\cite{B.Dubrovin2, V.Shramchenko} for details. In some examples, the Dubrovin--Frobenius structure of Hurwitz spaces are isomorphic to Dubrovin--Frobenius manifolds associated with orbit spaces of suitable groups. For instance, the orbit space of the finite Coxeter group $A_1$ is isomorphic to the Hurwitz space $H_{0,1}$. Furthermore, orbit space of the extended affine Weyl group $\tilde A_1$ and of the Jacobi group $\Ja(A_1)$ are isomorphic to the Hurwitz spaces $H_{0,0,0}$ and $H_{1,1}$ respectively. Motivated by~these examples, we construct the following diagram
\[
 \begin{tikzcd}
 H_{0,1}\cong \text{orbit space of $A_1$} \arrow{r}{1} \arrow[swap]{d}{2} & H_{0,0,0}\cong \text{orbit space of $\tilde A_1$} \arrow{d}{4} \\%
H_{1,1}\cong \text{orbit space of $\Ja(A_1)$} \arrow{r}{3}& H_{1,0,0}\cong\ ?
 \end{tikzcd}
\]
From the Hurwitz space side, the vertical lines 2 and 4 mean that we increase the genus by $1$, and the horizontal lines mean that we split one pole of order $2$ into two simple poles. From the orbit space side, the vertical line 2 means that we are doing an extension from the finite Coxeter group $A_1$ to the Jacobi group $\Ja(A_1)$; the line horizontal line~1 means that we are extending the \text{orbit space of $A_1$} to the extended affine Weyl group $\tilde A_1$. Therefore, one might ask if the line 3 and 4 would imply an orbit space interpretation of the Hurwitz space $H_{1,0,0}$. The main goal of~this paper is to define a new class of groups such that its orbit space carries the Dubrovin--Frobenius structure of $H_{1,0,0}$. The new group is called extended affine Jacobi group $A_n$, and is denoted by~$\Ja\big(\tilde A_1\big)$. This group is an extension of the Jacobi group $\Ja(A_1)$ and of the extended affine Weyl group $\tilde A_1$.

\subsection{Results}

The main goal of this paper is to construct the Dubrovin--Frobenius structure of the Hurwitz space $H_{1,0,0}$ from the data of the group $\Ja\big(\tilde A_1\big)$. In other words, we derive the WDVV solution associated to the group $\Ja\big(\tilde A_1\big)$ without using the correspondent Hurwitz space construction. First of all, recall the definition of WDVV equation:
 \begin{Definition}
The function $F(t)$, $t=\big(t^1,t^2,\dots,t^n\big)$ is a solution of a WDVV equation if its third derivatives
\begin{gather}\label{freeenergy}
c_{\alpha\beta\gamma}=\frac{\partial^3F}{\partial t^{\alpha}\partial t^{\beta}\partial t^{\gamma}}
\end{gather}
satisfy the following conditions:

\begin{enumerate}\itemsep=0pt
\item[1)] $\eta_{\alpha\beta}=c_{1\alpha\beta}$ is constant nondegenerate matrix;
\item[2)] the function
\begin{gather*}
c_{\alpha\beta}^{\gamma}=\eta^{\gamma\delta}c_{\alpha\beta\delta}
\end{gather*}
is structure constant of associative algebra;
\item[3)] $F(t)$ must be quasi-homogeneous function
\begin{gather*}
F\big(c^{d_1}t^1,\dots,c^{d_n}t^n\big)=c^{d_F}F\big(t^1,\dots,t^n\big)
\end{gather*}
for any nonzero $c$ and for some numbers $d_1, \dots , d_n$, $d_F$.
\end{enumerate}
\end{Definition}
Our goal is to extract a WDVV equation from the data of a suitable group $\Ja\big(\tilde A_1\big)$. We define the group $\Ja\big(\tilde A_1\big)$. Recall that the group $A_1$ acts on $\mathbb{C}\ni v_0$ by reflections
 \begin{gather*}
 v_0\mapsto -v_0.
 \end{gather*}
The group $\Ja\big(\tilde A_1\big)$ is an extension of the group $A_1$ in the following sense:

 \begin{Proposition}
The group $\Ja\big(\tilde A_1\big)\ni (w,t,\gamma)$ acts on $\Omega:=\mathbb{C}\oplus \mathbb{C}^{2}\oplus \mathbb{H} \ni (u,v,\tau)=(u,v_0,v_2,\tau)$ as follows:
\begin{gather*}
w(u,v,\tau)=(u,wv,\tau),
\\
t(u,v,\tau)=\left(u- \langle \lambda,v \rangle_{\tilde A_1}-\frac{1}{2} \langle \lambda,\lambda \rangle_{\tilde A_1}\tau,v+\lambda\tau+\mu,\tau\right),
\\
\gamma(u,v,\tau)=\left(u+\frac{c \langle v,v \rangle_{\tilde A_1}}{2(c\tau+d)},\frac{v}{c\tau+d},\frac{a\tau+b}{c\tau+d}\right),
\end{gather*}
where $w \in A_1$ acts by reflection in the first $v_0$ variables of $ \mathbb{C}^{2} \ni v=(v_0,v_2)$,
\[
t=(\lambda,\mu) \in \mathbb{Z}^{2}, \qquad
\begin{pmatrix} a & b\\c & d\end{pmatrix} \in {\rm SL}_2(\mathbb{Z}),\qquad
\langle v,v \rangle_{\tilde A_1}=2v_0^2-2v_2^2.
\]
\end{Proposition}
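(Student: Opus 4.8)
The plan is to verify that the formulas define a genuine left action, i.e.\ that each of the three families of transformations---the reflection $w \in A_1$, the lattice translations $t = (\lambda,\mu)$, and the modular transformations $\gamma \in {\rm SL}_2(\mathbb{Z})$---maps $\Omega$ bijectively to itself, and that the assignment of $(w,t,\gamma)$ to the corresponding composite map is a homomorphism. Since an element of $\Ja(\tilde A_1)$ is a triple $(w,t,\gamma)$ assembled from these three pieces, it suffices to check: (i) that each generator is well defined on $\Omega$; (ii) the internal action axioms within each family; and (iii) the cross-relations that encode how the three families fit together.

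First I would check well-definedness. For the modular part the only nontrivial point is that $\tau \mapsto (a\tau+b)/(c\tau+d)$ preserves the upper half-plane $\mathbb{H}$: since $\det\gamma = 1$ one has $\mathrm{Im}\big((a\tau+b)/(c\tau+d)\big) = \mathrm{Im}(\tau)/|c\tau+d|^2$, which is positive for $\tau \in \mathbb{H}$, and in particular $c\tau+d \neq 0$, so the $v$- and $u$-components are also well defined. The reflection $w$ sends $v_0 \mapsto -v_0$ and fixes $v_2$, hence leaves the quadratic form $\langle v,v\rangle_{\tilde A_1} = 2v_0^2 - 2v_2^2$ invariant; this isometry property is exactly what makes $w$ compatible with the other two families. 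The translations visibly preserve $\Omega$.

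Next I would verify the action axioms. The identity acts trivially and $w^2 = \mathrm{id}$ is immediate. The heart of the computation is the cocycle identity for the $u$-component of the modular action. Writing $j(\gamma,\tau) = c\tau+d$, the standard relation $j(\gamma_1\gamma_2,\tau) = j(\gamma_1,\gamma_2\tau)\,j(\gamma_2,\tau)$ gives at once that the $\tau$- and $v$-components compose correctly; for $v$ one uses $v \mapsto v/j(\gamma,\tau)$, so $\langle v,v\rangle_{\tilde A_1} \mapsto \langle v,v\rangle_{\tilde A_1}/j(\gamma,\tau)^2$. Feeding this into the $u$-component and summing the two automorphy terms $\frac{c\langle v,v\rangle_{\tilde A_1}}{2(c\tau+d)}$ produced by the successive transformations, the quadratic scaling of $\langle v,v\rangle_{\tilde A_1}$ together with the cocycle relation for $j$ collapses the two-step contribution to the single-step one, establishing that ${\rm SL}_2(\mathbb{Z})$ acts. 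An analogous but elementary computation, using the bilinearity of $\langle\cdot,\cdot\rangle_{\tilde A_1}$ and the affine rule $v \mapsto v + \lambda\tau + \mu$, shows that the lattice translations act, with their $u$-cocycle $-\langle\lambda,v\rangle_{\tilde A_1} - \frac{1}{2}\langle\lambda,\lambda\rangle_{\tilde A_1}\tau$ composing additively up to the required Heisenberg-type correction.

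Finally I would check the cross-relations: that $w$ commutes appropriately with $t$ and $\gamma$ (using the isometry property above), and that conjugating a translation $t$ by a modular element $\gamma$ again yields a transformation of the prescribed form, which fixes the product law of $\Ja(\tilde A_1)$. I expect the main obstacle to be computational rather than conceptual: it is the careful tracking of the $u$-component under composition---in particular the interaction between the modular automorphy factor and the translation cocycle---where the several quadratic terms in $\langle\cdot,\cdot\rangle_{\tilde A_1}$ must be shown to cancel precisely. All of these reductions rest on the two structural facts isolated above: the $j$-cocycle identity for ${\rm SL}_2(\mathbb{Z})$ and the $w$-invariance of $\langle\cdot,\cdot\rangle_{\tilde A_1}$.
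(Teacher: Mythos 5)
Your proposal is correct and is essentially the paper's own approach: the paper sets up the semidirect-product structure $W\big(\tilde A_1\big)\rtimes {\rm SL}_2(\mathbb{Z})$ with the adjoint action ${\rm Ad}_\gamma$ and then explicitly omits the verification that \eqref{jacobigroupA1tilde} defines a group action as ``just straightforward computations, but it is a bit long,'' and those omitted computations are precisely the ones you outline (well-definedness on $\mathbb{H}$, the cocycle $j(\gamma_1\gamma_2,\tau)=j(\gamma_1,\gamma_2\tau)\,j(\gamma_2,\tau)$ collapsing the quadratic automorphy terms, $A_1$-invariance of $\langle\cdot,\cdot\rangle_{\tilde A_1}$, and the conjugation cross-relations). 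One remark: the ``Heisenberg-type correction'' you flag is exactly right---composing two translations produces an extra integer $-\langle\lambda,\tilde\mu\rangle_{\tilde A_1}$ in the $u$-component---and in the paper this is absorbed by the central generator $k\in\mathbb{Z}$ of the full definition in Section~\ref{group def} (where $k$ acts by $u\mapsto u+k$), which is suppressed in the simplified statement of the Proposition you were given.
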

See Section~\ref{group def} for details.

In order to define any geometric structure in an orbit space, first it is necessary to define a~notion of invariant $\Ja\big(\tilde A_1\big)$ sections. For this purpose, we generalise the ring of invariant functions used in~\cite{BertolaM.1, BertolaM.2} for the group $\Ja( A_1)$, which are called Jacobi forms. This notion was first defined in~\cite{M.EichlerandD.Zagier} by Eichler and Zagier for the group $\Ja(A_1)$, and it was further generalised for the group $\Ja( A_1)$ in~\cite{K.Wirthmuller} by Wirthmuller. Furthermore, an explicit base of generators were derived in~\cite{BertolaM.1, BertolaM.2} by Bertola. The Jacobi forms used in this thesis are defined by:

\begin{Definition}
The weak $\tilde A_1$-invariant Jacobi forms of weight $k$, order $l$, and index $m$ are functions on $\Omega=\mathbb{C}\oplus \mathbb{C}^{n+2}\oplus\mathbb{H}\ni (u,v_0,v_2,\tau)=(u,v,\tau)$ which satisfy
\begin{gather}
\varphi(w(u,v,\tau))=\varphi(u,v,\tau),\qquad \textrm{$A_1$-invariant condition},\nonumber
\\
\varphi(t(u,v,\tau))=\varphi(u,v,\tau),\nonumber
\\
\varphi(\gamma(u,v,\tau))=(c\tau+d)^{-k}\varphi(u,v,\tau),\nonumber
\\
E\varphi(u,v,\tau):=-\frac{1}{2\pi {\rm i}}\frac{\partial}{\partial u}\varphi(u,v,\tau)=m\varphi(u,v,\tau),\qquad \textrm{Euler vector field}.
\label{jacobiform in the introduction}
\end{gather}
\end{Definition}

Moreover, the weak $\tilde A_1$-invariant Jacobi forms are meromorphic in the variable $v_{2}$ on a fixed divisor, in contrast with the Jacobi forms of the group $\Ja(A_1)$ ,which are holomorphic in each variable; see details on Section~\ref{Jacobi forms Jtildea1}. The ring of weak $\tilde A_1$-invariant Jacobi forms gives the notion of the Euler vector field; indeed, the vector field defined in the last equation of~\eqref{jacobiform in the introduction} measures the degree of the Jacobi forms, which coincides with the index. The differential geometry of the orbit space of the group $\Ja\big(\tilde A_1\big)$ should be understood as the space such that its sections are written in terms of Jacobi forms. Then, in order for this statement to make sense, we must prove a Chevalley type theorem, which is:

\begin{Theorem}
The trigraded algebra of Jacobi forms $J_{\bullet,\bullet,\bullet}^{\Ja(\tilde A_{1})}=\bigoplus _{k,l,m}J_{k,l,m}^{\tilde A_{1}}$ is freely generated by $2$ fundamental Jacobi forms $(\varphi_0,\varphi_1)$ over the graded ring $E_{\bullet,\bullet}$
\begin{gather*}
J_{\bullet,\bullet,\bullet}^{\Ja(\tilde A_1)}=E_{\bullet,\bullet}\left[\varphi_0,\varphi_1\right],
\end{gather*}
where
\begin{gather*}
E_{\bullet,\bullet}=J_{\bullet,\bullet,0}
\end{gather*}
is the ring of coefficients.

More specifically, the ring of function $E_{\bullet,\bullet}$ is the space of functions $f(v_{2},\tau)$ such that, for fixed $\tau$, the functions $\tau\mapsto f(v_{2},\tau)$ is an elliptic function.
\end{Theorem}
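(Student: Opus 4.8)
The plan is to adapt the structure theorems of Eichler--Zagier and of Bertola for Jacobi forms of $\Ja(A_1)$ to the present extended affine setting, the guiding principle being that the fibrewise dependence on the reflection variable $v_0$ is controlled by theta functions of fixed order, while the dependence on the extra variable $v_2$ is absorbed into the coefficient ring $E_{\bullet,\bullet}$.

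First I would identify $E_{\bullet,\bullet}=J_{\bullet,\bullet,0}$ precisely. Since the index $m$ is the eigenvalue of the Euler field $-\frac{1}{2\pi{\rm i}}\frac{\partial}{\partial u}$, an index-zero form is independent of $u$; the translation $t=(\lambda,\mu)$ then acts on it purely through $v\mapsto v+\lambda\tau+\mu$, so the form is elliptic in $v$. Because a weak form is holomorphic in $v_0$, a holomorphic elliptic function of $v_0$ is constant by Liouville, so the form does not depend on $v_0$ at all. What survives is a function $f(v_2,\tau)$ that is elliptic (meromorphic) in $v_2$ and modular of weight $k$ in $\tau$, which is exactly the ring $E_{\bullet,\bullet}$ in the statement.

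Next I would construct the two index-$1$ generators $\varphi_0,\varphi_1$ explicitly from Jacobi theta functions, arranged so that for each fixed $(v_2,\tau)$ the pair $\{\varphi_0,\varphi_1\}$ is a basis of the two-dimensional space of even theta functions of order $2$ in $v_0$. The crux is then a dimension count. For an index-$m$ form $\varphi$, the factor $e^{-2\pi{\rm i}mu}$ dictated by the Euler equation, combined with the shift of $u$ by $-\langle\lambda,v\rangle-\frac12\langle\lambda,\lambda\rangle_{\tilde A_1}\tau$ under translation and the term $2v_0^2$ in $\langle v,v\rangle_{\tilde A_1}$, shows that $\varphi$ viewed as a function of $v_0$ is a holomorphic theta function of order $2m$; the $A_1$-invariance makes it even, and the space of such functions has dimension $m+1$. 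Since the $m+1$ monomials $\varphi_0^a\varphi_1^{m-a}$ ($a=0,\dots,m$) are linearly independent in this space---checked by a Wronskian-type nonvanishing or by evaluation at the half-periods---they form a basis. Hence $\varphi=\sum_{a+b=m}c_{ab}\varphi_0^a\varphi_1^b$, where the coefficients $c_{ab}$ are recovered from the fixed dual basis and are consequently independent of $v_0$; tracking the modular weight and the behaviour under $v_2$-translation shows $c_{ab}\in E_{\bullet,\bullet}$, which yields generation. Freeness follows from the same linear independence: any polynomial relation restricts fibrewise to a linear relation among the theta basis, forcing all coefficients to vanish.

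The step I expect to be the main obstacle is the meromorphicity in $v_2$. In contrast with the holomorphic Jacobi forms of $\Ja(A_1)$, the opposite sign of the term $-2v_2^2$ in $\langle v,v\rangle_{\tilde A_1}$ means that the natural theta functions in $v_2$ fail to be holomorphic, and the forms acquire poles along a fixed divisor. I must therefore verify that the extraction of the coefficients $c_{ab}$ keeps them inside $E_{\bullet,\bullet}$---that they remain elliptic in $v_2$ with only the admissible pole structure---and that products of the generators preserve this pole behaviour. Controlling the pole divisor uniformly through the construction, and confirming that $\varphi_0$ and $\varphi_1$ are genuinely algebraically independent over the meromorphic ring $E_{\bullet,\bullet}$, are the delicate points on which the freeness of the generation rests.
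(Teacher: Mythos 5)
Your proposal is correct, but the generation step follows a genuinely different route from the paper's. The parts that coincide: your identification of $E_{\bullet,\bullet}$ (index-zero forms do not depend on $u$, Liouville's theorem removes the $v_0$-dependence, and only ellipticity in $v_2$ survives) is exactly the paper's Lemma~\ref{ringcoef Jtildea1}, and your freeness argument is in substance the paper's algebraic-independence lemma, both resting on the non-constancy in $v_0$ of $\varphi_0/\varphi_1=\wp^{\prime}(v_2,\tau)/(\wp(v_0,\tau)-\wp(v_2,\tau))$. For generation, however, you adapt the classical Eichler--Zagier theta decomposition: an index-$m$ form is, fibrewise in $v_0$, an even holomorphic theta function of order $2m$, a space of dimension $m+1$, and the monomials $\varphi_0^a\varphi_1^{m-a}$ give a fibrewise basis from which the coefficients $c_{ab}(v_2,\tau)$ are extracted. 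The paper makes no dimension count at all; instead it expands $\varphi/(\varphi_1^{\tilde A_1})^m$ in Weierstrass functions $\wp^{(i)}(\pm v_0+v_2)$, symmetrizes by $A_1$-invariance, completes the sum to an $A_2$-invariant expression by adding the terms $\wp^{(i)}(2v_2)$, multiplies by $\big(\varphi_3^{A_2}\big)^m$ to obtain a weak holomorphic Jacobi form of type $A_2$, and then invokes Bertola's Chevalley theorem $J^{\Ja(A_2)}_{\bullet,\bullet}=M_\bullet\big[\varphi_0^{A_2},\varphi_2^{A_2},\varphi_3^{A_2}\big]$, transporting the conclusion back through the identity $\varphi_1^{\tilde A_1}=\varphi_3^{A_2}/\varphi_2^{A_1}$. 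What each buys: your argument is self-contained and is the direct generalization of the standard structure-theorem proof, whereas the paper's argument outsources the holomorphy bookkeeping to the known $\Ja(A_2)$ result and makes explicit the $A_2/A_1$ ratio structure that also underlies the Hurwitz-space picture (splitting a double pole into two simple ones). The delicate point you flag is genuine but closes cleanly in your framework: the fibrewise basis degenerates only where $\wp^{\prime}(v_2,\tau)=0$ or $2v_2\equiv 0 \pmod{\mathbb{Z}\oplus\tau\mathbb{Z}}$, that is, precisely on the admissible polar divisor $v_2\in\big\{0,\frac{1}{2},\frac{\tau}{2},\frac{1+\tau}{2}\big\}$, so the Cramer-rule denominators in extracting the $c_{ab}$ introduce poles only where $E_{\bullet,\bullet}$ permits them; ellipticity in $v_2$, modularity in $\tau$, and the weak growth condition for the $c_{ab}$ then follow from the uniqueness of the fibrewise decomposition together with the equivariance of $\varphi$, $\varphi_0$, $\varphi_1$ under the group action.
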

Moreover, $(\varphi_0,\varphi_1)$ are given by

\begin{Corollary}
The function
\begin{gather*}
\left[{\rm e}^{z\frac{\partial}{\partial p}} \left({\rm e}^{2\pi {\rm i}u} \frac{\theta_1(v_0+v_2+p)\theta_1(-v_0+v_2+p)}{\theta_1(2v_2+p)\theta_1^{\prime}(0)}\right) \right]
\bigg|_{p=0}=\varphi_1^{\Ja(\tilde A_1)}+\varphi_0^{\Ja(\tilde A_1)}z+O\big(z^2\big),
\end{gather*}
generates the Jacobi forms $\varphi_0^{\Ja(\tilde A_1)}$ and $\varphi_1^{\Ja(A_1)}$,
where
\begin{gather*}
\varphi_0^{\Ja(\tilde A_1)}:=\frac{\partial }{\partial p}\big(\hat\varphi_1^{\Ja(\tilde A_1)} \big)\bigg|_{p=0}.
\end{gather*}
\end{Corollary}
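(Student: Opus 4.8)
The plan is to read the displayed identity as a Taylor expansion in $z$ and then verify the Jacobi-form axioms directly on the generating function. Write $\hat\varphi_1(u,v,\tau;p)$ for the function to which the shift operator ${\rm e}^{z\,\partial/\partial p}$ is applied. Since that operator is the shift $p\mapsto p+z$, we have ${\rm e}^{z\,\partial/\partial p}\hat\varphi_1\big|_{p=0}=\hat\varphi_1(u,v,\tau;z)$, and the claimed expansion merely records $\varphi_1^{\Ja(\tilde A_1)}=\hat\varphi_1\big|_{p=0}$ and $\varphi_0^{\Ja(\tilde A_1)}=\frac{\partial}{\partial p}\hat\varphi_1\big|_{p=0}$. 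It therefore suffices to show that $\hat\varphi_1(\,\cdot\,;p)$ transforms covariantly under $\Ja(\tilde A_1)$ when $p$ is treated as an auxiliary elliptic variable --- invariant under the reflection $w$ and the lattice translation $t$, and rescaled by $p\mapsto p/(c\tau+d)$ under the modular part $\gamma$ --- and then to extract the coefficients of $1$ and $z$.

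First I would dispose of the two easy axioms. The reflection $w\colon v_0\mapsto -v_0$ only interchanges the numerator factors $\theta_1(v_0+v_2+p)$ and $\theta_1(-v_0+v_2+p)$, so $\hat\varphi_1$ is manifestly $A_1$-invariant; and since $\frac{\partial}{\partial u}$ hits only the prefactor ${\rm e}^{2\pi{\rm i}u}$, one finds $E\hat\varphi_1=-\hat\varphi_1$, so every coefficient in $z$ is an eigenform of the Euler field of the same index. The two substantial axioms are the translation- and modular-covariance, and both rest on the same mechanism: the quasi-periodicity (respectively modularity) factors produced by the three theta functions must cancel exactly against the prescribed shift of the coordinate $u$.

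For $t=(\lambda,\mu)\in\mathbb{Z}^2$ I would substitute $v_0\mapsto v_0+\lambda_0\tau+\mu_0$ and $v_2\mapsto v_2+\lambda_2\tau+\mu_2$ into the three theta arguments and apply the quasi-periodicity law $\theta_1(z+n\tau+m)=(-1)^{m+n}{\rm e}^{-\pi{\rm i}n^2\tau-2\pi{\rm i}nz}\theta_1(z)$. The sign factors are all even, and a short computation shows that the combined exponent of numerator minus denominator collapses, with the $p$-dependent terms cancelling, to $\pm\big(\pi{\rm i}\langle\lambda,\lambda\rangle_{\tilde A_1}\tau+2\pi{\rm i}\langle\lambda,v\rangle_{\tilde A_1}\big)$, where $\langle v,v\rangle_{\tilde A_1}=2v_0^2-2v_2^2$; this is precisely the factor undone by ${\rm e}^{2\pi{\rm i}u}$ evaluated at $u\mapsto u-\langle\lambda,v\rangle_{\tilde A_1}-\frac12\langle\lambda,\lambda\rangle_{\tilde A_1}\tau$. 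For $\gamma$ I would instead use $\theta_1\big(\frac{z}{c\tau+d},\frac{a\tau+b}{c\tau+d}\big)=\nu(\gamma)(c\tau+d)^{1/2}{\rm e}^{\pi{\rm i}cz^2/(c\tau+d)}\theta_1(z,\tau)$ together with $\theta_1'(0,\gamma\tau)=\nu(\gamma)(c\tau+d)^{3/2}\theta_1'(0,\tau)$: the eighth roots of unity $\nu(\gamma)$ cancel (the two numerator factors against $\theta_1(2v_2+p)$ and $\theta_1'(0)$), the powers of $(c\tau+d)$ combine to $(c\tau+d)^{-1}$, and the Gaussian exponent of numerator minus denominator equals $\pm\pi{\rm i}c\big(\langle v,v\rangle_{\tilde A_1}+p^2\big)/(c\tau+d)$, whose $\langle v,v\rangle_{\tilde A_1}$-part is annihilated by the shift $u\mapsto u+\frac{c\langle v,v\rangle_{\tilde A_1}}{2(c\tau+d)}$.

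Finally I would extract the coefficients. Since $p$ rescales by $p\mapsto p/(c\tau+d)$, the modular identity takes the form $\hat\varphi_1(\gamma;p/(c\tau+d))=(c\tau+d)^{-1}{\rm e}^{\pm\pi{\rm i}cp^2/(c\tau+d)}\hat\varphi_1(\,\cdot\,;p)$; differentiating in $p$ and setting $p=0$ then shows that $\varphi_1^{\Ja(\tilde A_1)}$ (the $p^0$ term) has weight $1$ and $\varphi_0^{\Ja(\tilde A_1)}$ (the $p^1$ term) has weight $0$, both of the same index, because the residual $p^2$-Gaussian has vanishing first derivative at $p=0$. From order $p^2$ on, however, that Gaussian mixes lower forms with elliptic coefficients, which is exactly why only $\varphi_0,\varphi_1$ survive as the free generators predicted by the Theorem; the denominator $\theta_1(2v_2+p)$ likewise exhibits the meromorphic (rather than holomorphic) dependence on $v_2$. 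The main obstacle is the middle step: performing, with a fixed set of theta conventions, the exact cancellation of the quasi-periodicity multipliers and of the modular Gaussian against the shift of $u$, since this is where the indefinite form $\langle v,v\rangle_{\tilde A_1}=2v_0^2-2v_2^2$ and the precise choice of theta arguments are forced upon us.
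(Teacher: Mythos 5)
Your proposal is correct, but it is organized differently from the paper. The paper's own proof of this Corollary is a single line: it iterates the immediately preceding Proposition, which asserts that for \emph{any} $\varphi\in J_{k,m,\bullet}^{\Ja(\tilde A_1)}$ the operation $\varphi\mapsto\frac{\partial}{\partial p}\big(\hat\varphi\big)\big|_{p=0}$ lands in $J_{k-1,m,\bullet}^{\Ja(\tilde A_1)}$, so that $\frac{\partial^k}{\partial p^k}(\cdots)\big|_{p=0}\in J_{1-k,1,\bullet}^{\Ja(\tilde A_1)}$; that the $z^0$ coefficient $\varphi_1^{\Ja(\tilde A_1)}$ of~\eqref{varphi1 def2 Jtildea1} is itself a Jacobi form was established separately in the earlier Lemma. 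You instead inline both ingredients for the concrete theta quotient: you re-verify its covariance with $p$ as an auxiliary variable, carrying the multiplier system $\nu(\gamma)$ and the residual Gaussian $\exp\big(\pi {\rm i} c p^2/(c\tau+d)\big)$ explicitly, and then extract the $z^0$ and $z^1$ coefficients using that this Gaussian has vanishing first derivative at $p=0$. That vanishing is exactly the same mechanism that powers the paper's Proposition, where the $p^2$ term is instead absorbed into the extended form $\langle\cdot,\cdot\rangle_E=2v_0^2-2v_2^2+p^2$ (equivalently, into the corrected exponent ${\rm e}^{-2\pi {\rm i}(u+{\rm i}g_1(\tau)z^2)}$ of the generating function in the Corollary that follows). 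Both routes are sound; the paper's abstraction buys reusability and the stronger claim that \emph{every} $z^k$ coefficient is a Jacobi form of weight $1-k$, while yours is self-contained. Two caveats: (i) your closing remark that the coefficients from $z^2$ on cease to be Jacobi forms is in tension with the paper's assertion $J_{1-k,1,\bullet}^{\Ja(\tilde A_1)}$ --- the mixing you observe is an artifact of omitting the $g_1(\tau)z^2$ correction, not an intrinsic failure, and it is harmless here since only the $z^0$, $z^1$ terms are at issue; (ii) your weight bookkeeping ($1$ for $\varphi_1^{\Ja(\tilde A_1)}$, $0$ for $\varphi_0^{\Ja(\tilde A_1)}$) follows the transformation law as literally computed from the Definition, whereas the paper's Lemma labels $\varphi_1^{\Ja(\tilde A_1)}$ as weight $-1$; this is an internal sign-convention inconsistency of the paper rather than a flaw in your argument.
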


This lemma realises the functions $(\varphi_0,\varphi_1,v_{2},\tau)$ as coordinates of the orbit space of $\Ja\big(\tilde A_1\big)$.
The unit vector field is chosen to be
\begin{gather}\label{unit vector field in the introduction}
e=\frac{\partial}{\partial \varphi_0},
\end{gather}
because $\varphi_0$ is the basic generator with maximum weight degree; see the Section~\ref{Jacobi forms Jtildea1} for details.

The last component we need to construct is the intersection form of the orbit space of $\Ja\big(\tilde A_1\big)$. The natural candidate to be such a~metric is the invariant metric of the group $\Ja\big(\tilde A_1\big)$, which given~by
\begin{gather}\label{intersection form in the introduction}
g=2{\rm d}v_0^2-2{\rm d}v_2^2+2{\rm d}u{\rm d}\tau.
\end{gather}
From the data of the intersection form~\eqref{intersection form in the introduction}, is possible to derive a second flat metric of the orbit space $\Ja\big(\tilde A_1\big)$.
 The second metric is given by
\begin{gather*}
\eta^{*}:={\rm Lie}_{e}g^{*},
\end{gather*}
and it is denoted by the Saito metric due to K.~Saito, who was the first to define this metric for the case of finite Coxeter group~\cite{Saito}. One of the main technical problems of this paper is to prove that the Saito metric $\eta^{*}$ is flat. At this point, we can state our main result.

\begin{Theorem}
A suitable covering of the orbit space $\left(\mathbb{C}\oplus\mathbb{C}^{2}\oplus\mathbb{H}\right)/\Ja\big(\tilde A_1\big)$ with the intersection form~\eqref{intersection form in the introduction}, unit vector field~\eqref{unit vector field in the introduction}, and Euler vector field given by the last equation of~\eqref{jacobiform in the introduction} has a~Dub\-rovin--Frobenius manifold structure. Moreover, a suitable covering of $\mathbb{C}\oplus\mathbb{C}^{n+1}\oplus\mathbb{H}/\Ja\big(\tilde A_1\big)$ is isomorphic as Dubrovin--Frobenius manifold to a suitable covering of the Hurwitz space $H_{1,0,0}$.
\end{Theorem}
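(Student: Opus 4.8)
The plan is to follow Dubrovin's reconstruction of a Dubrovin--Frobenius structure from a flat pencil of metrics, using the Jacobi-form coordinates supplied by the Chevalley-type theorem as the working coordinate system. By that theorem the functions $(\varphi_0,\varphi_1,v_2,\tau)$ are global coordinates on a suitable covering of the orbit space, so every tensor may be written in terms of them, and the unit $e=\partial/\partial\varphi_0$ sits in the top-weight generator direction. The intersection form $g=2\,{\rm d}v_0^2-2\,{\rm d}v_2^2+2\,{\rm d}u\,{\rm d}\tau$ is flat by construction, being constant in the linear coordinates $(u,v_0,v_2,\tau)$ on $\Omega$ and invariant under $\Ja\big(\tilde A_1\big)$. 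First I would compute the contravariant intersection form $g^{*}=g^{\alpha\beta}\partial_\alpha\otimes\partial_\beta$ in the coordinates $(\varphi_0,\varphi_1,v_2,\tau)$ by transporting the constant inverse of $g$ through the Jacobian of the change of variables; the entries of that Jacobian are derivatives of the theta-quotient generating function of the Corollary and are themselves Jacobi forms, so $g^{\alpha\beta}$ comes out as an explicit expression in $\varphi_0,\varphi_1$ with coefficients that are elliptic functions of $(v_2,\tau)$. I would then define the Saito metric $\eta^{*}:={\rm Lie}_e g^{*}$ with $e=\partial/\partial\varphi_0$ and read off its components.

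The main obstacle is proving that $\eta^{*}$ is flat; this is the technical heart of the theorem. The strategy is to show that, in the coordinates $(\varphi_0,\varphi_1,v_2,\tau)$, the matrix $\eta^{\alpha\beta}$ acquires a triangular, at-most-affine-in-$(\varphi_0,\varphi_1)$ shape whose vanishing curvature can be verified directly, equivalently that these coordinates (possibly after a rescaling involving only $v_2,\tau$) are flat for $\eta$. Since $e=\partial/\partial\varphi_0$ lowers the weight grading by one step, applying ${\rm Lie}_e$ differentiates $g^{*}$ once in the top generator and collapses its quadratic dependence on $\varphi_0$ to a linear one; the delicate point is controlling the $(v_2,\tau)$-dependent coefficients, which are elliptic rather than constant, and checking that the induced contravariant Christoffel symbols $\Gamma^{\alpha\beta}_{\eta,\gamma}$ vanish after the relevant theta identities (the heat equation for $\theta_1$ and the addition formulas) are applied. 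Once flatness is in hand I would verify that $\big(g^{*},\eta^{*}\big)$ form a flat pencil, that is, that $g^{*}+\lambda\eta^{*}$ is flat for all $\lambda$ with connection linear in $\lambda$; by Dubrovin's lemma this reduces to the single compatibility identity that already follows from $\eta^{*}={\rm Lie}_e g^{*}$ together with flatness of both metrics. Feeding the flat pencil into Dubrovin's theorem produces the free energy $F$ solving WDVV, with $\eta$ the Frobenius metric; the unit $e=\partial/\partial\varphi_0$ is $\eta$-covariantly constant by the choice of top-weight generator (giving condition~1), associativity comes from the pencil (condition~2), and the Euler field $E=-\tfrac{1}{2\pi{\rm i}}\partial_u$, acting as multiplication by the index, supplies the quasi-homogeneity (condition~3), with the degrees $d_\alpha$ fixed by the indices of $\varphi_0,\varphi_1,v_2,\tau$.

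For the second statement I would exhibit the Landau--Ginzburg superpotential furnished by the Corollary, namely
\[
\lambda(p)={\rm e}^{2\pi{\rm i}u}\,\frac{\theta_1(v_0+v_2+p)\theta_1(-v_0+v_2+p)}{\theta_1(2v_2+p)\theta_1^{\prime}(0)},
\]
regarded now as a meromorphic object in $p$ on the torus $\mathbb{C}/(\mathbb{Z}+\tau\mathbb{Z})$ rather than Taylor-expanded at $p=0$. On a suitable covering trivialising its theta quasi-periodicity, $\lambda$ realises a degree-two covering $\lambda\colon C_\tau\to\mathbb{CP}^1$ with precisely the ramification profile defining $H_{1,0,0}$ (genus one, two simple poles), so the assignment $(\varphi_0,\varphi_1,v_2,\tau)\mapsto(C_\tau,\lambda)$ defines a map from the orbit space into the Hurwitz space. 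I would then invoke the standard residue formulas of Dubrovin and Shramchenko for the Hurwitz Dubrovin--Frobenius structure, in which the metric, the unit, the Euler field and the structure constants $c_{\alpha\beta\gamma}$ are written as sums of residues of $\lambda$-differentials at the ramification points, and match them term by term with the orbit-space data computed above. Identifying the primary (quasi-momentum) differential with the Abelian differential ${\rm d}p$ is what forces the two unit and Euler fields, hence the two flat metrics and potentials, to agree.

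The passage to a ``suitable covering'' is required on both sides to render the multivalued objects single valued: the Jacobi forms as functions on $\Omega$, the chosen branch of $\lambda$, and the homology basis that trivialises the ${\rm SL}_2(\mathbb{Z})$ and lattice-translation actions. I expect the flatness computation for $\eta^{*}$ to be the principal difficulty, the Hurwitz matching then reducing to a comparison of two residue calculations against the already-reconstructed metric and free energy.
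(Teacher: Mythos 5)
Your overall architecture---intersection form, Saito metric $\eta^{*}={\rm Lie}_{e}g^{*}$, flatness of $\eta^{*}$ as the technical heart, then WDVV and the Hurwitz identification---is the same as the paper's, but two of your concrete steps fail as stated. First, the flat coordinates of $\eta^{*}$ are \emph{not} $(\varphi_0,\varphi_1,v_2,\tau)$, nor any rescaling of them by functions of $(v_2,\tau)$: the paper's key step is the shear
\begin{gather*}
t^1=\varphi_0+2\varphi_1\frac{\theta_1^{\prime}(v_2|\tau)}{\theta_1(v_2|\tau)},\qquad t^2=\varphi_1,\qquad t^3=v_2,\qquad t^4=\tau,
\end{gather*}
whose correction term mixes $\varphi_1$ into $\varphi_0$ and does not even lie in $E_{\bullet,\bullet}[\varphi_0,\varphi_1]$ (the coefficient ring must be enlarged by $\theta_1^{\prime}/\theta_1$). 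Since $e=\partial/\partial\varphi_0=\partial/\partial t^1$, the components of $\eta^{*}$ in your chart are obtained from the constant matrix $\eta^{\alpha\beta}=\partial_{t^1}g^{\alpha\beta}$ by the Jacobian of this shear; they retain dependence on $\varphi_1$, $v_2$, $\tau$, so the contravariant Christoffel symbols you propose to check do \emph{not} vanish there, even after a $(v_2,\tau)$-dependent rescaling. Flatness is coordinate-independent, so your route is repairable, but you would then have to either compute the full curvature tensor in a non-flat chart or discover the shear---and discovering it, via the computations of $g^{13}$, $g^{22}$, $g^{12}$, $g^{11}$ in Appendix~\ref{appendixA} and the theta identities there, is precisely the content of the paper's proof that you defer to unspecified ``theta identities''. (Note also that the paper does not invoke the flat-pencil lemma for associativity: it proves semisimplicity by showing $\det\big(g^{\alpha\beta}-u\eta^{\alpha\beta}\big)$ has simple roots, passes to canonical coordinates, and verifies WDVV directly from the explicit free energy.)

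Second, the object you feed into the Hurwitz matching is the wrong one. The function ${\rm e}^{2\pi{\rm i}u}\theta_1(v_0+v_2+p)\theta_1(-v_0+v_2+p)/\big(\theta_1(2v_2+p)\theta_1^{\prime}(0)\big)$ is the \emph{generating function} of the Jacobi forms, not a superpotential: as a function of $p$ it has a single simple pole per period cell and transforms with a nontrivial automorphy factor under $p\mapsto p+\tau$ (no elliptic function has exactly one simple pole), so it defines no degree-two map $C_\tau\to\mathbb{CP}^1$, and no covering trivialises this---quasi-periodicity is a property of the function itself, not a matter of multivaluedness. The correct superpotential is the completed function \eqref{lambda0 Jtildea1}, elliptic in $v$ with the two simple poles $\pm v_2$; this is what Lemma~\ref{jacobiform1 Jtildea1} uses to identify $\Omega/\Ja\big(\tilde A_1\big)$ with $H_{1,0,0}$ locally. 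Moreover, the paper's final identification is much shorter than your term-by-term residue matching: having computed the free energy \eqref{finalfreeenergy} in closed form, it observes that this coincides with the known free energy of $\widetilde H_{1,0,0}$, and two Dubrovin--Frobenius structures with the same WDVV solution, unit and Euler fields are isomorphic. Your residue-matching plan could in principle be carried out with the corrected superpotential and a specified primary differential, but it is unnecessary once $F$ is known explicitly.
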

See Section~\ref{Construction of WDVV solution} for details. In particular, we derive explicitly the WDVV solution associated with the orbit space of $\Ja\big(\tilde A_1\big)$, which is given by
\begin{gather*}
F\big(t^1,t^2,t^3,t^4\big)=\frac{\rm i}{4\pi}\big(t^1\big)^2t^4-2t^1t^2t^3
-\big(t^2\big)^2\log\left(t^2\frac{\theta_1^{\prime}\big(0,t^4\big)}{\theta_1\big(2t^3,t^4\big)}\right),
\end{gather*}
where
\begin{gather}
\label{theta def}
\theta_1(v,\tau)=2\sum_{n=0}^{\infty} (-1)^{n}{\rm e}^{\pi {\rm i}\tau(n+\frac{1}{2})^2 }\sin((2n+1)v).
\end{gather}

The results of this paper are important because of the following:
 \begin{enumerate}\itemsep=0pt
\item The Hurwitz spaces $H_{1,0,0}$ are classified by the group $\Ja\big(\tilde A_1\big)$, hence we increase the know\-ledge of the WDVV/discrete group correspondence. Recently, the case $\Ja\big(\tilde A_1\big)$ attracted the attention of experts, due to its application in integrable systems~\cite{M.CutimancoandV.Shramchenko, E.V.FerapontovM.V.PavlovL.Xue, Romano1}.

\item The orbit space construction of the group $\Ja\big(\tilde A_1\big)$ can be generalised to the group $\Ja(\tilde A_n)$; see the definition in~\cite{GAlmeida}. Further, the same can be done to the other classical finite Coxeter groups as $B_n$, $D_n$. Hence, these orbit spaces could give rise to a new class of Dubrovin--Frobenius manifolds. Furthermore, the associated integrable hierarchies of this new class of Dubrovin--Frobenius manifolds could have applications in Gromow--Witten theory and combinatorics.
\end{enumerate}

This paper is organised in the following way: In Section~\ref{invariant theory}, we define extended affine Jacobi group $\Ja\big(\tilde A_1\big)$ and we prove some results related with its ring of invariant functions. In Section~\ref{Frobenius structure A1}, we construct a~Dubrovin--Frobenius structure on the orbit spaces of $\Ja\big(\tilde A_1\big)$ and compute its free-energy. Furthermore, we show that the orbit space of the group $\Ja\big(\tilde A_1\big)$ is isomorphic, as a~Dubrovin--Frobenius manifold, to the Hurwitz--Frobenius manifold $\tilde H_{1,0,0}$~\cite{B.Dubrovin2, V.Shramchenko}. See Theorem~\ref{mainresult} for details.

\section[Invariant theory of J(A1)]
{Invariant theory of $\boldsymbol{\Ja\big(\tilde A_1\big)}$}\label{invariant theory}

 The focus of this section is to define a new extension of the finite Coxeter group $A_1$ such that it contains the affine Weyl group $\tilde A_1$ and the Jacobi group $\Ja(A_1)$. This new extension will be denoted by Extended affine Jacobi group $\Ja\big(\tilde A_1\big)$. Further, we prove that, from the data of the group $\Ja\big(\tilde A_1\big)$, we can reconstruct the Dubrovin--Frobenius structute of the Hurwitz space $H_{1,0,0}$ on the orbit space of $\Ja\big(\tilde A_1\big)$. The advantage of this orbit space construction is the Chevalley Theorem~\ref{chevalley Jtildea1}, which gives a global interpretation for orbit space of $\Ja\big(\tilde A_1\big)$. Furthermore, it~attaches the group $\Ja\big(\tilde A_1\big)$ to the Hurwitz space $H_{1,0,0}$, and this fact might be useful in the general understanding of WDVV/group correspondence. These results sre interesting because the Hurwitz space $H_{1,0,0}$ is well know to have a rich Dubrovin--Frobenius structure, called a~tri-Hamiltonian structure~\cite{Romano1} and~\cite{Pavlov2}. This fact realises the orbit space of $\Ja\big(\tilde A_1\big)$ as suitable ambient space for Dubrovin--Frobenius submanifolds. Furthermore, it shows an interesting relationship relation between the integrable systems of the ambient space and the integrable systems of its Dubrovin--Frobenius submanifolds.

\subsection[The group J(A1)]
{The group $\boldsymbol{\Ja\big(\tilde A_1\big)}$}\label{group def}

The main goal of this section is to motivate and to define the group $\Ja\big(\tilde A_1\big)$. In order to do that, it will be necessary to recall the definition of the group $A_1$, and some of its extensions. Moreover, its goal is to understand how to derive WDDV solution starting from these groups.

The group $A_n$ acts on the space $\Omega^{A_n}=\big\{(v_0,v_1,\dots,v_n)\in \mathbb{C}^{n+1}\colon\sum _{i=0}^nv_i=0 \big\}$ by permutations:
\begin{gather}\label{An action permutation}
(v_0,v_1,\dots,v_n)\mapsto(v_{i_0},v_{i_1},\dots,v_{i_n}).
\end{gather}
Let us concentrate on the simplest possible case, i.e., $n=1$. In this case, the action on $\mathbb{C}\cong\Omega^{A_1}$ is just:
\begin{gather*}
v_0\mapsto -v_0.
\end{gather*}
The understanding of the orbit space of $A_1$ requires a Chevalley theorem for the ring of invariants. The Chevalley theorem form the group $A_n$ says that

\begin{Theorem}[\cite{N.Bourbaki}]
Let the Coxeter group $A_n$ which acts on $\Omega^{A_n}\ni (v_0,v_1,\dots,v_n)$ as~\eqref{An action permutation}, then
\begin{gather*}
\mathbb{C}[v_0,v_1,\dots,v_n]^{A_n}\cong \mathbb{C}[a_2,a_3,\dots,a_{n+1}],
\end{gather*}
where $a_i$ are weighted homogeneous polynomials of degree $i$.
\end{Theorem}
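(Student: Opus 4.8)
The plan is to reduce the statement to the classical fundamental theorem of symmetric polynomials. Let $R = \mathbb{C}[v_0, v_1, \dots, v_n]$ denote the full polynomial ring, on which $A_n \cong S_{n+1}$ acts by permuting the variables as in~\eqref{An action permutation}. The fundamental theorem of symmetric polynomials states that
\begin{gather*}
R^{S_{n+1}} = \mathbb{C}[e_1, e_2, \dots, e_{n+1}],
\end{gather*}
where $e_i = e_i(v_0, \dots, v_n)$ is the $i$-th elementary symmetric polynomial, the $e_i$ are algebraically independent over $\mathbb{C}$, and each $e_i$ is homogeneous of degree $i$.

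First I would identify $\Omega^{A_n}$ with the hyperplane $H = \{e_1 = v_0 + \dots + v_n = 0\}$, whose coordinate ring is the quotient $R/(e_1)$. Since $e_1$ is itself $S_{n+1}$-invariant, the principal ideal $(e_1)$ is an $S_{n+1}$-stable ideal, so the permutation action descends to $R/(e_1)$, and the invariants we seek are $\big(R/(e_1)\big)^{S_{n+1}} = \mathbb{C}[v_0, \dots, v_n]^{A_n}$ regarded as functions on $\Omega^{A_n}$.

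The key step is to commute the operations of taking invariants and passing to the quotient by $(e_1)$. Because $S_{n+1}$ is finite and $\mathrm{char}\,\mathbb{C} = 0$, the Reynolds operator $\pi(f) = \frac{1}{|S_{n+1}|}\sum_{\sigma} \sigma \cdot f$ furnishes an $R^{S_{n+1}}$-linear projection onto $R^{S_{n+1}}$; applying it to the short exact sequence $0 \to (e_1) \to R \to R/(e_1) \to 0$ of $S_{n+1}$-modules shows that taking invariants is exact here, so that
\begin{gather*}
\left(R/(e_1)\right)^{S_{n+1}} = R^{S_{n+1}}/(e_1) = \mathbb{C}[e_1, \dots, e_{n+1}]/(e_1).
\end{gather*}
Since $e_1, \dots, e_{n+1}$ are algebraically independent, quotienting by the generator $e_1$ leaves the polynomial ring $\mathbb{C}[\bar e_2, \dots, \bar e_{n+1}]$ in the images $\bar e_i$ of $e_2, \dots, e_{n+1}$, which remain algebraically independent. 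Setting $a_i := \bar e_i$ for $i = 2, \dots, n+1$, each weighted homogeneous of degree $i$, yields the asserted isomorphism $\mathbb{C}[v_0, \dots, v_n]^{A_n} \cong \mathbb{C}[a_2, \dots, a_{n+1}]$.

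The main obstacle is the exactness of the invariants functor on the quotient, that is, the identity $\big(R/(e_1)\big)^{S_{n+1}} = R^{S_{n+1}}/(e_1)$. This is not a formal triviality for a general ring quotient, but in our setting it follows cleanly from the existence of the Reynolds averaging operator in characteristic zero together with the fact that $(e_1)$ is an invariant ideal; the algebraic independence of the restricted symmetric polynomials is then a direct consequence of that of $e_1, \dots, e_{n+1}$. Since the result is classical (see~\cite{N.Bourbaki}), one could alternatively quote it directly, but the argument above makes the degrees $2, 3, \dots, n+1$ transparent.
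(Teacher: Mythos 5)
Your proof is correct, but it follows a genuinely different route from the paper: the paper offers no argument at all for this statement, simply citing Bourbaki, i.e., invoking the general Chevalley theorem that the invariant ring of any finite reflection group is a free polynomial algebra with generators in degrees determined by the exponents of the group. Your proof instead exploits the special feature of type $A$ that the reflection representation is the permutation representation of $S_{n+1}$ restricted to the hyperplane $e_1=0$, so everything reduces to the fundamental theorem of symmetric polynomials. The one step that needs care --- commuting invariants with the quotient by $(e_1)$ --- you handle properly: the Reynolds operator gives surjectivity of $R^{S_{n+1}}\to\bigl(R/(e_1)\bigr)^{S_{n+1}}$, and its $R^{S_{n+1}}$-linearity gives the kernel identification $(e_1)\cap R^{S_{n+1}}=e_1R^{S_{n+1}}$, after which algebraic independence of $\bar e_2,\dots,\bar e_{n+1}$ is immediate. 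What your argument buys is an elementary, self-contained proof in which the degrees $2,3,\dots,n+1$ are transparent; what the paper's citation buys is generality, which matters for this paper's broader program, since the analogous orbit-space constructions for $B_n$, $D_n$ and other Coxeter groups mentioned in the conclusion rest on the full Chevalley theorem, where no reduction to symmetric functions is available.
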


 In the $A_1$ case, the ring of invariants is just
\begin{gather*}
\mathbb{C}\big[v_0^2\big]\cong\mathbb{C}[a_2] ,
\end{gather*}
 then the orbit space of $A_1$ is just the
 \begin{gather*}
 \mathop{\rm Spec} \left(\mathbb{C}\big[v_0^2\big]\right).
 \end{gather*}
 In the papers~\cite{B.Dubrovin2,B.Dubrovin1}, it was demonstrated that $\mathbb{C}/A_1$ has structure of Dubrovin--Frobenius manifold. Furthermore, it is isomorphic to the Hurwitz space $H_{0,1}$, i.e., the space of rational functions with a double pole. The isomorphism can be realized by the following map:
\begin{gather*}
[v_0]\mapsto \lambda^{A_1}(p,v_0)=(p-v_0)(p+v_0)=p^2+a_2.
\end{gather*}
Note that the isomorphism works, because $\lambda^{A_1}(p,v_0)$ is invariant under the $A_1$-action. Applying the methods developed in~\cite{B.Dubrovin2,B.Dubrovin1}, one can show that the WDVV solution associated with this orbit space is
\begin{gather*}
F\big(t^1\big)=\frac{\big(t^1\big)^3}{6},
\end{gather*}
where $t^1$ is the flat coordinate of the metric~$\eta$.

In~\cite{B.Dubrovin2, B.DubrovinandY.Zhang} it was also considered the extended affine $A_1$ that is denoted by $\tilde A_1$. The action on
\begin{gather*}
\left(L^{A_1}\otimes\mathbb{C}\right)\oplus \mathbb{C}
=\left\{(v_0,v_1,v_2)\in \mathbb{C}^3\colon \sum_{i=0}^1 v_i=0 \right\}
\end{gather*}
is{\samepage
\begin{gather*}
v_0\mapsto \pm v_0+\mu_0,\qquad
v_2\mapsto v_2+\mu_2,
\end{gather*}
where $\mu_0, \mu_2 \in \mathbb{Z}$.}

A notion of the invariant ring for the group extended affine $A_n$ was defined in~\cite{B.DubrovinandY.Zhang}, and Dubrovin and Zhang proved that this invariant ring for the case $\tilde A_1$ is isomorphic to
\begin{gather*}
\mathbb{C}\big[{\rm e}^{2\pi {\rm i} v_2}\cos(2\pi {\rm i} v_0),{\rm e}^{2\pi {\rm i} v_2}\big].
\end{gather*}
 Therefore, the orbit space of $\tilde A_1$ is the weight projective variety associated with
\begin{gather*}
 \mathop{\rm Spec}\left(\mathbb{C}\left[{\rm e}^{2\pi {\rm i} v_2}\cos(2\pi {\rm i} v_0),{\rm e}^{2\pi {\rm i} v_2}\right]\right).
\end{gather*}
 Further, a Dubrovin--Frobenius manifold structure was built on the orbit space of $\tilde A_1$ with the following WDVV solution:
\begin{gather}\label{WDVV Dub-Zhang}
F\big(t^1,t^2\big)=\frac{\big(t^1\big)^2t^2}{2}+{\rm e}^{t^2}.
\end{gather}
The orbit space of $\tilde A_1$ is also associated with a Hurwitz space, but the relation is slightly less straightforward. The first step is to consider the following map:
\begin{gather*}
[v_0,v_2]\mapsto \lambda^{\tilde A_1}(p,v_0,v_2)={\rm e}^p+{\rm e}^{2\pi {\rm i} v_2}\cos(2\pi {\rm i} v_0)+{\rm e}^{2\pi {\rm i} v_2}{\rm e}^{-p}.
\end{gather*}
The second is to consider the Legendre transformation of $S_2$ type \cite[Appendix~B and Chapter~5]{B.Dubrovin2}. Consider
\begin{gather*}
b={\rm e}^{2\pi {\rm i} v_2}\cos(2\pi {\rm i} v_0), \qquad a={\rm e}^{2\pi {\rm i}v_2},
\end{gather*}
and the following choice of primary differential ${\rm d}\tilde p$ implicity given by
\begin{gather*}
{\rm d}p=\frac{{\rm d}\tilde p}{\tilde p-b}.
\end{gather*}
Then, in these new coordinates $\lambda^{\tilde A_1}$, is given by
\begin{gather*}
\lambda(\tilde p, a,b)=\tilde p+\frac{a}{\tilde p-b}.
\end{gather*}
Hence, the orbit space of $\tilde A_1$ is isomorphic to the Hurwitz space $H_{0,0,0}$, i.e., space of fractional functions with two simple poles.

The next example of group to be considered is the Jacobi group $\Ja(A_1)$, which acts on
\begin{gather*}
\Omega^{\Ja(A_1)}:=\left(L^{A_1}\otimes\mathbb{C}\right)\oplus \mathbb{C}\oplus\mathbb{H}=\left\{(v_0,v_1,u,\tau)\in \mathbb{C}^3\oplus\mathbb{H}\colon\sum_{i=0}^1 v_i\in\mathbb{Z}+\tau\mathbb{Z} \right\}
\end{gather*}
 as follows:

$A_1$-action:
\begin{gather}
\label{jacobi A1 b}
v_0\mapsto -v_0,\qquad
u\mapsto u, \qquad
\tau\mapsto\tau.
\end{gather}

Translation:
\begin{gather}
\label{jacobi trans b}
v_0\mapsto v_0+\mu_0+\lambda_0\tau,\qquad
u\mapsto u-\lambda_0v_0-\frac{\lambda_0^2}{2}\tau,\qquad
\tau\mapsto\tau,
\end{gather}
where $\mu_0,\lambda_0\in \mathbb{Z}$.

${\rm SL}_2(\mathbb{Z})$-action:
\begin{gather}
\label{jacobi sl2z b}
v_0\mapsto \frac{v_0}{c\tau+d},\qquad
u\mapsto u-\frac{cv_0^2}{2(c\tau+d)},\qquad
\tau\mapsto\frac{a\tau+b}{c\tau+d},
\end{gather}
where $a,b,c,d\in \mathbb{Z}$, and $ad-bc=1$.

The notion of invariant ring of $\Ja(A_1)$ was first defined in~\cite{M.EichlerandD.Zagier}. However, the definitions stated in~\cite{BertolaM.1, BertolaM.2, K.Wirthmuller} are more suitable for this purpose.
\begin{Definition}
\label{jacobi forms bertola}
The weak $ A_1$-invariant, Jacobi forms of weight $k$, and index $m$ are holomorphic functions on $\Omega=\mathbb{C}\oplus\mathbb{C}\oplus\mathbb{H}\ni (u,v_0,\tau)$ which satisfy
\begin{gather*}
\varphi(u,-v_0,\tau)=\varphi\left(u,v_0,\tau\right),\qquad \textrm{$A_1$-invariant condition},
\\
\varphi\left(u-\lambda_0v_0-\frac{{\lambda_0}^2}{2}\tau,v_0+\lambda_0\tau+\mu,\tau\right)
=\varphi\left(u,v_0,\tau\right),
\\
\varphi\left(u+\frac{cv_0^2}{2(c\tau+d)},\frac{v_0}{c\tau+d},\frac{a\tau+b}{c\tau+d}\right)
=\left(c\tau+d\right)^{k}\varphi(u,v_0,\tau),
\\
E\varphi(u,v_0,\tau):=\frac{1}{2\pi {\rm i}}\frac{\partial}{\partial u}\varphi(u,v_0,\tau)=m\varphi(u,v_0,\tau).
\end{gather*}
Moreover,
$\varphi$ are locally bounded functions of $v_0$ as $\Im(\tau)\mapsto +\infty$ (weak condition).

The space of $\tilde A_1$-invariant Jacobi forms of weight $k$, and index $m$ is denoted by $J_{k,m}^{A_1}$, and $J_{\bullet,\bullet}^{\Ja( A_1)}=\bigoplus _{k,m}J_{k,m}^{ A_1}$ is the space of Jacobi forms $ A_1$-invariant.
\end{Definition}
In~\cite{M.EichlerandD.Zagier}, it was proved the following a version of the Chevalley theorem.
\begin{Theorem}
Let $J_{\bullet,\bullet}^{\Ja( A_1)}$ the ring of Jacobi forms $ A_1$-invariant, then
\begin{gather*}
J_{\bullet,\bullet}^{\Ja( A_1)}\cong M_{\bullet}[\varphi_0,\varphi_2],
\end{gather*}
where $M_{\bullet}$ is the ring of holomorphic modular forms, and
\begin{gather*}
\varphi_2={\rm e}^{2\pi {\rm i} u}{\left(\frac{\theta_1(v_0,\tau)}{\theta_1^{\prime}(0,\tau)}\right)}^2,\qquad
\varphi_0=\varphi_2\wp(v_0,\tau),
\end{gather*}
 $\theta_1$ is the Jacobi $\theta_1$-function~\eqref{theta def}, and $\wp$ is the Weierstrass $\rm P$-function, which is defined as
\begin{gather}\label{Weiestrass P 0}
\wp(v,\tau)=\frac{1}{v^2}+\sum_{m^2+n^2\neq 0}^{\infty}\frac{1}{(v-m-n\tau)^2}-\frac{1}{(m+n\tau)^2}.
\end{gather}
\end{Theorem}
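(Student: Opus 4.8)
The plan is to reduce the problem to the classical theory of elliptic functions in the variable $v_0$, with coefficients that are modular in $\tau$, and then to read off the two generators. First I would exploit the index condition $E\varphi=m\varphi$: since $\frac{1}{2\pi i}\partial_u\varphi=m\varphi$, every weak Jacobi form of index $m$ factors as $\varphi(u,v_0,\tau)=e^{2\pi i m u}\phi(v_0,\tau)$. Substituting this into the remaining three conditions of Definition~\ref{jacobi forms bertola}, the $A_1$-invariance becomes evenness $\phi(-v_0,\tau)=\phi(v_0,\tau)$; the translation law~\eqref{jacobi trans b} becomes the quasi-periodicity $\phi(v_0+\mu_0+\lambda_0\tau,\tau)=e^{2\pi i m(\lambda_0 v_0+\frac{\lambda_0^2}{2}\tau)}\phi(v_0,\tau)$; and the ${\rm SL}_2(\mathbb{Z})$-law~\eqref{jacobi sl2z b} becomes $\phi\big(\frac{v_0}{c\tau+d},\frac{a\tau+b}{c\tau+d}\big)=(c\tau+d)^{k}e^{\pi i m c v_0^2/(c\tau+d)}\phi(v_0,\tau)$. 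Thus for fixed $\tau$ the factor $\phi(\cdot,\tau)$ is an even, holomorphic, index-$m$ quasi-periodic function of $v_0$.

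Next I would record the two building blocks. Writing $\psi(v_0,\tau):=\big(\theta_1(v_0,\tau)/\theta_1'(0,\tau)\big)^2$, so that $\varphi_2=e^{2\pi i u}\psi$, a direct check with the transformation formulas of $\theta_1$ confirms that $\varphi_2$ is a weak Jacobi form of weight $-2$ and index $1$, and hence $\varphi_0=\varphi_2\wp$ one of weight $0$ and index $1$. The crucial local data is the behavior at $v_0=0$: since $\theta_1$ is odd with $\theta_1(v_0,\tau)=\theta_1'(0,\tau)v_0+O(v_0^3)$, one has $\psi=v_0^2+O(v_0^4)$, so $\varphi_2$ has a double zero on the lattice $\mathbb{Z}+\tau\mathbb{Z}$, while $\wp(v_0,\tau)=v_0^{-2}+O(v_0^2)$ makes $\varphi_0=e^{2\pi i u}(1+O(v_0^2))$ nonvanishing there.

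The core step is then a quotient argument. Given $\varphi=e^{2\pi i m u}\phi$ of index $m$, I would form $\phi/\psi^{m}=\varphi/\varphi_2^{m}$, where the $u$-dependence cancels. This quotient has index $0$, so for each fixed $\tau$ it is a genuinely doubly periodic, even, meromorphic function of $v_0$, holomorphic off the lattice and with poles only at lattice points of order at most $2m$ (the order of the zero of $\psi^m$). By the classical structure theorem for elliptic functions, an even elliptic function whose poles lie only at lattice points and have order $\le 2m$ is a polynomial in $\wp$ of degree $\le m$; hence $\phi/\psi^{m}=\sum_{j=0}^{m}c_j(\tau)\wp^{j}$ for functions $c_j(\tau)$, holomorphic on $\mathbb{H}$. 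Multiplying back by $\varphi_2^{m}$ and using $\varphi_2^{m-j}\varphi_0^{j}=\varphi_2^{m}\wp^{j}$ yields $\varphi=\sum_{j=0}^{m}c_j(\tau)\,\varphi_0^{j}\varphi_2^{m-j}$. The summands are linearly independent over functions of $\tau$ because near $v_0=0$ the $j$-th term has leading order $v_0^{2(m-j)}$, which pins down each $c_j$ uniquely and simultaneously gives the algebraic independence of $\varphi_0,\varphi_2$ over $M_\bullet$, i.e.\ the freeness of the generation.

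Finally I would identify the $c_j$ as holomorphic modular forms of weight $k+2(m-j)$. Applying $\gamma\in{\rm SL}_2(\mathbb{Z})$ to the identity $\varphi=\sum_j c_j\varphi_0^{j}\varphi_2^{m-j}$, the index-$m$ automorphy factor $e^{\pi i m c v_0^2/(c\tau+d)}$ carried by $\varphi$ and by each $\varphi_0^{j}\varphi_2^{m-j}$ cancels, while the weight factors combine via $\wp\mapsto(c\tau+d)^2\wp$; matching weights forces $c_j\big(\frac{a\tau+b}{c\tau+d}\big)=(c\tau+d)^{k+2(m-j)}c_j(\tau)$. Combined with holomorphy on $\mathbb{H}$ this makes each $c_j$ a weakly holomorphic modular form, and the weak boundedness of $\varphi$ as $\Im(\tau)\to+\infty$ together with the $q$-expansions of $\theta_1$ and $\wp$ then upgrades this to genuine holomorphy at the cusp, so $c_j\in M_{k+2(m-j)}$. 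This establishes $J_{\bullet,\bullet}^{\Ja(A_1)}\cong M_\bullet[\varphi_0,\varphi_2]$. The hard part, I expect, is precisely this last upgrade: controlling the interaction between the $v_0$-expansion (which produces the $c_j$) and the $\tau$-expansion at the cusp, so as to rule out negative powers of $e^{2\pi i\tau}$ and conclude holomorphy rather than mere meromorphy; the remaining steps are either formal or standard elliptic-function bookkeeping.
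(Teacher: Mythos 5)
You should first be aware that the paper does not prove this statement at all: it is quoted as a known result of Eichler--Zagier \cite{M.EichlerandD.Zagier} (see also Wirthm\"uller \cite{K.Wirthmuller} and Bertola \cite{BertolaM.1}), so there is no in-paper proof to compare yours against. Judged on its own terms, your argument is correct and is essentially the classical proof of the Eichler--Zagier structure theorem: factor out the exponential in $u$ via the index condition, divide by $\varphi_2^m$ so that all theta-multipliers cancel, invoke the fact that an even elliptic function with poles only on the lattice, of order at most $2m$, is a polynomial of degree at most $m$ in $\wp$, and then identify the coefficients $c_j(\tau)$ as elements of $M_{k+2(m-j)}$ by matching automorphy factors and using the weak condition at the cusp. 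It is worth noting that this is also the skeleton of the proof the paper gives for its own Chevalley theorem for $\Ja\big(\tilde A_1\big)$ (Theorem~\ref{chevalley Jtildea1}): there one divides by $\big(\varphi_1^{\tilde A_1}\big)^m$ and expands in the derivatives $\wp^{(i)}$ at the shifted points $\pm v_0+v_2$, the additional complications (completion to $A_2$-invariant holomorphic forms, elliptic rather than modular coefficient ring $E_{\bullet,\bullet}$) coming from the meromorphy in $v_2$, which is absent in your setting.

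Two minor remarks, neither of which damages the proof. First, the sign of the index multiplier in your ${\rm SL}_2(\mathbb{Z})$ law is off: Definition~\ref{jacobi forms bertola} taken at face value yields $e^{-\pi {\rm i} m c v_0^2/(c\tau+d)}$ on the right-hand side, not $e^{+\pi {\rm i} m c v_0^2/(c\tau+d)}$. Be aware, however, that the paper's own conventions in this background subsection are mutually inconsistent: the action \eqref{jacobi sl2z b} and the invariance conditions of Definition~\ref{jacobi forms bertola} differ in the sign of the $u$-shift, and the stated generator $\varphi_2$ only transforms cleanly under Bertola's normalization ($e^{-2\pi {\rm i}u}$ together with the $A_1$-form $2v_0^2$ and $E=-\frac{1}{2\pi{\rm i}}\partial_u$, the conventions used in Section~\ref{Jacobi forms Jtildea1}), so any careful write-up must first fix one consistent convention; your argument survives any such fix. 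Second, where you say the leading orders in $v_0$ ``pin down each $c_j$'', you should state explicitly that the relation between the Laurent coefficients of $\varphi/\varphi_2^m$ at $v_0=0$ and the $c_j$ is triangular with unit diagonal: this is what gives holomorphy of each $c_j$ on $\mathbb{H}$, and evaluating the identity at $m+1$ generic values of $v_0$ as $\Im(\tau)\to+\infty$, where $\wp$ degenerates to a trigonometric function, converts the weak condition on $\varphi$ into boundedness of each $c_j$ at the cusp --- precisely the ``upgrade'' you correctly single out as the only nontrivial analytic step.
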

Note that this Chevalley theorem is slightly different from the others. The ring of the coefficients is the ring of holomorphic of modular forms, instead of just $\mathbb{C}$. The geometric interpretation of this fact is that the orbit space of $\Ja(A_1)$ is a~line bundle, such that its base is family of elliptic curves $E_{\tau}$ quotient by the group $A_1$ parametrised by $\mathbb{H}/{\rm SL}_2(\mathbb{Z})$. In~\cite{BertolaM.1} and~\cite{BertolaM.2}, it was proved that orbit space of $\Ja(A_1)$ has a Dubrovin--Frobenius structure. Furthermore, the orbit space of $\Ja(A_1)$ is isomorphic to $H_{1,1}$, i.e., space of elliptic functions with one double pole. The explicit isomorphism is given by the map
 \begin{gather}\label{superpotential bertola}
[(u,v_0,\tau)]\mapsto \lambda^{\Ja(A_1)}(v,u,v_0,\tau)={\rm e}^{2\pi {\rm i} u}\frac{\theta_1(v-v_0,\tau)\theta_1(v+v_0,\tau)}{\theta_1^2(v,\tau)}.
\end{gather}
As in the $A_1$ case, the isomorphism is only possible, because the map~\eqref{superpotential bertola} is invariant under~\eqref{jacobi A1 b}--\eqref{jacobi sl2z b}. A WDVV solution for this case is the following:
 \begin{gather}\label{WDVV bertola}
F\big(t^1,t^2,\tau\big)=\frac{\big(t^1\big)^2\tau}{2}+\frac{t^1\big(t^2\big)^2}{2}-\frac{\pi {\rm i}\big(t^2\big)^2}{48}E_2(\tau),
\end{gather}
where
 \begin{gather*}
E_2(\tau)=1+\frac{3}{\pi^2}\sum_{m\neq 0}\sum_{n=-\infty}^{\infty}\frac{1}{(m+n\tau)^2}.
\end{gather*}
A remarkable fact in these orbit space constructions is its correspondence with Hurwitz spaces, which can be summarized by the following diagram:
\[
 \begin{tikzcd}
 H_{0,1}\cong\mathbb{C}/A_1 \arrow{r}{1} \arrow[swap]{d}{2} & H_{0,0,0}\cong\mathbb{C}^2/\tilde A_1 \arrow{d}{4} \\
H_{1,1}\cong\left(\mathbb{C}\oplus\mathbb{C}\oplus\mathbb{H}\right)/\Ja(A_1) \arrow{r}{3}& H_{1,0,0}\cong\ ?
 \end{tikzcd}
\]
The arrows of the diagram above have a double meaning. The first one is simply an extension of~the group, the arrow $2$ is ``Jacobi" extension, and the arrow $1$ is ``affine" extension. The~second meaning is related to the Hurwitz space side: the arrows $2$ and $4$ increase by one the genus, and the arrows $1$ and $3$ split a double pole in 2 simple poles. The missing part of the diagram is exactly the orbit space counter part of $H_{1,0,0}$. The diagram suggest that the new group should be an extension of the $A_1$ group, such that combine the groups $\tilde A_1$, and $\Ja(A_1)$, furthermore, it~should preserve $H_{1,0,0}$ in a similar way for what was done in~\eqref{superpotential bertola}. To construct the desired group, we start from the group $\Ja(A_1)$ and make an extension in order to incorporate the $\tilde A_1$ group. Concretely, we extend the domain $\Omega^{\Ja(A_1)}$ to
\begin{gather*}
\Omega^{\Ja(\tilde A_1)}:=\Omega^{A_1}\oplus\mathbb{C}\oplus\mathbb{C}\oplus\mathbb{H}=\big\{(v_0,v_1,v_2,u,\tau) \in \mathbb{C}^4\oplus\mathbb{H}\colon v_0+v_1\in \mathbb{Z}\oplus\tau\mathbb{Z} \big\},
\end{gather*}
and we extend the group action $\Ja(A_1)$ to the following action:

$A_1$-action:
\begin{gather}\label{mon 1}
v_0\mapsto -v_0,\qquad
v_2\mapsto v_2,\qquad
u\mapsto u,\qquad
\tau\mapsto\tau.
\end{gather}

Translation:
\begin{gather}
v_0\mapsto v_0+\mu_0+\lambda_0\tau,\qquad
v_2\mapsto v_2+\mu_2+\lambda_2\tau,\nonumber
\\
u\mapsto u-2\lambda_0v_0+2\lambda_2v_2-\lambda_0^2\tau+\lambda_2^2\tau+k,\qquad
\tau\mapsto\tau,\label{mon 2}
\end{gather}
where $(\lambda_0,\lambda_2), (\mu_0,\mu_2) \in \mathbb{Z}^2$, and $k \in \mathbb{Z}$.

${\rm SL}_2(\mathbb{Z})$-action:
\begin{gather}\label{mon 3}
v_0\mapsto \frac{v_0}{c\tau+d},\qquad
v_2\mapsto \frac{v_2}{c\tau+d},\qquad
u\mapsto u+\frac{c\big(v_0^2-v_2^2\big)}{(c\tau+d)},\qquad
\tau\mapsto\frac{a\tau+b}{c\tau+d},
\end{gather}
where $a,b,c,d\in \mathbb{Z}$, and $ad-bc=1$.

The group action~\eqref{mon 1}, \eqref{mon 2}, and \eqref{mon 3} is called extended affine Jacobi group $A_1$, and is denoted by $\Ja\big(\tilde A_1\big)$.

\begin{Remark}\sloppy
 The translations of the group $\tilde A_1$ are a subgroup of the translations of the group~$\Ja\big(\tilde A_1\big)$. Therefore, it is in that sense that $\Ja\big(\tilde A_1\big)$ is a combination of $\tilde A_1$ and $\Ja(A_1)$.
\end{Remark}

In order to rewrite the action of $\Ja\big(\tilde A_1\big)$ in an intrinsic way, consider the $A_1$ in the following extended space
\begin{gather*}
L^{\tilde A_1}=\left\{(z_0,z_1,z_2)\in \mathbb{Z}^{3}\colon \sum _{i=0}^3z_i=0 \right\}.
\end{gather*}
The action of $A_1$ on $L^{\tilde A_1}$ is given by
\begin{gather*}
w(z_0,z_1,z_2)=(z_{1},z_{0},z_2)
\end{gather*}
permutations in the first two variables. Moreover, $A_1$ also acts on the complexfication of $L^{\tilde A_1}\otimes\mathbb{C}$. Let us use the following identification $\mathbb{Z}^{2}\cong L^{\tilde A_1}$, $\mathbb{C}^{2}\cong L^{\tilde A_1}\otimes \mathbb{C}$, which is possible due to the maps
\begin{gather*}
(v_{0},v_2)\mapsto (v_{0},-v_0,v_2),\qquad
(v_{0},v_1,v_2)\mapsto (v_{0},v_2).
\end{gather*}
The action of $A_1$ on $\mathbb{C}^2 \ni v=(v_0,v_2)$ is:
\begin{gather*}
w(v)=w(v_0,v_2)=(-v_{0},v_2).
\end{gather*}
Let the quadratic form $\langle\,,\,\rangle_{\tilde A_1}$ be given by
\begin{gather}\label{invariant metric A1 t}
\langle v,v\rangle_{\tilde A_1}=v^{\rm T}M_{\tilde A_1}v
=v^{\rm T} \begin{pmatrix}2 & 0 \\0 & -2\\ \end{pmatrix}v=2v_0^2-2v_2^2.
\end{gather}
Consider the following group $L^{\tilde A_1}\times L^{\tilde A_1}\times \mathbb{Z}$ with the following group operation
\begin{gather*}
\forall (\lambda,\mu,k), \big(\tilde\lambda,\tilde\mu,\tilde k\big) \in L^{\tilde A_1}\times L^{\tilde A_1}\times \mathbb{Z},
\\
(\lambda,\mu,k)\bullet\big(\tilde\lambda,\tilde\mu,\tilde k\big)=\big(\lambda+\tilde\lambda,\mu+\tilde\mu,k+\tilde k+\langle \lambda,\tilde\lambda\rangle_{\tilde A_1}\big).
\end{gather*}
Note that $\langle ,\rangle_{\tilde A_1}$ is invariant under $A_1$ group, then $A_1$ acts on $L^{\tilde A_1}\times L^{\tilde A_1}\times \mathbb{Z}$. Hence, we can take the semidirect product $A_1\ltimes \big(L^{\tilde A_1}\times L^{\tilde A_1}\times \mathbb{Z}\big) $ given by the following product
\begin{gather*}
\forall (w,\lambda,\mu,k), \big(\tilde w,\tilde\lambda,\tilde\mu,\tilde k\big) \in A_1\times L^{\tilde A_1}\times L^{\tilde A_1}\times \mathbb{Z},
\\
(w,\lambda,\mu,k)\bullet\big(\tilde w,\tilde\lambda,\tilde\mu,\tilde k\big)=\big(w\tilde w,w\lambda+\tilde\lambda,w\mu+\tilde\mu,k+\tilde k+\langle\lambda,\tilde\lambda\rangle_{\tilde A_1}\big).
\end{gather*}
Denoting $W(\tilde A_1):=A_1\ltimes \big(L^{\tilde A_1}\times L^{\tilde A_1}\times \mathbb{Z}\big)$, we can define
\begin{Definition}
The Jacobi group $\Ja\big(\tilde A_1\big)$ is defined as a semidirect product $W\big(\tilde A_1\big)\rtimes {\rm SL}_2(\mathbb{Z})$. The group action of ${\rm SL}_2(\mathbb{Z})$ on $W(\tilde A_1)$ is defined as
\begin{gather*}
{\rm Ad}_{\gamma}(w)=w,
\\
{\rm Ad}_{\gamma}(\lambda,\mu,k)=\left(a\mu-b\lambda,-c\mu+d\lambda,k+\frac{ac}{2}\langle \mu,\mu \rangle_{\tilde A_1}-bc\langle \mu,\lambda\rangle_{\tilde A_1}+\frac{bd}{2}\langle\lambda,\lambda\rangle_{\tilde A_1}\right)
\end{gather*}
for $(w,t=(\lambda,\mu,k))\in W\big(\tilde A_1\big)$, $\gamma \in {\rm SL}_2(\mathbb{Z})$. Then the multiplication rule is given as follows
\begin{gather*}
(w,t,\gamma)\bullet\big(\tilde w,\tilde t,\tilde \gamma\big)=\big(w\tilde w,t{\rm Ad}_{\gamma}(w\tilde t),\gamma\tilde\gamma\big).
\end{gather*}
\end{Definition}

Then, the action of Jacobi group $\Ja\big(\tilde A_1\big)$ on $\Omega^{\Ja(\tilde A_1)}:=\mathbb{C}\oplus \mathbb{C}^{2}\oplus \mathbb{H} \in (u,v,\tau)$ is described by the main three generators
\begin{gather*}
\hat w=\big(w,0,I_{{\rm SL}_2(\mathbb{Z})}\big),\qquad
 t=\big(I_{A_1},\lambda,\mu,k,I_{{\rm SL}_2(\mathbb{Z})}\big),\qquad
\gamma=\left(I_{A_1},0,\begin{pmatrix}
a & b \\c & d\\\end{pmatrix}\right),
\end{gather*}
 which acts on $\Omega^{\Ja(\tilde A_1)}$ as follows
\begin{gather*}
\hat w(u,v=(v_0,v_2),\tau)= (u,-v_0,v_2,\tau ),
\\[-.5ex]
t(u,v=(v_0,v_2),\tau)=\left(u-\langle \lambda,v \rangle_{\tilde A_1}-\frac{1}{2}\langle \lambda,\lambda \rangle_{\tilde A_1}\tau+k,v_0+\lambda_0\tau+\mu_0,v_2+\lambda_2\tau+\mu_2,\tau\right),
\\[-.5ex]
\gamma(u,v=(v_0,v_2),\tau)=\left(u+\frac{c\langle v,v \rangle_{\tilde A_1}}{2(c\tau+d)},\frac{v_0}{c\tau+d},\frac{v_2}{c\tau+d},\frac{a\tau+b}{c\tau+d}\right),
\end{gather*}
where $\lambda,\mu, k \in L^{\tilde A_1}\times L^{\tilde A_1}\times \mathbb{Z}$,
\begin{gather*}
\lambda=(\lambda_0,\lambda_2), \qquad \mu=(\mu_0,\mu_2).
\end{gather*}
In a more condensed form we have the following proposition.
\begin{Proposition}
The group $\Ja\big(\tilde A_1\big)\ni (\hat w,t,\gamma)$ acts on $\Omega:=\mathbb{C}\oplus \mathbb{C}^{2}\oplus \mathbb{H} \ni (u,v,\tau)$ as follows:
\begin{gather}
\hat w(u,v,\tau)=(u,w(v),\tau),\nonumber
\\[-.5ex]
t(u,v,\tau)=\left(u-\langle \lambda,v \rangle_{\tilde A_1}-\frac{1}{2}\langle \lambda,\lambda \rangle_{\tilde A_1}\tau+k,v+\lambda\tau+\mu,\tau\right),\nonumber
\\[-.5ex]
\gamma(u,v,\tau)=\left(u+\frac{c\langle v,v \rangle_{\tilde A_1}}{2(c\tau+d)},\frac{v}{c\tau+d},\frac{a\tau+b}{c\tau+d}\right).
\label{jacobigroupA1tilde}
\end{gather}
\end{Proposition}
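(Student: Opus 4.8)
The plan is to read the Proposition not as a fresh definition but as the claim that the three displayed formulas assemble into a single well-defined left action of $\Ja\big(\tilde A_1\big)$ on $\Omega$, compatible with the multiplication rule $\bullet$ and the adjoint action $\mathrm{Ad}_\gamma$ fixed above. Writing $\rho(g)$ for the transformation of $\Omega$ attached to $g$, I would prove two things: that each of $\rho(\hat w)$, $\rho(t)$, $\rho(\gamma)$ maps $\Omega$ into $\Omega$, and that $\rho$ is a homomorphism, $\rho(g_1\bullet g_2)=\rho(g_1)\circ\rho(g_2)$. Since $\hat w$, the translations $t$, and the matrices $\gamma$ generate the group, it suffices to check the homomorphism property on these generators and their pairwise compositions.

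For well-definedness I would first note that $\rho(\gamma)$ preserves $\mathbb{H}$, because $\Im\big(\frac{a\tau+b}{c\tau+d}\big)=\Im(\tau)/|c\tau+d|^2>0$ whenever $ad-bc=1$, and that the simultaneous rescaling $v\mapsto v/(c\tau+d)$, $\tau\mapsto(a\tau+b)/(c\tau+d)$ preserves the lattice constraint $v_0+v_1\in\mathbb{Z}\oplus\tau\mathbb{Z}$ cutting out $\Omega$. For $\rho(t)$ the shift $v\mapsto v+\lambda\tau+\mu$ with $\lambda,\mu\in\mathbb{Z}^2\cong L^{\tilde A_1}$ preserves the same constraint, and $\rho(\hat w)$ clearly does.

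The heart of the argument is the cocycle bookkeeping in the $u$-coordinate, which I would organise family by family. For the modular part, composing $\rho(\gamma_1)\circ\rho(\gamma_2)$ and using the homogeneity $\langle v/(c_2\tau+d_2),v/(c_2\tau+d_2)\rangle_{\tilde A_1}=\langle v,v\rangle_{\tilde A_1}/(c_2\tau+d_2)^2$, the two $u$-shifts must collapse into the single shift $\frac{C\langle v,v\rangle_{\tilde A_1}}{2(C\tau+D)}$ attached to $\gamma_1\gamma_2$, whose lower row is $(C,D)=(c_1a_2+d_1c_2,\,c_1b_2+d_1d_2)$; the required identity $c_2(C\tau+D)+c_1=C(c_2\tau+d_2)$ expands to $c_1\big(1-(a_2d_2-b_2c_2)\big)=0$, so it holds \emph{precisely because} $\det\gamma_2=1$, and this is the one place the determinant condition is indispensable. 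For the Heisenberg part I would compose two translations $\rho(t_1)\circ\rho(t_2)$ and verify that the bilinear terms generated by $-\langle\lambda_1,\lambda_2\tau+\mu_2\rangle_{\tilde A_1}$, together with the quadratic $\tau$-coefficient $-\frac{1}{2}\langle\lambda,\lambda\rangle_{\tilde A_1}$, reassemble into the shift associated with $t_1\bullet t_2$; this is exactly the role of the central $\mathbb{Z}$-factor, and the delicate point is to account for every integer contribution to $u$ through the central variable $k$ and the group cocycle $\langle\lambda,\tilde\lambda\rangle_{\tilde A_1}$.

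The step I expect to be the main obstacle is the compatibility of the modular and Heisenberg cocycles under the semidirect product, i.e.\ checking that conjugation $\rho(\gamma)\circ\rho(t)\circ\rho(\gamma)^{-1}$ reproduces $\rho\big(\mathrm{Ad}_\gamma(t)\big)$. This is the only computation that mixes the rescaling $v\mapsto v/(c\tau+d)$ with the quadratic $u$-shift, and it is where the explicit correction $\frac{ac}{2}\langle\mu,\mu\rangle_{\tilde A_1}-bc\langle\mu,\lambda\rangle_{\tilde A_1}+\frac{bd}{2}\langle\lambda,\lambda\rangle_{\tilde A_1}$ in the definition of $\mathrm{Ad}_\gamma$ gets pinned down; I expect this to be the longest calculation. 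Once it is verified, the homomorphism property on arbitrary words follows formally, using that $\langle\,,\,\rangle_{\tilde A_1}$ is $A_1$-invariant, so that $\rho(\hat w)$ intertwines the reflection action of $A_1$ on $(\lambda,\mu)$, and the Proposition follows.
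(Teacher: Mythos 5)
Your overall strategy is the right one, and it is worth noting that the paper itself does not carry out this verification at all: after the Proposition it only remarks that substituting \eqref{invariant metric A1 t} into the displayed formulas recovers the transformation laws \eqref{mon 1}, \eqref{mon 2}, \eqref{mon 3}, and then states that the proof that this defines a group action is ``just straightforward computations, but it is a bit long'', and omits it. So your decomposition --- well-definedness on $\Omega$, closure of each family under composition, and the semidirect-product compatibility $\rho(\gamma)\circ\rho(t)\circ\rho(\gamma)^{-1}=\rho({\rm Ad}_\gamma(t))$ --- is precisely the omitted computation, and your ${\rm SL}_2(\mathbb{Z})$ step is correct: the identity $c_2D+c_1=Cd_2$ does reduce to $c_1\big(1-\det\gamma_2\big)=0$.

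However, the Heisenberg step of your plan fails with the group law exactly as stated in the paper, and this is a genuine obstruction rather than bookkeeping. Composing two translations gives
\begin{gather*}
\rho(t_1)\circ\rho(t_2)\colon\ u\mapsto u-\langle \lambda_1+\lambda_2,v\rangle_{\tilde A_1}-\frac{1}{2}\langle\lambda_1+\lambda_2,\lambda_1+\lambda_2\rangle_{\tilde A_1}\tau+k_1+k_2-\langle\lambda_1,\mu_2\rangle_{\tilde A_1},
\end{gather*}
so the cocycle forced by the formulas is $-\langle\lambda,\tilde\mu\rangle_{\tilde A_1}$, not the symmetric cocycle $\langle\lambda,\tilde\lambda\rangle_{\tilde A_1}$ used in the definition of $W\big(\tilde A_1\big)$. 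The difference cannot be absorbed into the central variable: the cross term $-\langle\lambda_1,\mu_2\rangle_{\tilde A_1}$ is not symmetric in $(t_1,t_2)$, so the transformations $\rho(t_1)$ and $\rho(t_2)$ do not commute (take $\lambda_1=(1,0)$, $\mu_1=0$, $k_1=0$ and $\lambda_2=0$, $\mu_2=(1,0)$, $k_2=0$: the two compositions differ by the shift $u\mapsto u-2$), whereas $L^{\tilde A_1}\times L^{\tilde A_1}\times\mathbb{Z}$ with the symmetric cocycle is an abelian group, and a homomorphism from an abelian group cannot have a nonabelian image. Run literally, your plan therefore ends in a contradiction, not a proof. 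The repair is either (a) to take the Heisenberg cocycle to be $-\langle\lambda,\tilde\mu\rangle_{\tilde A_1}$, after which your reassembly closes exactly, or (b) to prove the weaker statement that the formulas define an action up to integer shifts of $u$, i.e., a genuine action after passing to ${\rm e}^{2\pi {\rm i}u}$ --- which is all the invariant theory of the paper ever uses, since a Jacobi form of integer index is unchanged under $u\mapsto u+k$, $k\in\mathbb{Z}$. The same caution applies to your ${\rm Ad}_\gamma$ step: conjugating a translation, the shift of $v$ becomes $(d\lambda-c\mu)\tau+(a\mu-b\lambda)$, so the two slots in the paper's displayed ${\rm Ad}_\gamma$ must be matched in that order, and its central correction rechecked against whichever cocycle you adopt.
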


Substituting~\eqref{invariant metric A1 t} in~\eqref{jacobigroupA1tilde}, we get the transformation law~\eqref{mon 1}, \eqref{mon 2}, and~\eqref{mon 3}. The explanation of why~\eqref{jacobigroupA1tilde} is that a~group action for $\Ja\big(\tilde A_1\big)$ is just straightforward computations, but it is a~bit long, so this part of the proof will be omitted.

\subsection[Jacobi forms of J(A1)]{Jacobi forms of $\boldsymbol{\Ja\big(\tilde A_1\big)}$}\label{Jacobi forms Jtildea1}
In order to understand the differential geometry of orbit space, first we need to study the algebra of the invariant functions. Informally, every time that there is a group $W$ acting on a vector space $V$, one could think of the orbit spaces $V/W$ as $V$, but you should remember yourself one can only use the $W$-invariant sections of V. Hence, motivated by the definition of Jacobi forms of group~$A_n$ defined in~\cite{K.Wirthmuller}, and used in the context of Dubrovin--Frobenius manifold in~\cite{BertolaM.1,BertolaM.2}, we present the following:

\begin{Definition}\label{meromorphic jacobi forms}
The weak $\tilde A_1$-invariant Jacobi forms of weight $k$, order $l$, and index $m$ are functions on $\Omega=\mathbb{C}\oplus\mathbb{C}^2\oplus\mathbb{H}\ni (u,v_0,v_2,\tau)=(u,v,\tau)$ which satisfy
\begin{gather}
\varphi(w(u,v,\tau))=\varphi(u,v,\tau),\quad \textrm{$A_1$-invariant condition},\nonumber
\\[-.5ex]
\varphi(t(u,v,\tau))=\varphi(u,v,\tau),\nonumber
\\[-.5ex]
\varphi(\gamma(u,v,\tau))=(c\tau+d)^{-k}\varphi(u,v,\tau),\nonumber
\\[-.5ex]
E\varphi(u,v,\tau):=-\frac{1}{2\pi {\rm i}}\frac{\partial}{\partial u}\varphi(u,v_0,v_2,\tau)=m\varphi(u,v_0,v_2,\tau).
\label{jacobiform}
\end{gather}
Moreover,
\begin{enumerate}\itemsep=0pt
\item[1)] $\varphi$ is locally bounded functions of $v_0$ as $\Im(\tau)\mapsto +\infty$ (weak condition),
\item[2)] for fixed $u$, $v_0$, $\tau$ the function $v_{2}\mapsto \varphi(u,v_0,v_2,\tau)$ is meromorphic with poles of order at~most~$l+2m$ at in $v_{2}=0,\frac{1}{2},\frac{\tau}{2},\frac{1+\tau}{2} \mod\mathbb{Z}\oplus\tau\mathbb{Z}$,
\item[3)] for fixed $u,v_2\neq 0,\frac{1}{2},\frac{\tau}{2},\frac{1+\tau}{2} \mod \mathbb{Z}\oplus\tau\mathbb{Z}$, $\tau$ the function $v_{0}\mapsto \varphi(u,v_0,v_2,\tau)$ is holo\-morphic,
\item[4)] for fixed $u,v_0,v_2\neq 0,\frac{1}{2},\frac{\tau}{2},\frac{1+\tau}{2}\mod \mathbb{Z}\oplus\tau\mathbb{Z}$. The function $\tau\mapsto \varphi(u,v_0,v_2,\tau)$ is holo\-morphic.
\end{enumerate}
The space of $\tilde A_1$-invariant Jacobi forms of weight $k$, order $l$, and index $m$ are denoted by $J_{k,l,m}^{\tilde A_1}$, and $J_{\bullet,\bullet,\bullet}^{\Ja(\tilde A_1)}=\bigoplus _{k,l,m}J_{k,l,m}^{\tilde A_1}$ is the space of Jacobi forms $\tilde A_1$-invariant.
\end{Definition}

\begin{Remark}\label{Remark Zagier Ja1}
The condition $E\varphi(u,v_0,v_2,\tau)=m\varphi(u,v_0,v_2,\tau)$ implies that $\varphi(u,v_0,v_2,\tau)$ has the following form
\begin{gather*}
\varphi(u,v_0,v_2,\tau)=f(v_0,v_2,\tau){\rm e}^{2\pi {\rm i}m u}
\end{gather*}
and the function $f(v_0,v_2,\tau)$ has the following transformation law{\samepage
\begin{gather*}
f(v_0,v_2,\tau)=f(-v_0,v_2,\tau),
\\[-.5ex]
f(v_0,v_2,\tau)={\rm e}^{-2\pi {\rm i} m\big(\langle \lambda,v \rangle +\frac{\langle\lambda,\lambda \rangle}{2}\tau\big)}f(v_0+m_0+n_0\tau,v_2+m_2+n_2\tau,\tau),
\\[-.5ex]
f(v_0,v_2,\tau)=(c\tau+d)^{-k}{\rm e}^{2\pi {\rm i}m\big(\frac{c\langle v,v \rangle}{(c\tau+d)}\big)}f\left(\frac{v_0}{c\tau+d},\frac{v_2}{c\tau+d},\frac{a\tau+b}{c\tau+d}\right).
\end{gather*}

}

\noindent
The functions $f(v_0,v_2,\tau)$ are more closely related to the definition of Jacobi form of the Eichler--Zagier type~\cite{M.EichlerandD.Zagier}. The coordinate $u$ works as kind of automorphic correction in this functions $f(v_0,v_2,\tau)$. Further, the coordinate $u$ will be crucial to construct an equivariant metric on the orbit space of $\Ja\big(\tilde A_1\big)$; see Section~\ref{Frobenius structure A1}.
\end{Remark}

\begin{Remark}
Note that the Jacobi forms in the Definition~\ref{jacobi forms bertola} are holomorphic, and in the Definition \ref{meromorphic jacobi forms}, the Jacobi forms are meromorphic in the variable~$v_2$.
\end{Remark}

The main result of this section is the following:

 \textit{The ring of $\tilde A_1$-invariant Jacobi forms is polynomial over a suitable ring $E_{\bullet,\bullet}:=J_{\bullet,\bullet,0}^{\Ja(\tilde A_1)}$ on suitable generators $\varphi_0$, $\varphi_1$.} Before stating precisely the theorem, I will define the objects $E_{\bullet,\bullet},\varphi_0,\varphi_1$.

The ring $E_{\bullet,l}:=J_{\bullet,l,0}^{\Ja(\tilde A_1)}$ is the space of meromorphic Jacobi forms of index $0$ with poles of order at most $l$ at $0$, $\frac{1}{2}$, $\frac{\tau}{2}$, $\frac{1+\tau}{2}$ $\mod \mathbb{Z}\oplus\tau\mathbb{Z}$, by definition. The sub-ring $J_{\bullet,0,0}^{\Ja(\tilde A_1)}\subset E_{\bullet,\bullet}$ has a~nice structure, indeed:
\begin{Lemma}\label{lemmamodular Jtildea1}
The sub-ring $J_{\bullet,0,0}^{\Ja(\tilde A_1)}$ is equal to $M_{\bullet}:=\bigoplus M_{k}$, where $M_k$ is the space of modular forms of weight $k$ for the full group ${\rm SL}_2(\mathbb{Z})$.
\end{Lemma}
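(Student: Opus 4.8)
The plan is to show the two inclusions $M_\bullet \subseteq J_{\bullet,0,0}^{\Ja(\tilde A_1)}$ and $J_{\bullet,0,0}^{\Ja(\tilde A_1)} \subseteq M_\bullet$ separately. For the easy inclusion, I would take a modular form $g(\tau)$ of weight $k$ for ${\rm SL}_2(\mathbb{Z})$ and regard it as a function $\varphi(u,v_0,v_2,\tau) = g(\tau)$ constant in the $u$, $v_0$, $v_2$ directions. One then checks directly against Definition~\ref{meromorphic jacobi forms}: the $A_1$-reflection and the translation action fix $\tau$ and move only $(u,v)$, so $\varphi$ is manifestly invariant under $\hat w$ and $t$; the ${\rm SL}_2(\mathbb{Z})$ transformation law $\varphi(\gamma\cdot) = (c\tau+d)^{-k}\varphi$ is exactly the modularity of $g$, since the exponential factor ${\rm e}^{2\pi {\rm i}m(c\langle v,v\rangle/(c\tau+d))}$ is trivial when $m=0$; applying $E$ gives $0$, so the index is $m=0$; and with $l=0$ there are no poles in $v_2$, which holds since $\varphi$ is independent of $v_2$. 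Holomorphicity in $v_0$, $v_2$, $\tau$ and the weak boundedness condition as $\Im\tau \to +\infty$ all follow from $g$ being a genuine (holomorphic) modular form.

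For the harder inclusion, I would start from $\varphi \in J_{\bullet,0,0}^{\Ja(\tilde A_1)}$ and use Remark~\ref{Remark Zagier Ja1}: since $m=0$, the Euler condition forces $\varphi$ to be independent of $u$, so $\varphi = f(v_0,v_2,\tau)$. With $m=0$ the automorphic exponential prefactors in the translation and ${\rm SL}_2(\mathbb{Z})$ transformation laws of Remark~\ref{Remark Zagier Ja1} all collapse to $1$. The key step is then to prove that $f$ is independent of $v_0$ and $v_2$. Because $l=0$, condition~(2) of Definition~\ref{meromorphic jacobi forms} says $f$ has no poles in $v_2$, and together with condition~(4) $f$ is holomorphic in all variables; by the translation law with $m=0$, $v_0 \mapsto f$ and $v_2 \mapsto f$ are doubly periodic with respect to the lattice $\mathbb{Z}\oplus\tau\mathbb{Z}$. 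A holomorphic doubly periodic function on $\mathbb{C}$ is bounded, hence constant in that variable by Liouville; applying this first in $v_2$ (now legitimately entire) and then in $v_0$ shows $f = g(\tau)$ depends on $\tau$ alone. Finally the ${\rm SL}_2(\mathbb{Z})$ law reduces to $g(\tau) = (c\tau+d)^{-k} g((a\tau+b)/(c\tau+d))$, i.e., $g$ transforms as a weight-$k$ modular form, and holomorphicity plus the weak growth condition at $\Im\tau \to +\infty$ upgrade it to $g \in M_k$.

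The main obstacle I anticipate is the passage from double periodicity to constancy in $v_0$ and $v_2$, which requires genuine entireness in those variables. This is where conditions (1)--(4) of Definition~\ref{meromorphic jacobi forms} must be invoked with care: the meromorphy in $v_2$ is only harmless because we are in the $l=0$ stratum $E_{\bullet,0} = J_{\bullet,0,0}^{\Ja(\tilde A_1)}$, so the admissible poles at $0,\tfrac12,\tfrac\tau2,\tfrac{1+\tau}2$ simply do not occur and Liouville applies. I would be careful to order the Liouville arguments so that at each stage the function is honestly holomorphic and doubly periodic in the single variable being frozen out, and to confirm that the residual growth condition in $\tau$ matches the definition of a holomorphic modular form (as opposed to a weakly holomorphic one), so that the coefficient ring is exactly $M_\bullet = \bigoplus_k M_k$ and not a larger ring.
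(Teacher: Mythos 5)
Your proposal is correct and follows essentially the same route as the paper's proof: use the index-$0$ (Euler) condition to eliminate the $u$-dependence, observe that in the $l=0$ stratum the function is holomorphic and doubly periodic in $v_0$ and in $v_2$ so Liouville's theorem forces constancy in those variables, and then read off the ${\rm SL}_2(\mathbb{Z})$ law as modularity in $\tau$. The only differences are cosmetic: you spell out the trivial inclusion $M_\bullet\subseteq J_{\bullet,0,0}^{\Ja(\tilde A_1)}$ and the growth condition at $\Im\tau\to+\infty$, which the paper leaves implicit.
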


\begin{proof}
Using the Remark \ref{Remark Zagier Ja1}, we know that functions $\varphi(u,v_0,v_2,\tau)\in J_{\bullet,0,0}^{\Ja(\tilde A_1)}$ can not depend on $u$, so $\varphi(u,v_0,v_2,\tau)=\varphi(v_0,v_2,\tau)$. Moreover, for fixed $v_2$, $\tau$ the functions $v_0\mapsto \varphi(v_0,v_2,\tau) $ are holomorphic elliptic functions. Therefore, by Liouville theorem, these function are constant in $v_0$. Similar argument shows that these function do not depend on $v_2$, because $l+2m=0$, i.e., there is no pole. Then, $\varphi=\varphi(\tau)$ are standard holomorphic modular forms.
\end{proof}

\begin{Lemma}\label{ringcoef Jtildea1}
If $\varphi \in E_{\bullet,\bullet}=J_{\bullet,\bullet,0}^{\Ja(\tilde A_1)}$, then $\varphi$ depends only on the variables $v_2$, $\tau$. Moreover, if~$\varphi\in J_{0,l,0}^{\Ja(\tilde A_1)}$ for fixed $\tau$ the function $v_2\mapsto \varphi(v_2,\tau)$ is an elliptic function with poles of order at~most $l$ on $0,\frac{1}{2}, \frac{\tau}{2}, \frac{1+\tau}{2}\mod \mathbb{Z}\oplus\tau\mathbb{Z}$.
\end{Lemma}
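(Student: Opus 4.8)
The plan is to run the same Liouville-type argument used in the proof of Lemma~\ref{lemmamodular Jtildea1}, now exploiting the vanishing of the index $m=0$ to remove both the $u$-dependence and the automorphic factors in the translation law. First I would invoke Remark~\ref{Remark Zagier Ja1}: the Euler condition $E\varphi=m\varphi=0$ forces $\varphi(u,v_0,v_2,\tau)=f(v_0,v_2,\tau)\mathrm{e}^{2\pi\mathrm{i}mu}$ with $m=0$, so $\varphi$ is independent of $u$ and we may write $\varphi=\varphi(v_0,v_2,\tau)$. Next, setting $m=0$ in the translation law of Remark~\ref{Remark Zagier Ja1} makes the prefactor $\mathrm{e}^{-2\pi\mathrm{i}m(\langle\lambda,v\rangle+\frac12\langle\lambda,\lambda\rangle\tau)}$ equal to $1$, whence
\begin{gather*}
\varphi(v_0,v_2,\tau)=\varphi\big(v_0+\mu_0+\lambda_0\tau,\,v_2+\mu_2+\lambda_2\tau,\,\tau\big)
\end{gather*}
for all $(\lambda_0,\lambda_2),(\mu_0,\mu_2)\in\mathbb{Z}^2$. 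Specialising $\lambda_2=\mu_2=0$ shows that, for fixed $v_2$ and $\tau$, the map $v_0\mapsto\varphi(v_0,v_2,\tau)$ is doubly periodic with respect to the lattice $\mathbb{Z}\oplus\tau\mathbb{Z}$.

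For the first assertion I would then combine this periodicity with holomorphicity. By item~3 of Definition~\ref{meromorphic jacobi forms}, for every fixed $\tau$ and every $v_2\notin\{0,\frac12,\frac\tau2,\frac{1+\tau}2\}\bmod\mathbb{Z}\oplus\tau\mathbb{Z}$ the function $v_0\mapsto\varphi(v_0,v_2,\tau)$ is holomorphic on all of $\mathbb{C}$; being also elliptic, it is constant in $v_0$ by Liouville's theorem. Thus $\varphi$ is independent of $v_0$ for every $v_2$ off the exceptional divisor, that is, on a dense open set; since $\varphi$ is meromorphic in $v_2$, this independence propagates to all $v_2$ by continuation, giving $\varphi=\varphi(v_2,\tau)$.

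The second assertion is then read off directly. Specialising instead $\lambda_0=\mu_0=0$ in the displayed translation law yields $\varphi(v_2,\tau)=\varphi(v_2+\mu_2+\lambda_2\tau,\tau)$, so for fixed $\tau$ the function $v_2\mapsto\varphi(v_2,\tau)$ is elliptic for the lattice $\mathbb{Z}\oplus\tau\mathbb{Z}$; and item~2 of Definition~\ref{meromorphic jacobi forms} bounds its poles by $l+2m=l$ at $0,\frac12,\frac\tau2,\frac{1+\tau}2\bmod\mathbb{Z}\oplus\tau\mathbb{Z}$, as claimed. I expect the only delicate point to be the passage, in the second paragraph, from constancy in $v_0$ for generic $v_2$ to constancy for all $v_2$: here one must remember that $\tau$ is held fixed throughout, so the periodicity lattice does not move, and use meromorphicity in $v_2$ to exclude any residual $v_0$-dependence concentrated on the divisor. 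Everything else is the Liouville mechanism already deployed in Lemma~\ref{lemmamodular Jtildea1}.
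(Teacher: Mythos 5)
Your proposal is correct and follows essentially the same route as the paper: the paper's proof simply points back to Lemma~\ref{lemmamodular Jtildea1} (Remark~\ref{Remark Zagier Ja1} to kill the $u$-dependence, then Liouville's theorem in $v_0$), noting that the only change is the allowed poles in $v_2$, which is exactly your argument. Your write-up is merely more explicit, in particular about propagating constancy in $v_0$ from generic $v_2$ to the exceptional divisor and about the pole bound $l+2m=l$, both of which the paper leaves implicit.
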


\begin{proof}
The proof is essentially the same of the Lemma~\ref{lemmamodular Jtildea1}, the only difference is that now we have poles at $v_2=0,\frac{1}{2}, \frac{\tau}{2}, \frac{1+\tau}{2}\mod \mathbb{Z}\oplus\tau\mathbb{Z}$. Hence, we have dependence on $v_2$.
\end{proof}

As a consequence of Lemma~\ref{ringcoef Jtildea1}, the function $\varphi \in E_{k,l}=J_{k,l,0}^{\Ja(\tilde A_1)}$ has the following form
\begin{gather*}
\varphi(v_2,\tau)=f(\tau)g(v_2,\tau),
\end{gather*}
 where $f(\tau)$ is holomorphic modular form of weight $k$, and for fixed $\tau$, the function $v_2\mapsto g(v_2,\tau)$ is an elliptic function of order at most $l$ at the poles $0,\frac{1}{2}, \frac{\tau}{2}, \frac{1+\tau}{2}\mod \mathbb{Z}\oplus\tau\mathbb{Z}$.

At this stage, we are able to define $\varphi_0$, $\varphi_1$. Note that a natural way to produce meromorphic Jacobi forms is by using rational functions of holomorphic Jacobi forms. Starting here, we will denote the Jacobi forms related with the Jacobi group $\Ja(A_1)$ with the upper index $\Ja(A_1)$, for instance
\begin{gather*}
\varphi^{\Ja(A_1)},
\end{gather*}
and the Jacobi forms related with the Jacobi group $\Ja\big(\tilde A_1\big)$ with the upper index $\Ja\big(\tilde A_1\big)$
\begin{gather*}
\varphi^{\Ja(\tilde A_1)}.
\end{gather*}

In~\cite{BertolaM.1}, Bertola found a~basis of the generators of the Jacobi form algebra by producing a~holomorphic Jacobi form of type $A_n$ as product of $\theta$-functions
\begin{gather*}
\varphi^{\Ja(A_n)}={\rm e}^{2\pi {\rm i} u}\prod_{i=1}^{n+1} \frac{\theta_1(z_i,\tau)}{\theta_1^{\prime}(0,\tau)}.
\end{gather*}
Afterwards, Bertola defined a recursive operator to produce the remaining basic generators. In~order to recall the details, see~\cite{BertolaM.1}. Our strategy will follow the same logic of Bertola method; we~use theta functions to produce a basic generator and thereafter, we produce a recursive operator to produce the remaining part.

\begin{Lemma}
 Let be $\varphi_3^{\Ja(A_2)}(u_1,z_1,z_2,\tau)$ the holomorphic $A_2$-invariant Jacobi form, which corresponds to the algebra generator of maximal weight degree, in this case degree $3$. More exp\-li\-citly,
\begin{gather*}
\varphi_3^{\Ja(A_2)}={\rm e}^{-2\pi {\rm i} u_1}\frac{\theta_1(z_1,\tau)\theta_1(z_2,\tau)\theta_1(-z_1-z_2,\tau)}{{\theta_1^{\prime}(0,\tau)}^3}.
\end{gather*}
Let be $\varphi_2^{\Ja(A_1)}(u_2,z_3,\tau)$ the holomorphic $A_1$-invariant Jacobi form, which corresponds to the algebra generator of maximal weight degree, in this case degree $2{:}$
\begin{gather*}
\varphi_2^{\Ja(A_1)}={\rm e}^{-2\pi {\rm i} u_2}\frac{{\theta_1(z_3,\tau)}^2}{{\theta_1^{\prime}(0,\tau)}^2}.
\end{gather*}
Then, the function
\begin{gather}\label{varphi1 def Jtildea1}
\varphi_1^{\Ja(\tilde A_1)}=\frac{\varphi_3^{\Ja(A_2)}}{\varphi_2^{\Ja(A_1)}}
\end{gather}
is meromorphic Jacobi form of index $1$, weight $-1$, order $0$.
\end{Lemma}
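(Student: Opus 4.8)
The plan is to deduce each defining property of $\varphi_1^{\Ja(\tilde A_1)}$ in Definition~\ref{meromorphic jacobi forms} directly from the corresponding property of the two holomorphic building blocks $\varphi_3^{\Ja(A_2)}$ and $\varphi_2^{\Ja(A_1)}$, whose Jacobi-form transformation laws are already known from the work of Bertola and Wirthmuller and are recalled just above as the maximal-weight generators. The bridge is the linear change of variables turning the $A_2\oplus A_1$ data into $\tilde A_1$ data: I would set $z_1=v_0+v_2$, $z_2=v_2-v_0$ (so that the third $A_2$-coordinate is $z_0:=-z_1-z_2=-2v_2$), $z_3=2v_2=-z_0$, and $u=u_1-u_2$. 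Under this substitution the factor $\theta_1(-z_1-z_2)=\theta_1(-2v_2)=-\theta_1(2v_2)$ of $\varphi_3^{\Ja(A_2)}$ cancels one power of $\theta_1(z_3)=\theta_1(2v_2)$ coming from $\varphi_2^{\Ja(A_1)}$, leaving exactly $\varphi_1^{\Ja(\tilde A_1)}={\rm e}^{-2\pi {\rm i}u}\theta_1(v_0+v_2)\theta_1(v_2-v_0)/(\theta_1(2v_2)\theta_1'(0))$.

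The elementary properties then follow at once. For the $A_1$-action $v_0\mapsto -v_0$ I note that it transposes $z_1\leftrightarrow z_2$, an element of the $A_2$ Weyl group $S_3$, while fixing $z_3$; since $\varphi_3^{\Ja(A_2)}$ is $A_2$-invariant and $\varphi_2^{\Ja(A_1)}$ depends only on $z_3$, the quotient is invariant. The index is read off from the $u$-dependence ${\rm e}^{-2\pi{\rm i}(u_1-u_2)}={\rm e}^{-2\pi{\rm i}u}$, giving $E\varphi_1^{\Ja(\tilde A_1)}=\varphi_1^{\Ja(\tilde A_1)}$, i.e.\ $m=1$, while the weight is obtained by subtracting the weights $3$ and $2$, which in the sign convention of Definition~\ref{meromorphic jacobi forms} yields $k=-1$. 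For the analytic conditions, the numerator and $\theta_1'(0)$ are entire, respectively nonvanishing, in $v_0$ and holomorphic in $\tau$, so the only poles in $v_2$ arise from the simple zeros of $\theta_1(2v_2)$, located precisely at $v_2=0,\tfrac12,\tfrac\tau2,\tfrac{1+\tau}2\bmod\mathbb{Z}\oplus\tau\mathbb{Z}$; these are simple poles, of order $1\le l+2m=2$ with $l=0$, so the order is $0$, and holomorphy in $v_0$ and $\tau$ together with the weak boundedness condition are inherited from the building blocks.

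The substantive step is translation and ${\rm SL}_2(\mathbb{Z})$ equivariance. I would check that a $\tilde A_1$-lattice translation $(v_0,v_2)\mapsto(v_0+\mu_0+\lambda_0\tau,\,v_2+\mu_2+\lambda_2\tau)$ pulls back to an $A_2$-lattice translation in $(z_1,z_2)$ together with an $A_1$-lattice translation in $z_3$, using that the vector $(-2,1,1)$ lies in the $A_2$ root lattice so that shifting $v_2$ is admissible on the $A_2$ side as well. The quasi-periodicity factors of $\varphi_3^{\Ja(A_2)}$ and $\varphi_2^{\Ja(A_1)}$ are exponentials in the respective quadratic forms, and under the quotient they subtract; the key is that the substitution induces $\langle\,\cdot\,,\cdot\,\rangle_{A_2}-\langle\,\cdot\,,\cdot\,\rangle_{A_1}=\langle\,\cdot\,,\cdot\,\rangle_{\tilde A_1}$, exactly the form \eqref{invariant metric A1 t} (concretely $z_0^2+z_1^2+z_2^2=2v_0^2+6v_2^2$ minus $2z_3^2=8v_2^2$ equals $2v_0^2-2v_2^2$). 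This makes the combined factor reproduce the $u$-shift $-\langle\lambda,v\rangle_{\tilde A_1}-\tfrac12\langle\lambda,\lambda\rangle_{\tilde A_1}\tau$ of \eqref{jacobigroupA1tilde}. The same bookkeeping governs ${\rm SL}_2(\mathbb{Z})$: the theta automorphy exponentials sum to $\exp(\pi{\rm i}c\langle v,v\rangle_{\tilde A_1}/(c\tau+d))$, cancelled precisely by the shift $u\mapsto u+c\langle v,v\rangle_{\tilde A_1}/(2(c\tau+d))$, leaving the pure weight factor $(c\tau+d)^{-1}$.

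I expect this quadratic-form matching, ensuring that the exponential automorphy and quasi-periodicity factors of the two constituent forms combine under the quotient into exactly the $\tilde A_1$ cocycle, to be the one genuinely delicate point; once it is settled, every remaining verification is immediate from the standard transformation properties of $\theta_1$.
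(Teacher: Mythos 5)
Your proposal is correct and takes essentially the same route as the paper: both rewrite the quotient in the variables $(u,v_0,v_2,\tau)$ as ${\rm e}^{-2\pi{\rm i}u}\,\theta_1(v_0+v_2)\theta_1(-v_0+v_2)/\big(\theta_1^{\prime}(0)\theta_1(2v_2)\big)$ and then verify each condition of Definition~\ref{meromorphic jacobi forms} — $A_1$-invariance by the root permutation, index by $\partial_u$, analytic behaviour from the zeros of $\theta_1(2v_2)$, and translation/${\rm SL}_2(\mathbb{Z})$ behaviour from the standard quasi-periodicity and modular transformation laws of $\theta_1$. Your quadratic-form bookkeeping $\langle\,\cdot\,,\cdot\,\rangle_{A_2}-\langle\,\cdot\,,\cdot\,\rangle_{A_1}=\langle\,\cdot\,,\cdot\,\rangle_{\tilde A_1}$ (i.e., $2v_0^2+6v_2^2-8v_2^2=2v_0^2-2v_2^2$) is precisely the identity that emerges when the paper substitutes the theta multipliers directly into the explicit formula, so the two verifications coincide step by step.
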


\begin{proof}
 For our convenience, we change the labels $z_1$, $z_2$, $z_3$ to $v_0+v_2$, $-v_0+v_2$, $2v_2$, respectively. Then~\eqref{varphi1 def Jtildea1} has the following form
\begin{gather}
\label{varphi1 def2 Jtildea1}
\varphi_1^{\Ja(\tilde A_1)}(u,v_0,v_2,\tau)={\rm e}^{-2\pi {\rm i} u}\frac{\theta_1(v_0+v_2,\tau)\theta_1(-v_0+v_2,\tau)}{\theta_1^{\prime}(0,\tau) \theta_1(2v_2,\tau)}.
\end{gather}

Let us prove each item separately.

 $A_1$-\emph{invariant}.
The $A_1$ group acts on~\eqref{varphi1 def2 Jtildea1} by permuting its roots, thus~\eqref{varphi1 def2 Jtildea1} remains invariant under this operation.

 \emph{Translation invariant.}
 Recall that under the translation $v\mapsto v+m+n\tau$, the Jacobi theta function transforms as~\cite{BertolaM.1, E.T.WhittakerandG.N.Watson}:
\begin{gather}\label{transformtheta Jtildea1}
\theta_1(v_i+\mu_i+\lambda_i\tau,\tau)=(-1)^{\lambda_i+\mu_i}{\rm e}^{-2\pi {\rm i}\big(\lambda_iv_i+\frac{\lambda_i^2}{2}\tau\big)}\theta_1(v_i,\tau).
\end{gather}
Then, substituting the transformation~\eqref{transformtheta Jtildea1} into~\eqref{varphi1 def2 Jtildea1}, we conclude that~\eqref{varphi1 def2 Jtildea1} remains inva\-riant.

 ${\rm SL}_2(\mathbb{Z})$-\emph{invariant}.
Under ${\rm SL}_2(\mathbb{Z})$-action the following function transform as
\begin{gather}
\label{transformtheta2 Jtildea1}
\frac{\theta_1\big(\frac{v_i}{c\tau+d},\frac{a\tau+d}{c\tau+d}\big)}
{\theta_1^{\prime}\big(0,\frac{a\tau+d}{c\tau+d}\big)}
=(c\tau+d)^{-1}\exp\left(\frac{\pi {\rm i} cv_i^2 }{c\tau+d}\right)\frac{\theta_1(v_i,\tau)}{\theta_1^{\prime}(0,\tau)}.
\end{gather}
Then, substituting~\eqref{transformtheta2 Jtildea1} in~\eqref{varphi1 def2 Jtildea1}, we get
\begin{gather*}
\varphi_1^{\Ja(\tilde A_1)}\mapsto \frac{\varphi_1^{\Ja(\tilde A_1)}}{c\tau+d}.
\end{gather*}

 \emph{Index} $1$.
\begin{gather*}
\frac{1}{2\pi {\rm i}}\frac{\partial}{\partial u}\varphi_1{\Ja\big(\tilde A_1\big)}=\varphi_1^{\Ja(\tilde A_1)}.
\end{gather*}

 \emph{Analytic behavior.}
Note that $\varphi_1^{\Ja(\tilde A_1)}\theta_1^2(2v_2,\tau)$ is holomorphic function in all the variables~$v_i$. Therefore, $\varphi_1^{\Ja(\tilde A_1)}$ are holomorphic functions on the variables $v_0$, and meromorphic function in~the variable $v_{2}$ with poles on $\frac{j}{2}+\frac{l\tau}{2}$, $j,l=0,1$ of order~2, i.e.,~$l=0$, since $m=1$.
\end{proof}

In order to define the desired recursive operator, it is necessary to enlarge the domain of~the Jacobi forms from $\mathbb{C}\oplus\mathbb{C}^2\oplus\mathbb{H}\ni (u,v_0,v_2,\tau)$ to $\mathbb{C}\oplus\mathbb{C}^3\oplus\mathbb{H}\ni (u,v_0,v_2,p,\tau)$. In addition, we~define a lift of Jacobi forms defined in $ \mathbb{C}\oplus\mathbb{C}^2\oplus\mathbb{H}$ to $\mathbb{C}\oplus\mathbb{C}^3\oplus\mathbb{H}$ as
\begin{gather*}
\varphi(u,v_0+v_2,-v_0+v_2,\tau)\mapsto \hat\varphi(p):=\varphi(u,v_0+v_2+p,-v_0+v_2+p,\tau).
\end{gather*}
A convenient way to do computations in these extended Jacobi forms is to use the following coordinates
\begin{gather*}
s=u+g_1(\tau)p^2,\quad\
z_1=v_0+v_2+p,\quad\
z_2=-v_0+v_2+p,\quad\
z_3=2v_2+p,\quad\
\tau=\tau.
\end{gather*}
The bilinear form $\langle v,v \rangle_{\tilde A_1}$ is extended to
\begin{gather*}
\langle (z_1,z_2,z_3),(z_1,z_2,z_3) \rangle_E=z_1^2+z_2^2-z_3^2,
\end{gather*}
or equivalently,
\begin{gather*}
\langle (v_0,v_2,p),(v_0,v_2,p) \rangle_E=2v_0^2-2v_2^2+p^2.
\end{gather*}
The action of the Jacobi group $\tilde A_1$ in this extended space is
\begin{gather*}
\hat w_E(u,v,p,\tau)=(u,w(v),p,\tau),
\\
t_E(u,v,p,\tau)=\left(u-\langle \lambda,v \rangle_{E}-\frac{1}{2}\langle \lambda,\lambda \rangle_{E}\tau+k,v+p+\lambda\tau+\mu,\tau\right),
\\
\gamma_E(u,v,p,\tau)=\left(u+\frac{c\langle v,v \rangle_{E}}{2(c\tau+d)},\frac{v}{c\tau+d},\frac{p}{c\tau+d},\frac{a\tau+b}{c\tau+d}\right).
\end{gather*}

\begin{Proposition}
Let be $\varphi\in J_{k, m,\bullet}^{\Ja(\tilde A_1)}$, and $\hat\varphi$ the correspondent extended Jacobi form. Then,
\begin{gather*}
\frac{\partial}{\partial p}\big(\hat\varphi\big)\bigg|_{p=0} \in J_{k-1, m,\bullet}^{\Ja(\tilde A_1)}.
\end{gather*}
\end{Proposition}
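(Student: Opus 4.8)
The plan is to verify that $\psi:=\frac{\partial}{\partial p}\big(\hat\varphi\big)\big|_{p=0}$ satisfies each of the four defining conditions of $J_{k-1,m,\bullet}^{\Ja(\tilde A_1)}$ in Definition~\ref{meromorphic jacobi forms}, obtaining every condition by differentiating in $p$ the corresponding transformation law of $\hat\varphi$ and then setting $p=0$. The input is that $\hat\varphi$ is an honest Jacobi form for the extended action $(\hat w_E,t_E,\gamma_E)$: invariant under $\hat w_E$ and $t_E$, of weight $k$ under $\gamma_E$, and of index $m$. This covariance is precisely what the extended form $\langle\,,\,\rangle_E=2v_0^2-2v_2^2+p^2$ and the matching shift of $u$ are built to produce; concretely it follows from applying \eqref{transformtheta Jtildea1}--\eqref{transformtheta2 Jtildea1} to each of the three shifted arguments $v_0+v_2+p$, $-v_0+v_2+p$, $2v_2+p$ and noting that the exponential prefactors cancel because $z_1^2+z_2^2-z_3^2=\langle v,v\rangle_E$.

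The reflection, translation and index conditions are routine. Because $\hat w_E$ and $t_E$ fix the $p$-slot (the translation lattice lies in the $(v_0,v_2)$-directions and so enters neither $\langle\lambda,\cdot\rangle_E$ nor the $p$-component), differentiating $\hat\varphi(\hat w_E(\cdot))=\hat\varphi(\cdot)$ and $\hat\varphi(t_E(\cdot))=\hat\varphi(\cdot)$ in $p$ and setting $p=0$ gives at once $\psi(\hat w(\cdot))=\psi$ and $\psi(t(\cdot))=\psi$. The dependence of $\hat\varphi$ on $u$ is only through the overall exponential carrying the index $m$, so $\partial_u$ and $\partial_p$ commute and $\psi$ again has index $m$; holomorphy in $v_0,\tau$ and meromorphy in $v_2$ persist under holomorphic $p$-differentiation (the pole order in $v_2$ can grow by at most one), which accounts for the order $\bullet$.

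The crux is the modular weight. I would differentiate $\hat\varphi(\gamma_E(u,v,p,\tau))=(c\tau+d)^{-k}\hat\varphi(u,v,p,\tau)$ in $p$ and evaluate at $p=0$. Of the slots of $\gamma_E$, the $v$- and $\tau$-slots are $p$-independent, the $u$-slot $u+\frac{c\langle v,v\rangle_E}{2(c\tau+d)}$ depends on $p$ only through $\frac{c\,p^2}{2(c\tau+d)}$, and the $p$-slot is $\frac{p}{c\tau+d}$. Since the $p$-derivative of the $u$-slot equals $\frac{c\,p}{c\tau+d}$, which vanishes at $p=0$, the chain rule leaves only the contribution of the $p$-slot, namely the factor $\frac{1}{c\tau+d}$. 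Thus $\frac{1}{c\tau+d}\psi(\gamma(u,v,\tau))=(c\tau+d)^{-k}\psi(u,v,\tau)$, i.e.\ $\psi(\gamma(u,v,\tau))=(c\tau+d)^{-(k-1)}\psi(u,v,\tau)$, which is exactly weight $k-1$.

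I expect the main obstacle to be the bookkeeping in this last step: one must check that no further $p$-dependence sneaks into the modular law, and in particular that the potentially anomalous term coming through $\partial_u\hat\varphi$ is annihilated by the vanishing of $\frac{c\,p}{c\tau+d}$ at $p=0$ — this vanishing being exactly the reason for the $p^2$ in $\langle\,,\,\rangle_E$. Indeed, the naive shift $v_2\mapsto v_2+p$ (which moves $2v_2$ by $2p$ rather than $p$) would make the modular law pick up an anomalous term proportional to $m\,v_2\,\varphi$; it is the correct three-argument lift together with the $p^2$-term that removes it. Establishing that $\hat\varphi$ is a genuine extended Jacobi form, and not merely a lift, is therefore the conceptual heart, and the rest is differentiation.
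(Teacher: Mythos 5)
Your proposal is correct and follows essentially the same route as the paper's proof: differentiate each transformation law of $\hat\varphi$ in $p$ at $p=0$, note that $\frac{\partial}{\partial p}\langle v,v\rangle_E\big|_{p=0}=2p\big|_{p=0}=0$ annihilates the chain-rule contribution through the $u$-slot of $\gamma_E$ so that only the factor $(c\tau+d)^{-1}$ from the $p$-slot survives (weight $k\mapsto k-1$), while the reflection, translation and index conditions pass through untouched because those actions do not involve $p$ and $\partial_u$ commutes with $\partial_p$. Like you, the paper takes the extended equivariance of $\hat\varphi$ as given by the construction of $\langle\,,\,\rangle_E$ and does not track the order, which is left as $\bullet$.
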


\begin{proof}
$A_1$-\emph{invariant}.
The vector field $\frac{\partial}{\partial p}$ in coordinates $s$, $z_1$, $z_2$, $z_3$, $\tau$ reads
\begin{gather*}
\frac{\partial}{\partial p}=\frac{\partial}{\partial z_1}+\frac{\partial}{\partial z_2}+\frac{\partial}{\partial z_3}+2g_1(\tau)p\frac{\partial}{\partial u}.
\end{gather*}
Moreover, in the coordinates $s$, $z_1$, $z_2$, $z_3$, $\tau$ the $A_1$ group acts by permuting $z_1$ and $z_2$. Then
\begin{align*}
\frac{\partial}{\partial p}(\varphi(s,z_2,z_1,z_3,\tau))\bigg|_{p=0} &=\left(\frac{\partial}{\partial z_1}+\frac{\partial}{\partial z_2}+\frac{\partial}{\partial z_3}\right)( \varphi(s,z_2,z_1,z_3,\tau))\bigg|_{p=0}
\\
&=\left(\frac{\partial}{\partial z_1}+\frac{\partial}{\partial z_2}+\frac{\partial}{\partial z_3}\right) ( \varphi(s,z_1,z_2,z_3,\tau))\bigg|_{p=0}.
\end{align*}

\emph{Translation invariant:}
\begin{gather*}
\frac{\partial}{\partial p}\left(\varphi(u-\langle \lambda, v\rangle_E-\langle \lambda,\lambda \rangle_E,v+p+\lambda\tau+\mu,\tau)\right)\bigg |_{p=0}
\\ \qquad
{}=\frac{\partial}{\partial p} \langle \lambda, v\rangle_E \bigg |_{p=0}\varphi(u,v,\tau)+\frac{\partial \varphi}{\partial p}\left(u-\langle \lambda,v \rangle_{\tilde A_1}-\frac{1}{2}\langle \lambda,\lambda \rangle_{\tilde A_1}\tau+k,v+\lambda\tau+\mu,\tau\right)
\\ \qquad
{}=\frac{\partial \varphi}{\partial p}\left(u-\langle \lambda,v \rangle_{\tilde A_1}-\frac{1}{2}\langle \lambda,\lambda \rangle_{\tilde A_1}\tau+k,v+\lambda\tau+\mu,\tau\right)
=\frac{\partial \varphi}{\partial p}(u,v,\tau).
\end{gather*}

${\rm SL}_2(\mathbb{Z})$-\emph{equivariant of weight} $k$:
\begin{gather*}
\frac{\partial}{\partial p}\left(\varphi\left(u+\frac{c\langle v,v \rangle_{E}}{2(c\tau+d)},\frac{v}{c\tau+d},\frac{p}{c\tau+d},\frac{a\tau+b}{c\tau+d}\right) \right)\bigg |_{p=0}
\\ \qquad
=\frac{c}{2(c\tau\! +d)}\frac{\partial}{\partial p} \langle v, v\rangle_E \bigg |_{p=0}\!\varphi(u,v,\tau)\!+\!\frac{1}{c\tau\!+d}\frac{\partial \varphi}{\partial p}\left(\!u + \frac{c\langle v,v \rangle_{E}}{2(c\tau\!+d)},\frac{v}{c\tau\!+d},\frac{p}{c\tau\!+d},\frac{a\tau\!+b}{c\tau\!+d}\right)
\\ \qquad
=\frac{1}{c\tau+d}\frac{\partial \varphi}{\partial p}\left(u+\frac{c\langle v,v \rangle_{E}}{2(c\tau+d)},\frac{v}{c\tau+d},\frac{p}{c\tau+d},\frac{a\tau+b}{c\tau+d}\right)
=\frac{1}{(c\tau+d)^k}\frac{\partial \varphi}{\partial p}(u,v,\tau).
\end{gather*}
Then,
\begin{gather*}
\frac{\partial \varphi}{\partial p}\left(u+\frac{c\langle v,v \rangle_{E}}{2(c\tau+d)},\frac{v}{c\tau+d},\frac{p}{c\tau+d},\frac{a\tau+b}{c\tau+d}\right)
=\frac{1}{(c\tau+d)^{k-1}}\frac{\partial \varphi}{\partial p}(u,v,\tau).
\end{gather*}

\emph{Index} 1:
\begin{gather*}
\frac{1}{2\pi {\rm i}}\frac{\partial}{\partial u}\frac{\partial}{\partial p}\hat\varphi=\frac{1}{2\pi {\rm i}}\frac{\partial}{\partial p}\frac{\partial}{\partial u}\hat\varphi=\frac{\partial}{\partial p}\hat\varphi.
\tag*{\qed}
\end{gather*}
\renewcommand{\qed}{}
\end{proof}

\begin{Corollary}
The function
\begin{gather*}
\left[ {\rm e}^{z\frac{\partial}{\partial p} } \left({\rm e}^{2\pi {\rm i} u}\frac{\theta_1(v_0+v_2+p)\theta_1(-v_0+v_2+p)}{\theta_1(2v_2+p)\theta_1^{\prime}(0)}\right) \right]\bigg |_{p=0}=\varphi_1^{\Ja(\tilde A_1)}+\varphi_0^{\Ja(\tilde A_1)}z+O(z^2),
\end{gather*}
generates the Jacobi forms $\varphi_0^{\Ja(\tilde A_1)}$ and $\varphi_1^{\Ja(A_1)}$,
where
\begin{gather*}
\varphi_0^{\Ja(\tilde A_1)}:=\frac{\partial }{\partial p}\big(\hat\varphi_1^{\Ja(\tilde A_1)} \big)\bigg |_{p=0}.
\end{gather*}
\end{Corollary}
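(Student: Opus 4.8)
The plan is to recognise the operator $e^{z\frac{\partial}{\partial p}}$ as the Taylor shift operator and thereby reduce the statement to reading off the first two Taylor coefficients of the lifted form, each of which is already under control by the preceding results. Concretely, since the extended form
\[
\hat\varphi_1^{\Ja(\tilde A_1)}(p)={\rm e}^{2\pi {\rm i}u}\frac{\theta_1(v_0+v_2+p)\theta_1(-v_0+v_2+p)}{\theta_1(2v_2+p)\theta_1^{\prime}(0)}
\]
is holomorphic in $p$ near $p=0$ (for $2v_2\neq 0,\frac12,\frac{\tau}{2},\frac{1+\tau}{2}\bmod\mathbb{Z}\oplus\tau\mathbb{Z}$ the denominator does not vanish), the convergent Taylor expansion gives the operator identity
\[
{\rm e}^{z\frac{\partial}{\partial p}}\hat\varphi_1^{\Ja(\tilde A_1)}(p)\Big|_{p=0}
=\sum_{n\geq 0}\frac{z^n}{n!}\frac{\partial^n \hat\varphi_1^{\Ja(\tilde A_1)}}{\partial p^n}\Big|_{p=0}
=\hat\varphi_1^{\Ja(\tilde A_1)}(z).
\]
So the corollary is the assertion that the $z^0$ and $z^1$ coefficients of this expansion are exactly $\varphi_1^{\Ja(\tilde A_1)}$ and $\varphi_0^{\Ja(\tilde A_1)}$, and that both are genuine Jacobi forms.

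First I would treat the constant term. By construction of the lift, specialising $p=0$ recovers the original function, so the $z^0$ coefficient is $\hat\varphi_1^{\Ja(\tilde A_1)}\big|_{p=0}=\varphi_1^{\Ja(\tilde A_1)}$, which by the preceding Lemma is a Jacobi form of weight $-1$, order $0$, index $1$, i.e., lies in $J_{-1,0,1}^{\tilde A_1}$. Next I would identify the linear term: the coefficient of $z$ in the expansion is $\frac{\partial}{\partial p}\hat\varphi_1^{\Ja(\tilde A_1)}\big|_{p=0}$, which is precisely the defining expression for $\varphi_0^{\Ja(\tilde A_1)}$. Thus the matching of coefficients is immediate once the shift-operator identity is in place.

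Finally, to confirm that $\varphi_0^{\Ja(\tilde A_1)}$ is again a Jacobi form I would invoke the preceding Proposition with $\varphi=\varphi_1^{\Ja(\tilde A_1)}$. Since $\varphi_1^{\Ja(\tilde A_1)}$ has weight $-1$ and index $1$, the Proposition gives that $\frac{\partial}{\partial p}\big(\hat\varphi_1^{\Ja(\tilde A_1)}\big)\big|_{p=0}$ is a $\tilde A_1$-invariant Jacobi form with the weight lowered by one and the index preserved, namely of weight $-2$ and index $1$. This exhibits $\varphi_0^{\Ja(\tilde A_1)}$ as the announced second generator and completes the identification.

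I expect no genuine obstacle in this argument: all of the substantive analytic content — the $A_1$-invariance, translation invariance, ${\rm SL}_2(\mathbb{Z})$-equivariance and the index of the $p$-derivative — was already established in the Proposition, so the corollary is a direct bookkeeping consequence of Taylor's theorem applied to a function holomorphic in $p$. The only points requiring minor care are keeping the weight/order/index labels consistent when quoting the Proposition, and noting the evident typographical slip in the statement, where $\varphi_1^{\Ja(A_1)}$ should read $\varphi_1^{\Ja(\tilde A_1)}$.
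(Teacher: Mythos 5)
Your proposal is correct and is essentially the paper's own argument: the paper's entire proof is the single observation that applying $\frac{\partial}{\partial p}$ $k$ times to $\hat\varphi_1^{\Ja(\tilde A_1)}$ and setting $p=0$ produces an element of $J_{1-k,1,\bullet}^{\Ja(\tilde A_1)}$, i.e., it invokes the preceding Proposition exactly as you do, with the Taylor-shift interpretation of ${\rm e}^{z\frac{\partial}{\partial p}}$ and the matching of the $z^0$, $z^1$ coefficients left implicit. One caveat on your bookkeeping: your claim that $\varphi_0^{\Ja(\tilde A_1)}$ has weight $-2$ mixes the two opposite sign conventions the paper itself uses --- since $\varphi_1^{\Ja(\tilde A_1)}$ transforms as $\varphi_1\mapsto (c\tau+d)^{-1}\varphi_1$, the Proposition makes $\varphi_0^{\Ja(\tilde A_1)}$ actually ${\rm SL}_2(\mathbb{Z})$-invariant (weight $0$, consistent with the paper's $J_{1-k,1,\bullet}^{\Ja(\tilde A_1)}$ at $k=1$ and with the transformation law for $\varphi_0^{\tilde A_1}$ stated in the later corollary), though this slip does not affect the validity of your argument, which only needs that $\varphi_0^{\Ja(\tilde A_1)}$ be a Jacobi form of index $1$.
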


\begin{proof}
Acting $\frac{\partial}{\partial p}$ $k$ times in $\varphi_1^{\Ja(\tilde A_1)}$, we have
\begin{gather*}
\left[ \frac{\partial^k}{\partial ^k p}\left({\rm e}^{2\pi {\rm i} u}\frac{\theta_1(v_0+v_2+p)\theta_1(-v_0+v_2+p)}{\theta_1(2v_2+p)\theta_1^{\prime}(0)}\right) \right]
\bigg |_{p=0} \in J_{1-k,1,\bullet}^{\Ja(\tilde A_1)}.
\tag*{\qed}
\end{gather*}
\renewcommand{\qed}{}
\end{proof}

\begin{Corollary}
The generating function can be written as
\begin{gather*}
\left[ {\rm e}^{z\frac{\partial}{\partial p} } \left({\rm e}^{2\pi {\rm i} u}\frac{\theta_1(v_0+v_2+p)
\theta_1(-v_0+v_2+p)}{\theta_1(2v_2+p)\theta_1^{\prime}(0)}\right) \right]\bigg |_{p=0}\nonumber
\\ \qquad
{}={\rm e}^{-2\pi {\rm i}(u+ {\rm i}g_1(\tau)z^2)}\frac{\theta_1(z-v_0+v_2,\tau)\theta_1(z+v_0+v_2,\tau)}{\theta_1^{\prime}(0)\theta_1(z+2v_2)}.
\end{gather*}
\end{Corollary}

\begin{proof}
\begin{gather*}
\left[ {\rm e}^{z\frac{\partial}{\partial p} } \left({\rm e}^{2\pi {\rm i} u}\frac{\theta_1(v_0+v_2+p)
\theta_1(-v_0+v_2+p)}{\theta_1^{\prime}(0)\theta_1(2v_2+p)}\right) \right]\bigg |_{p=0}
\\ \qquad
{}=\left[ {\rm e}^{z\frac{\partial}{\partial p} } \left({\rm e}^{-2\pi {\rm i}(s+ {\rm i}g_1(\tau)p^2)}
\frac{\theta_1(v_0+v_2+p)\theta_1(-v_0+v_2+p)}{\theta_1(2v_2+p)\theta_1^{\prime}(0)}\right) \right]\bigg |_{p=0}
\\ \qquad
{}={\rm e}^{-2\pi {\rm i}(u+ {\rm i}g_1(\tau)z^2)}\frac{\theta_1(z-v_0+v_2,\tau) \theta_1(z+v_0+v_2,\tau)}{\theta_1^{\prime}(0)\theta_1(z+2v_2)}.\tag*{\qed}
\end{gather*}
\renewcommand{\qed}{}
\end{proof}

The next lemma is one of the main points of inquiry in this section, because this lemma identify the~orbit space of the group $\Ja\big(\tilde A_1\big)$ with the Hurwitz space $H_{1,0,0}$. This relationship is possible due~to the construction of the generating function of the Jacobi forms of type $\tilde A_1$, which can be~completed to be the Landau--Ginzburg superpotential of $H_{1,0,0}$ as follows:
\begin{gather*}
{\rm e}^{-2\pi {\rm i}(u+ {\rm i}g_1(\tau)z^2)}\frac{\theta_1(z-v_0+v_2,\tau)\theta_1(z+v_0+v_2,\tau)}
{\theta_1^{\prime}(0)\theta_1(z+2v_2)}
\\ \qquad{}
\mapsto {\rm e}^{-2\pi {\rm i} u}\frac{\theta_1(v-v_0+v_2,\tau) \theta_1(v+v_0+v_2,\tau)}{\theta_1(v\tau)\theta_1(v+2v_2,\tau)}.
\end{gather*}

\begin{Lemma}\label{jacobiform1 Jtildea1}
There exists a local isomorphism between $\Omega/\Ja\big(\tilde A_1\big)$ and $H_{1,0,0}$.
\end{Lemma}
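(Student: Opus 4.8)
The plan is to take the generating function of the $\tilde A_1$-Jacobi forms, completed to the superpotential as in the map displayed just before the statement, and evaluate it on a fibre variable $v$ to obtain an explicit Landau--Ginzburg representative. Concretely, I would define a map sending the class of $(u,v_0,v_2,\tau)$ to the pair $(E_\tau,\lambda)$, where $E_\tau=\mathbb{C}/(\mathbb{Z}\oplus\tau\mathbb{Z})$ and
\begin{gather*}
\lambda(v)={\rm e}^{-2\pi {\rm i} u}\,\frac{\theta_1(v-v_0+v_2,\tau)\,\theta_1(v+v_0+v_2,\tau)}{\theta_1(v,\tau)\,\theta_1(v+2v_2,\tau)}.
\end{gather*}
Three things have to be checked: that for each fixed $(u,v_0,v_2,\tau)$ the function $v\mapsto\lambda(v)$ is a genuine elliptic function on $E_\tau$ of the ramification type defining $H_{1,0,0}$; that the assignment is invariant under the three generators of $\Ja\big(\tilde A_1\big)$, so that it descends to $\Omega/\Ja\big(\tilde A_1\big)$; and that the descended map is a local biholomorphism.

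First I would verify ellipticity in $v$. The key observation is that the two numerator arguments and the two denominator arguments have the same sum $2v+2v_2$ modulo the lattice; hence, applying the quasi-periodicity $\theta_1(w+1,\tau)=-\theta_1(w,\tau)$ and $\theta_1(w+\tau,\tau)=-{\rm e}^{-2\pi {\rm i}(w+\tau/2)}\theta_1(w,\tau)$, the automorphy factors produced by the numerator cancel exactly against those of the denominator, so $\lambda(v+1)=\lambda(v+\tau)=\lambda(v)$. The divisor is then read off directly: simple poles at $v=0$ and $v=-2v_2$, and simple zeros at $v=v_0-v_2$ and $v=-v_0-v_2$. Thus $\lambda$ is a degree-$2$ meromorphic function on a genus-$1$ curve with two simple poles, which is precisely the ramification datum of $H_{1,0,0}$; a Riemann--Hurwitz count gives four simple branch points, so $\dim_{\mathbb{C}}H_{1,0,0}=4=\dim_{\mathbb{C}}\Omega$, consistent with an isomorphism.

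Next I would establish invariance of the assignment under $\hat w$, $t$ and $\gamma$. The reflection $\hat w\colon v_0\mapsto -v_0$ simply interchanges the two numerator factors, so $\lambda$ is unchanged. For the translation $t$ I would use exactly the law \eqref{transformtheta Jtildea1} already applied to $\varphi_1^{\Ja(\tilde A_1)}$: the shifts of $v_0$ and $v_2$ move each theta argument by a lattice vector, and a short computation shows the resulting exponential prefactor equals $\exp\big(-2\pi {\rm i}(\langle\lambda,v\rangle_{\tilde A_1}+\frac{\tau}{2}\langle\lambda,\lambda\rangle_{\tilde A_1})\big)$, which is cancelled precisely by the prescribed shift of $u$ in \eqref{jacobigroupA1tilde} (the integer $k$ contributing only a factor ${\rm e}^{-2\pi {\rm i}k}=1$). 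For $\gamma\in{\rm SL}_2(\mathbb{Z})$ I would rescale the fibre coordinate as $v\mapsto v/(c\tau+d)$ and write each theta quotient in the normalised form $\theta_1(\cdot)/\theta_1'(0)$, so that \eqref{transformtheta2 Jtildea1} applies: the four factors $(c\tau+d)^{-1}$ cancel in pairs, and the surviving exponential is $\exp\big(\pi {\rm i}c\langle v,v\rangle_{\tilde A_1}/(c\tau+d)\big)$, using the identity $(v-v_0+v_2)^2+(v+v_0+v_2)^2-v^2-(v+2v_2)^2=2v_0^2-2v_2^2=\langle v,v\rangle_{\tilde A_1}$, and this is exactly absorbed by the $u$-shift in \eqref{jacobigroupA1tilde}. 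Here the signature of $\langle\,,\,\rangle_{\tilde A_1}$ in \eqref{invariant metric A1 t} is essential for the cancellation.

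Finally, to obtain a local isomorphism I would construct a local inverse on the open locus where the four marked points are distinct. From $(E_\tau,\lambda)$ one recovers $\tau$ as the modulus of the curve; the pole pair $\{0,-2v_2\}$ determines $v_2$ modulo the lattice and the labelling of the two poles; the zero pair $\{v_0-v_2,-v_0-v_2\}$ then determines $v_0$ modulo sign; and the overall normalisation of $\lambda$ recovers ${\rm e}^{-2\pi {\rm i}u}$, hence $u$ modulo $\mathbb{Z}$. The residual ambiguities (sign of $v_0$, lattice shifts, relabelling of the poles, the integer $k$) are exactly the stabiliser generators quotiented out in $\Omega/\Ja\big(\tilde A_1\big)$, so these relations are single-valued on the orbit space; differentiating them shows the Jacobian is non-degenerate, giving a local biholomorphism onto an open subset of $H_{1,0,0}$. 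I expect the main obstacle to be the bookkeeping in the $\gamma$-invariance step --- tracking the fibre rescaling together with the $u$-shift and confirming that every automorphy exponential cancels --- whereas the reconstruction of the orbit data is conceptually transparent once one checks that the leftover ambiguities match the group precisely.
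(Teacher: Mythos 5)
Your proposal is correct and follows essentially the same route as the paper: you use the same theta-quotient superpotential (the paper writes it in the symmetric form $\theta_1(v\pm v_0)/\theta_1(v\pm v_2)$, yours is the translate of this by $v_2$ along the fibre), verify $\Ja\big(\tilde A_1\big)$-invariance with the same transformation laws \eqref{transformtheta Jtildea1 0} and \eqref{transformtheta2 Jtildea1 0}, and obtain the local isomorphism by recovering $(\tau,v_2,v_0,u)$ from the modulus, the divisor, and the normalisation of $\lambda$. Your explicit local-inverse reconstruction is just the paper's injectivity argument (divisors of equal elliptic functions agree up to the group action) combined with its surjectivity argument (every such elliptic function is a theta/sigma quotient) packaged in inverse form, so there is no substantive difference.
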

\begin{proof}
The correspondence is realized by the map
\begin{gather}\label{lambda0 Jtildea1}
[(u,v_0,v_2,\tau)]\longleftrightarrow \lambda(v)={\rm e}^{-2\pi {\rm i} u}\frac{\theta_1(v-v_0,\tau)\theta_1(v+v_0,\tau)}{\theta_1(v-v_2,\tau)\theta_1(v+v_2,\tau)},
\end{gather}
where $\theta_1(v,\tau)$ is the Jacobi $\theta_1$-function defined on~\eqref{theta def}.

It is necessary to prove that the map is well defined and one to one.

\textit{Well defined.} Note that the map does not depend on the choice of the representative of $[(u,v_0,v_2,\tau)]$ if the function~\eqref{lambda0 Jtildea1} is invariant under the action of $\Ja\big(\tilde A_1\big)$. Therefore, let us prove the invariance of the map~\eqref{lambda0 Jtildea1}.

\textit{$A_1$-invariant.}
The $A_1$ group acts on~\eqref{lambda0 Jtildea1} by permuting its roots, thus~\eqref{lambda0 Jtildea1} remains invariant under this operation.

\textit{Translation invariant.}
 Recall that under the translation $v\mapsto v+m+n\tau$, the Jacobi $\theta$-function transforms as~\cite{E.T.WhittakerandG.N.Watson}:
\begin{gather}
\label{transformtheta Jtildea1 0}
\theta_1(v_i+\mu_i+\lambda_i\tau,\tau)=(-1)^{\lambda_i+\mu_i}{\rm e}^{-2\pi {\rm i}\big(\lambda_iv_i+\frac{\lambda_i^2}{2}\tau\big)}\theta_1(v_i,\tau).
\end{gather}
Then, substituting the transformation~\eqref{transformtheta Jtildea1 0} into~\eqref{lambda0 Jtildea1}, we conclude that~\eqref{lambda0 Jtildea1} remains inva\-riant.

\textit{${\rm SL}_2(\mathbb{Z})$-invariant.}
Under ${\rm SL}_2(\mathbb{Z})$-action the following function transforms to
\begin{gather}\label{transformtheta2 Jtildea1 0}
\frac{\theta_1\big(\frac{v_i}{c\tau+d},\frac{a\tau+d}{c\tau+d}\big)}
{\theta_1^{\prime}\big(0,\frac{a\tau+d}{c\tau+d}\big)}
=(c\tau+d)^{-1}\exp\left(\frac{\pi {\rm i} cv_i^2 }{c\tau+d}\right)\frac{\theta_1(v_i,\tau)}{\theta_1^{\prime}(0,\tau)}.
\end{gather}
Then, substituting the transformation~\eqref{transformtheta2 Jtildea1 0} into~\eqref{lambda0 Jtildea1}, we conclude that~\eqref{lambda0 Jtildea1} remains inva\-riant.

\textit{Injectivity.} Note that for fixed $v$, $v_0$, $v_2$, $u$, the function $\tau\mapsto f(\tau):=\lambda(v,v_0,v_2,u,\tau)$ is a~modu\-lar form with character~\cite{M.EichlerandD.Zagier}. This is clear because $\lambda(v,v_0,v_2,u,\tau)$ is rational function of~$\theta_1(z,\tau)$, which is modular form with character for special values of $z$~\cite{M.EichlerandD.Zagier}.
If $\lambda(v,v_0,v_2,u,\tau) =\lambda(v,\hat v_0,\hat v_2, \hat u, \hat \tau)$, then for fixed $v$, $v_0$, $v_2$, $u$, $\hat v_0$, $\hat v_2$, $\hat u$, we have $f(\tau)=f(\hat \tau)$; in particular, $f(\tau)$,~$f(\hat \tau)$~have the same vanishing order, and this implies that $\tau$, $\hat \tau$ belongs to the same ${\rm SL}_2(\mathbb{Z})$ orbit.

{\samepage
Two elliptic functions are equal if they have the same zeros and poles with multiplicity $\mod \mathbb{Z}\oplus \tau\mathbb{Z}$. So, for a fixed $\tau$ in the ${\rm SL}_2(\mathbb{Z})$ orbit
\begin{gather*}
\hat v_0=v_0+\lambda_0\tau+\mu_0,\qquad
\hat v_2=v_2+\lambda_2\tau+\mu_2,\qquad
(\lambda_i,\mu_i)\in \mathbb{Z}^2.
\end{gather*}
Furthermore, for two different representations of the same ${\rm SL}_2(\mathbb{Z})$ orbit, but considering fixed cells, we have
\begin{gather*}
\hat v_0=\frac{v_0}{c\tau+d},\qquad
\hat v_2=\frac{v_2}{c\tau+d},\qquad
\hat \tau=\frac{a\tau+b}{c\tau+d},
\end{gather*}
where $\left(\begin{smallmatrix}
a & b \\
c & d
\end{smallmatrix}\right)\in {\rm SL}_2(\mathbb{Z})$.

}

Since, $\lambda(v,v_0,v_2,u,\tau)$ is invariant under translations, and ${\rm SL}_2(\mathbb{Z})$, for $\hat\tau=\tau$, we have
\begin{gather*}
\hat u=u-\langle \lambda,v\rangle_{\tilde A_1}-\langle \lambda, \lambda \rangle_{\tilde A_1}\frac{\tau}{2}+k.
\end{gather*}
For $\hat \tau=\frac{a\tau+b}{c\tau+d}$,
\begin{gather*}
\hat u=u-\frac{c\langle v, v \rangle_{\tilde A_1}}{2(c\tau+d)}+k,
\end{gather*}
where $k\in \mathbb{Z}$.

\textit{Surjectivity.}
 Any elliptic function can be written as rational functions of Weierstrass $\sigma$-function up to a multiplication factor~\cite{E.T.WhittakerandG.N.Watson}; by using the formula
\begin{gather*}
\sigma(v_i,\tau)=\frac{\theta_1(v_i,\tau)}{\theta_1^{\prime}(0,\tau)}\exp\big({-}2\pi {\rm i}g_1(\tau)v_i^2\big),\qquad
g_1(\tau)=\frac{\eta^{\prime}(\tau)}{\eta(\tau)},
\end{gather*}
where $\eta(\tau)$ is the Dedekind $\eta$-function, we get the desire result.
\end{proof}

\begin{Remark}
Lemma \ref{jacobiform1 Jtildea1} is a local biholomorphism of manifolds, but this does not neces\-sa\-rily means isomorphism of Dubrovin--Frobenius structure. On a Hurwitz space, there may exist in several inequivalent Dubrovin--Frobenius structures. For instance, in~\cite{Romano2} Romano constructed two gene\-ra\-li\-sed WDDV solution on the Hurwitz space $H_{1,0,0}$. Furthermore, in~\cite{BertolaM.1} and~\cite{BertolaM.2}, Bertola constructed two different Dubrovin--Frobenius structures on the orbit space of the Jacobi group~$G_2$. The Dubrovin--Frobenius structure of this orbit space will be constructed only in Section~\ref{Frobenius structure A1}.
\end{Remark}

\begin{Remark}
Lemma \ref{jacobiform1 Jtildea1} associates a group to $H_{1,0,0}$, and this could be useful for the general understanding of the WDDV solutions/discrete group correspondence~\cite{B.Dubrovin2}.
\end{Remark}

\begin{Corollary}
The functions $\big(\varphi_0^{\tilde A_1},\varphi_1^{\tilde A_1}\big)$ obtained by the formula
\begin{gather}
\lambda^{\tilde A_1}={\rm e}^{-2\pi {\rm i} u}\frac{\theta_1(v-v_0,\tau)\theta_1(v+v_0,\tau)}
{\theta_1(v-v_2,\tau)\theta_1(v+v_2,\tau)}\nonumber
\\ \hphantom{\lambda^{\tilde A_1}}
{}=\varphi_{1}^{\tilde A_1}\left[\zeta(v-v_2,\tau)-\zeta(v+v_{2},\tau)+2\zeta(v_2,\tau)\right]+\varphi_0^{\tilde A_1}
\label{superpotentialA1}
\end{gather}
 are Jacobi forms of weight $0$, $-1$ respectively, index~$1$, and order $0$. More explicitly,
\begin{gather}
\varphi_{1}^{\tilde A_1}=\frac{\theta_1(v_0+v_2,\tau)\theta_1(-v_0+v_2,\tau)}{\theta^{\prime}_1(0,\tau)\theta_1(2v_2,\tau)}{\rm e}^{-2\pi {\rm i} u},\nonumber
\\
\varphi_{0}^{\tilde A_1}=-\varphi_{1}^{\tilde A_1}\left[\zeta(v_0-v_2,\tau)-\zeta(v_0+v_{2},\tau)+2\zeta(v_2,\tau)\right],
\label{thetaformulainvariant}
\end{gather}
where $\zeta(v,\tau)$ is the Weierstrass $\zeta$-function for the lattice $(1,\tau)$, i.e.,
\begin{gather*}
\zeta(v,\tau)=\frac{1}{v}+\sum_{m^2+n^2\neq 0}^{\infty} \frac{1}{v-m-n\tau}+\frac{1}{m+n\tau}+\frac{v}{(m+n\tau)^2}.
\end{gather*}
\end{Corollary}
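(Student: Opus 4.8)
The plan is to proceed in two stages: first derive the partial-fraction expansion in~\eqref{superpotentialA1}, which simultaneously exhibits $\varphi_1^{\tilde A_1}$ as a residue and defines $\varphi_0^{\tilde A_1}$ as the $v$-independent remainder, and then verify that the resulting $\varphi_0^{\tilde A_1}$ obeys the transformation laws of Definition~\ref{meromorphic jacobi forms}.

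First I would fix $(u,v_0,v_2,\tau)$ and regard $\lambda^{\tilde A_1}$ as a function of the spectral variable $v$. It is elliptic in $v$: under $v\mapsto v+1$ the four sign factors from~\eqref{transformtheta Jtildea1} cancel in pairs, and under $v\mapsto v+\tau$ the exponential factors ${\rm e}^{-2\pi{\rm i}v}$ cancel between the two numerator and two denominator thetas. Its only singularities are simple poles at $v=\pm v_2$, with simple zeros at $v=\pm v_0$. Using $\theta_1(v\mp v_2)\sim\theta_1'(0)(v\mp v_2)$ and the oddness of $\theta_1$, a short residue computation gives $\mathop{\rm Res}_{v=v_2}\lambda^{\tilde A_1}=\varphi_1^{\tilde A_1}$ and $\mathop{\rm Res}_{v=-v_2}\lambda^{\tilde A_1}=-\varphi_1^{\tilde A_1}$, consistent with the vanishing of the sum of residues. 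The representation of an elliptic function with prescribed simple poles by the Weierstrass $\zeta$-function then yields $\lambda^{\tilde A_1}=\varphi_1^{\tilde A_1}[\zeta(v-v_2,\tau)-\zeta(v+v_2,\tau)]+\text{const}$, the constant being independent of $v$; absorbing $2\varphi_1^{\tilde A_1}\zeta(v_2,\tau)$ into it produces exactly~\eqref{superpotentialA1} and defines $\varphi_0^{\tilde A_1}$. Evaluating~\eqref{superpotentialA1} at the zero $v=v_0$ then forces $\varphi_0^{\tilde A_1}=-\varphi_1^{\tilde A_1}[\zeta(v_0-v_2,\tau)-\zeta(v_0+v_2,\tau)+2\zeta(v_2,\tau)]$, which is~\eqref{thetaformulainvariant}.

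Since $\varphi_1^{\tilde A_1}$ has already been shown to be a Jacobi form of weight $-1$, index $1$, order $0$ in~\eqref{varphi1 def2 Jtildea1}, it remains to treat $\varphi_0^{\tilde A_1}=-\varphi_1^{\tilde A_1}B$ with $B:=\zeta(v_0-v_2,\tau)-\zeta(v_0+v_2,\tau)+2\zeta(v_2,\tau)$. The factor $B$ is even in $v_0$ by oddness of $\zeta$, giving $A_1$-invariance; and since $\zeta$ increases by a fixed quasi-period under $v\mapsto v+1$ and $v\mapsto v+\tau$, the coefficient pattern $(+1,-1,+2)$ makes these increments cancel under any lattice shift of $v_0$ or of $v_2$, so $B$ is doubly periodic in both and $\varphi_0^{\tilde A_1}$ inherits the translation-invariance of $\varphi_1^{\tilde A_1}$. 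The index equals $1$ because $\varphi_0^{\tilde A_1}$ carries the same factor ${\rm e}^{-2\pi{\rm i}u}$ as $\varphi_1^{\tilde A_1}$.

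The substantive point, and the main obstacle, is the ${\rm SL}_2(\mathbb{Z})$-equivariance, because $\zeta$ is not modular: differentiating the logarithm of the theta transformation~\eqref{transformtheta2 Jtildea1} with respect to its argument and using $\zeta(v,\tau)=\theta_1'(v,\tau)/\theta_1(v,\tau)-4\pi{\rm i}g_1(\tau)v$ (equivalently the $\sigma$-formula recalled in the proof of Lemma~\ref{jacobiform1 Jtildea1}) produces a transformation $\zeta\big(\tfrac{w}{c\tau+d},\tfrac{a\tau+b}{c\tau+d}\big)=(c\tau+d)\zeta(w,\tau)+A(\tau)\,w$ with an anomalous term linear in $w$ and with $A(\tau)$ independent of $w$. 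Applying this to $w=v_0-v_2,\,v_0+v_2,\,v_2$ and invoking $(v_0-v_2)-(v_0+v_2)+2v_2=0$, the anomalous contributions cancel exactly, leaving $B\mapsto(c\tau+d)B$; combined with $\varphi_1^{\tilde A_1}\mapsto(c\tau+d)^{-1}\varphi_1^{\tilde A_1}$ this gives $\varphi_0^{\tilde A_1}\mapsto\varphi_0^{\tilde A_1}$, i.e.\ weight $0$. It is precisely this cancellation, engineered by the coefficients $(+1,-1,+2)$, that is the crux. Finally, for the order I would examine the poles in $v_2$: the simple poles of $B$ at $v_2=\pm v_0$ are killed by the zeros of $\varphi_1^{\tilde A_1}$ there (where one of the theta factors $\theta_1(\pm v_0+v_2)$ in its numerator vanishes), so the only poles of $\varphi_0^{\tilde A_1}$ lie at $v_2\in\{0,\tfrac12,\tfrac\tau2,\tfrac{1+\tau}2\}$ and have order at most $l+2m=2$, i.e.\ order $0$; holomorphy in $v_0$ and $\tau$ follows from the same cancellation, and the weak growth condition is inherited from $\varphi_1^{\tilde A_1}$.
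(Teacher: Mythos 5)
Your proposal is correct, and its skeleton is the same as the paper's: identify $\varphi_1^{\tilde A_1}$ from the behaviour of $\lambda^{\tilde A_1}$ at $v=v_2$, obtain \eqref{thetaformulainvariant} by evaluating \eqref{superpotentialA1} at the zero $v=v_0$, and then verify the conditions of Definition~\ref{meromorphic jacobi forms}, quoting the earlier lemma for $\varphi_1^{\tilde A_1}$ (formula \eqref{varphi1 def2 Jtildea1}). Two places differ in mechanism. First, you actually derive the expansion \eqref{superpotentialA1} (ellipticity of $\lambda^{\tilde A_1}$ in $v$, residues $\pm\varphi_1^{\tilde A_1}$ at $v=\pm v_2$, the standard representation of an elliptic function with two simple poles by $\zeta$-differences), whereas the paper takes the expansion as given and only computes the limit at $v=v_2$ and the zeros; likewise you check $A_1$- and translation-invariance directly on $B=\zeta(v_0-v_2,\tau)-\zeta(v_0+v_2,\tau)+2\zeta(v_2,\tau)$ by quasi-period cancellation, where the paper transfers the invariance of $\lambda^{\tilde A_1}$ (Lemma~\ref{jacobiform1 Jtildea1}) to its expansion coefficients via uniqueness of the expansion. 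Second, and more substantively, the ${\rm SL}_2(\mathbb{Z})$ step: the paper uses that the Weierstrass $\zeta$-function transforms exactly as $\zeta\big(\frac{z}{c\tau+d},\frac{a\tau+b}{c\tau+d}\big)=(c\tau+d)\zeta(z,\tau)$, with no anomalous term. This is in fact correct: $\zeta$ is homogeneous, $\zeta(\lambda z;\lambda\Lambda)=\lambda^{-1}\zeta(z;\Lambda)$, and $\mathbb{Z}\oplus\frac{a\tau+b}{c\tau+d}\mathbb{Z}=(c\tau+d)^{-1}\big(\mathbb{Z}\oplus\tau\mathbb{Z}\big)$, so your assertion that ``$\zeta$ is not modular'' and carries an anomaly $A(\tau)w$ is a misstatement --- one finds $A(\tau)\equiv 0$; the anomaly you have in mind belongs to $\theta_1^{\prime}/\theta_1$ (it equals $2\pi{\rm i}cw$), and it is exactly compensated by the quasi-modularity of $g_1(\tau)$ in the combination $\zeta=\theta_1^{\prime}/\theta_1-4\pi{\rm i}g_1(\tau)v$. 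The misstatement is harmless: your argument never uses the value of $A(\tau)$, only that it is linear in the argument, and the cancellation through $(v_0-v_2)-(v_0+v_2)+2v_2=0$ holds for any such $A$; this makes your version slightly more robust, since it would run verbatim with $\theta_1^{\prime}/\theta_1$ in place of $\zeta$, while the paper's shortcut requires knowing the exact weight-one transformation of $\zeta$. The remaining bookkeeping --- poles of $B$ at $v_2=\pm v_0$ killed by the zeros of $\varphi_1^{\tilde A_1}$, pole order at most $2=l+2m$ at the half-periods so that $l=0$, index read off from the factor ${\rm e}^{-2\pi{\rm i}u}$ --- agrees with the paper's ``analytic behavior'' paragraph.
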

\begin{proof}
Let us prove each item separately.

\textit{$A_1$-invariant, translation invariant.}
The first line of~\eqref{superpotentialA1} are $A_1$-invariant, and translation invariant by the lemma~\eqref{jacobiform1 Jtildea1}. Then, by the Laurent expansion of $\lambda^{\tilde A_1}$, we have that $\varphi_i^{\tilde A_1}$ are $A_1$-invariant, and translation invariant.

\textit{${\rm SL}_2(\mathbb{Z})$-equivariant.}
The first line of~\eqref{superpotentialA1} are ${\rm SL}_2(\mathbb{Z})$-invariant, but the Weierstrass $\zeta$-functions of the second line of~\eqref{superpotentialA1} have the following transformation law
\begin{gather*}
\zeta\left(\frac{z}{c\tau+d},\frac{a\tau+b}{c\tau+d}\right)=(c\tau+d)\zeta(z,\tau).
\end{gather*}
Then, $\varphi_i^{\tilde A_1}$ must have the following transformation law:
\begin{gather*}
\varphi_0^{\tilde A_1}\left(u+\frac{c\langle v,v \rangle_{\tilde A_1}}{2(c\tau+d)},\frac{v}{c\tau+d},\frac{a\tau+b}{c\tau+d}\right)=\varphi_0^{\tilde A_1}(u,v,\tau),
\\
\varphi_1^{\tilde A_1}\left(u+\frac{c\langle v,v\rangle_{\tilde A_1}}{2(c\tau+d)},\frac{v}{c\tau+d},\frac{a\tau+b}{c\tau+d}\right)=(c\tau+d)^{-k}\varphi_1^{\tilde A_1}(u,v,\tau).
\end{gather*}

\textit{Index $1$:}
\begin{gather*}
\frac{1}{2\pi {\rm i}}\frac{\partial}{\partial u}\lambda^{\tilde A_1}=\lambda^{\tilde A_1}.
\end{gather*}
Then
\begin{gather*}
\frac{1}{2\pi {\rm i}}\frac{\partial}{\partial u}\varphi_i^{\tilde A_1}=\varphi_i^{\tilde A_1}.
\end{gather*}

\textit{Analytic behavior.}
Note that $\lambda^{\tilde A_1}\theta_1^2(2v_2,\tau)$ is holomorphic function in all the variables $v_i$. Therefore, $\varphi_i^{\tilde A_1}$ are holomorphic functions on the variables $v_0$, and meromorphic function in the variable $v_{2}$ with poles on $\frac{j}{2}+\frac{l\tau}{2}$, $j,l=0,1$ of order 2, i.e.,~$l=0$, since $m=1$ for all~$\varphi_i^{\tilde A_1}$.

To prove the formula~\eqref{thetaformulainvariant} let us compute the following limit:
\begin{gather*}
\lim_{z\to v_2}\lambda^{\tilde A_1}v_2=\varphi_1^{\tilde A_1}
={\rm e}^{-2\pi {\rm i} u}\frac{\theta_1(v_0+v_2,\tau)\theta_1(-v_0+v_2,\tau)}
{\theta^{\prime}_1(0,\tau)\theta_1(2v_2,\tau)}.
\end{gather*}
Let us also compute the zeros of $\lambda^{\tilde A_1}$
\begin{gather*}
\lambda^{\tilde A_1}(v_0)=0=\varphi_{1}^{\tilde A_1}\left[\zeta(v_0-v_2,\tau)-\zeta(v_0+v_{2},\tau)+2\zeta(v_2,\tau)\right]+\varphi_0^{\tilde A_1}.
\tag*{\qed}
\end{gather*}
\renewcommand{\qed}{}
\end{proof}

\begin{Lemma}The functions $\varphi_0^{\tilde A_1}$, $\varphi_1^{\tilde A_1}$ are algebraically independent over the ring $E_{\bullet,\bullet}$.
\end{Lemma}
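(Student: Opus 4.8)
The plan is to argue by contradiction, exploiting the two features that distinguish $\varphi_0^{\tilde A_1}$ and $\varphi_1^{\tilde A_1}$ from the elements of the coefficient ring. Both fundamental forms have index $1$, so each carries a genuine $u$-dependence through the factor ${\rm e}^{-2\pi {\rm i}u}$ visible in~\eqref{thetaformulainvariant}, whereas by Lemma~\ref{ringcoef Jtildea1} every element of $E_{\bullet,\bullet}=J_{\bullet,\bullet,0}^{\Ja(\tilde A_1)}$ has index $0$ and depends only on $v_2$ and $\tau$. Suppose then, for contradiction, that there is a nontrivial relation
\[
\sum_{i,j} a_{ij}(v_2,\tau)\big(\varphi_0^{\tilde A_1}\big)^i\big(\varphi_1^{\tilde A_1}\big)^j=0,\qquad a_{ij}\in E_{\bullet,\bullet},
\]
with not all $a_{ij}$ equal to zero.

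First I would separate the relation by index. The monomial $\big(\varphi_0^{\tilde A_1}\big)^i\big(\varphi_1^{\tilde A_1}\big)^j$ has index $i+j$, hence carries the factor ${\rm e}^{-2\pi {\rm i}(i+j)u}$, while each $a_{ij}$ is independent of $u$. Since the exponentials $\{{\rm e}^{-2\pi {\rm i}nu}\}_{n\ge 0}$ are linearly independent as functions of $u$, the relation decomposes into its homogeneous components, so that for each total degree $n$
\[
\sum_{i+j=n} a_{ij}(v_2,\tau)\big(\varphi_0^{\tilde A_1}\big)^i\big(\varphi_1^{\tilde A_1}\big)^j=0.
\]
Next I would factor out $\varphi_1^{\tilde A_1}$. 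By~\eqref{thetaformulainvariant} we have $\varphi_0^{\tilde A_1}=\varphi_1^{\tilde A_1}R$, where
\[
R(v_0,v_2,\tau)=-\big[\zeta(v_0-v_2,\tau)-\zeta(v_0+v_2,\tau)+2\zeta(v_2,\tau)\big].
\]
Since $\varphi_1^{\tilde A_1}$ is not identically zero, dividing the degree-$n$ component by $\big(\varphi_1^{\tilde A_1}\big)^n$ yields a polynomial identity in $R$,
\[
\sum_{i=0}^{n} a_{i,\,n-i}(v_2,\tau)\,R(v_0,v_2,\tau)^i=0,
\]
whose coefficients are independent of $v_0$.

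Finally, I would observe that $R$ is a nonconstant function of $v_0$: differentiating gives
\[
\frac{\partial R}{\partial v_0}=\wp(v_0-v_2,\tau)-\wp(v_0+v_2,\tau),
\]
which does not vanish identically for generic $v_2$ (it vanishes only when $2v_2$ lies in the lattice, i.e.\ at the excluded half-period divisor). Hence, for fixed generic $v_2,\tau$, the map $v_0\mapsto R$ attains infinitely many values, so a polynomial in $R$ with $v_0$-independent coefficients can vanish identically in $v_0$ only if all its coefficients are zero. This forces $a_{i,\,n-i}(v_2,\tau)=0$ for every $i$ and every generic $v_2$, hence for all $v_2$ by analyticity, and for each $n$. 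This contradicts the nontriviality of the assumed relation, establishing algebraic independence. The only genuinely delicate point is this last step, namely verifying that the ratio $\varphi_0^{\tilde A_1}/\varphi_1^{\tilde A_1}=R$ really varies with $v_0$; everything preceding it is the standard index-grading argument combined with the Liouville-type reasoning already used in Lemmas~\ref{lemmamodular Jtildea1} and~\ref{ringcoef Jtildea1}.
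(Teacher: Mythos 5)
Your proposal is correct and follows essentially the same route as the paper's proof: there too, the putative relation is split into index-homogeneous components, each component is divided by $\big(\varphi_1^{\tilde A_1}\big)^d$, and the contradiction comes from the fact that the ratio $\varphi_0^{\tilde A_1}/\varphi_1^{\tilde A_1}=\frac{\wp^{\prime}(v_2,\tau)}{\wp(v_0,\tau)-\wp(v_2,\tau)}$ depends genuinely on $v_0$ while every element of $E_{\bullet,\bullet}$ depends only on $(v_2,\tau)$. The only difference is presentational: you justify the final step explicitly (a nonconstant meromorphic function of $v_0$ takes infinitely many values, so a polynomial with $v_0$-independent coefficients annihilating it must be the zero polynomial), whereas the paper compresses this into the terse assertion that a polynomial relation would force the ratio to be ``constant''.
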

\begin{proof}
If $P(X,Y)$ is any polynomial in $E_{\bullet,\bullet}(X,Y)$, such that $P\big(\varphi_0^{\tilde A_1}$, $\varphi_1^{\tilde A_1}\big)=0$, then, the fact $\varphi_0^{\tilde A_1}$, $\varphi_1^{\tilde A_1}$ have an index that implies that each homogeneous component $P_d\big(\varphi_0^{\tilde A_1}, \varphi_1^{\tilde A_1}\big)$ has to vanish identically. Defining $p_d\left(\frac{\varphi_0^{\tilde A_1}}{\varphi_1^{\tilde A_1}}\right):=\frac{P_d\big(\varphi_0^{\tilde A_1}, \varphi_1^{\tilde A_1}\big)}{{\big(\varphi_1^{\left(\tilde A_1\right)}\big)}^d}$, we have that $p_d\left(\frac{\varphi_0^{\tilde A_1}}{\varphi_1^{\tilde A_1}}\right)$ is identically $0$ iff $\frac{\varphi_0^{\tilde A_1}}{\varphi_1^{\tilde A_1}}$ is constant (belongs to $E_{\bullet,\bullet}$), but
\begin{gather*}
\frac{\varphi_0^{\tilde A_1}}{\varphi_1^{\tilde A_1}}=\frac{\wp^{\prime}(v_2,\tau)}{\wp(v_0,\tau)-\wp(v_2,\tau)}\neq a(v_2,\tau),
\end{gather*}
where $a(v_2,\tau)$ is any function belongs to $E_{\bullet,\bullet}$.
Then, $\varphi_0^{\tilde A_1}$, $\varphi_1^{\tilde A_1}$ are algebraically independent over the ring $E_{\bullet,\bullet}$.

Recall that $\wp(v,\tau)$ is the Weierstrass $\rm P$-function~\eqref{Weiestrass P 0}.
\end{proof}

Consider the Landau--Ginzburg superpotential~\eqref{generatorsAn+10cor} for the $\Ja(A_2)$ case below.

\begin{Theorem}[\cite{BertolaM.1}]
The ring of $A_{2}$-invariant Jacobi forms is free module of rank $3$ over the ring of modular forms, moreover there exist a formula for its generators given by
\begin{gather}
\lambda^{A_{2}}={\rm e}^{-2\pi {\rm i} u_2}\frac{ \theta_1(z+v_0+v_{2},\tau)\theta_1(z-v_0+v_{2},\tau) \theta_1(z-2v_{2})}{\theta_1^{3}(z,\tau)}\nonumber
\\ \hphantom{\lambda^{A_{2}}}
=-\frac{\varphi_{3}^{A_{2}}}{2}\wp^{\prime}(z,\tau)+\varphi_{2}^{A_{2}}\wp(z,\tau)+\varphi_{0}^{A_{2}}.
\label{generatorsAn+10cor}
\end{gather}
\end{Theorem}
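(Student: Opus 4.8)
The plan is to treat the superpotential $\lambda^{A_2}$ as the defining datum: first show that, as a function of the auxiliary variable $z$, it is a genuine elliptic function with a single pole of order $3$; then read off the three generators from its expansion in the basis $\{1,\wp,\wp'\}$; verify that these coefficients are $A_2$-invariant Jacobi forms of the asserted weights; and finally upgrade the explicit formula to the structural statement about the whole ring. The explicit Landau--Ginzburg formula is the new content, while completeness of the generating set is where the real work lies.

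First I would establish that, for fixed $(u_2,v_0,v_2,\tau)$, the map $z\mapsto\lambda^{A_2}$ is elliptic for the lattice $\mathbb{Z}\oplus\tau\mathbb{Z}$ with a pole only at $z\equiv 0$, of order $3$. Under $z\mapsto z+1$ and $z\mapsto z+\tau$ the quasi-periodicity~\eqref{transformtheta Jtildea1} makes the three numerator thetas and the three denominator thetas $\theta_1^3(z)$ pick up matching signs and exponential factors; the crucial cancellation under $z\mapsto z+\tau$ is that the three shifts $v_0+v_2$, $-v_0+v_2$, $-2v_2$ sum to zero, so the factor $e^{-6\pi{\rm i}z-3\pi{\rm i}\tau}$ produced by the numerator is exactly the one produced by $\theta_1^3(z)$. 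The only pole comes from the triple zero of $\theta_1^3(z)$. Since the space of elliptic functions with a pole of order at most $3$ confined to the origin is three-dimensional with basis $\{1,\wp(z,\tau),\wp'(z,\tau)\}$ (and such functions have no simple-pole part), $\lambda^{A_2}$ must equal $-\tfrac12\varphi_3^{A_2}\wp'(z,\tau)+\varphi_2^{A_2}\wp(z,\tau)+\varphi_0^{A_2}$ for uniquely determined coefficients depending only on $(u_2,v_0,v_2,\tau)$. Matching the Laurent tail at $z=0$, where $\wp'\sim-2z^{-3}$ and $\wp\sim z^{-2}$, identifies $\varphi_3^{A_2}$ with the leading coefficient ${\rm e}^{-2\pi{\rm i}u_2}\theta_1(v_0+v_2)\theta_1(-v_0+v_2)\theta_1(-2v_2)/\theta_1'(0)^3$, and successively determines $\varphi_2^{A_2}$ and $\varphi_0^{A_2}$.

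Next I would check that $\varphi_0^{A_2},\varphi_2^{A_2},\varphi_3^{A_2}$ are holomorphic (order $0$) Jacobi forms of index $1$. Weyl invariance is immediate: the Weyl group $S_3$ of $A_2$ permutes the three linear forms $v_0+v_2$, $-v_0+v_2$, $-2v_2$, hence fixes the symmetric product $\lambda^{A_2}$, while $\wp$ and $\wp'$ do not involve $v_0,v_2$. Translation invariance and the index follow from~\eqref{transformtheta Jtildea1} together with the prefactor ${\rm e}^{-2\pi{\rm i}u_2}$, exactly as in the computation for $\varphi_1^{\Ja(\tilde A_1)}$, the vanishing of $\sum a_i=0$ again killing the residual exponential. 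For ${\rm SL}_2(\mathbb{Z})$-equivariance I would use~\eqref{transformtheta2 Jtildea1} to see that $\lambda^{A_2}$ has weight $0$; then, since $\wp\big(\tfrac{z}{c\tau+d},\tfrac{a\tau+b}{c\tau+d}\big)=(c\tau+d)^2\wp(z,\tau)$ and $\wp'$ transforms with weight $3$, comparing the two sides of the expansion forces $\varphi_2^{A_2}$ to have weight $-2$, $\varphi_3^{A_2}$ weight $-3$, and $\varphi_0^{A_2}$ weight $0$.

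Finally comes the structural claim that these three forms generate $J_{\bullet,\bullet}^{\Ja(A_2)}$ freely over the ring $M_\bullet$ of modular forms. Algebraic independence I would prove as in the algebraic independence lemma above: the index grading reduces any relation to homogeneous ones, and dividing by a power of $\varphi_3^{A_2}$ turns it into a polynomial identity among the ratios $\varphi_0^{A_2}/\varphi_3^{A_2}$ and $\varphi_2^{A_2}/\varphi_3^{A_2}$, which are non-constant functions of $(v_0,v_2)$ with independent elliptic behaviour, so no such identity can hold. For generation I would invoke Wirthm\"uller's theorem~\cite{K.Wirthmuller}, which guarantees that the ring of weak $A_2$-invariant Jacobi forms is a free polynomial module over $M_\bullet$ on $3=\mathrm{rank}(A_2)+1$ generators of weights $0,-2,-3$; it then remains only to match weights, indices and leading terms to see that the forms extracted above realise such a generating set. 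The main obstacle is precisely this last step: establishing completeness without citing Wirthm\"uller requires a dimension count, realising index-$m$ Jacobi forms as sections of a line bundle on the abelian surface $\mathbb{C}^2/(\mathbb{Z}^2+\tau\mathbb{Z}^2)$ and matching $\dim J_{k,m}^{A_2}$, via Riemann--Roch, against the number of weight-$k$ degree-$m$ monomials in the generators. By contrast the ellipticity, the expansion, and the invariance checks are routine once the identity $\sum a_i=0$ is exploited.
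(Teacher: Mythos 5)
A preliminary remark: the paper does not prove this statement at all --- it is imported verbatim from Bertola~\cite{BertolaM.1} (hence the citation in the theorem header), so there is no internal proof to compare yours against. The review below judges your proposal on its own terms and against the way this result is actually established in the literature and echoed elsewhere in this paper.

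The half of your proposal that derives the displayed formula is correct, and it is essentially the standard derivation: ellipticity of $z\mapsto\lambda^{A_2}$ hinges, as you say, on the three shifts $v_0+v_2$, $-v_0+v_2$, $-2v_2$ summing to zero; the space of elliptic functions whose only pole is at the origin, of order at most $3$, is spanned by $1$, $\wp$, $\wp'$ (no simple-pole part by the residue theorem); and uniqueness of that expansion transfers the Weyl, translation and ${\rm SL}_2(\mathbb{Z})$ covariance of $\lambda^{A_2}$ to the coefficients, giving index $1$ and weights $0$, $-2$, $-3$. This is exactly the mechanism the present paper uses to extract $\varphi_0^{\tilde A_1}$, $\varphi_1^{\tilde A_1}$ from $\lambda^{\tilde A_1}$ in~\eqref{superpotentialA1}, so that part is sound.

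The caveats concern the structural half, which is where the content of the theorem really lies. First, your algebraic-independence argument is too weak as stated: after reducing to index-homogeneous relations and dividing by a power of $\varphi_3^{A_2}$, you must rule out a polynomial relation in the \emph{two} ratios $\varphi_0^{A_2}/\varphi_3^{A_2}$ and $\varphi_2^{A_2}/\varphi_3^{A_2}$ with modular coefficients, and for that non-constancy of each ratio is not enough --- two non-constant functions can perfectly well be algebraically dependent. The fix is available: for fixed $\tau$ the coefficients of the expansion determine the zero divisor of $\lambda^{A_2}$, hence $(v_0,v_2)$ up to a finite group, so the map to the pair of ratios is generically finite and its image is two-dimensional, which is what you actually need. (The analogous lemma in this paper deals with a \emph{single} ratio, so the weaker condition suffices there; your case does not reduce to it.) Second, for generation you either invoke Wirthm\"uller~\cite{K.Wirthmuller} --- legitimate, but then your proof is of a different nature from Bertola's, whose stated aim was a constructive argument not relying on that theorem --- or you carry out the Riemann--Roch dimension count, which you leave as an acknowledged sketch, i.e., as a gap. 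A self-contained alternative, and the one closest in spirit to Bertola and to this paper's own proof of Theorem~\ref{chevalley Jtildea1}, is a division-and-completion argument: given an index-$m$ invariant form $\varphi$, expand $\varphi/\big(\varphi_3^{A_2}\big)^m$ in Weierstrass functions at the zeros of $\varphi_3^{A_2}$, complete it to a Weyl-invariant expression, and check holomorphy, which exhibits $\varphi$ as a polynomial in the generators with modular coefficients. Replacing your last step by this argument would make the whole proof self-contained.
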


\begin{Lemma}
{\sloppy
Let $\big\{\varphi_0^{\tilde A_1 },\varphi_1^{\tilde A_1 }\big\}$ be set of functions given by the formula~\eqref{superpotentialA1}, and $\big\{\varphi_0^{A_{2}},\varphi_2^{A_{2}},\varphi_3^{A_{2}}\big\}$ given by~\eqref{generatorsAn+10cor}, then
\begin{gather*}
\varphi_{3}^{A_{2}}=\varphi_{1}^{\tilde A_{1}}\varphi_{2}^{A_{1}},\\
\varphi_{2}^{A_{2}}=\varphi_{0}^{\tilde A_{n}}\varphi_{2}^{A_{1}}+ a_2(v_2,\tau)\varphi_{j}^{\tilde A_{n}}\varphi_{2}^{A_{1}},\\
\varphi_{0}^{A_{2}}= a_0(v_2,\tau)\varphi_{0}^{\tilde A_{1}}\varphi_{2}^{A_{1}}+b_0(v_2,\tau)\varphi_{2}^{\tilde A_{1}}\varphi_{2}^{A_{1}},
\end{gather*}}
where
\begin{gather*}
\varphi_2^{A_1}:=\frac{\theta_1^2(2v_{2},\tau)}{{\theta_1^{\prime}(0,\tau)}^2}{\rm e}^{2\pi {\rm i} (-u_2+u_1)}
\end{gather*}
and $a_i$, $b_i$ are elliptic functions on $v_{2}$.

\end{Lemma}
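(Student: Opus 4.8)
The plan is to read the three $A_2$-generators off the Laurent expansion of the superpotential $\lambda^{A_2}(z)$ in \eqref{generatorsAn+10cor} at $z=0$. For fixed $u$, $v_0$, $v_2$, $\tau$ the map $z\mapsto\lambda^{A_2}(z)$ is an elliptic function whose only pole is a triple pole at $z=0$ (produced by $\theta_1^3(z)$ in the denominator), so it lies in the three-dimensional space spanned by $\{1,\wp(z),\wp'(z)\}$; this is exactly what \eqref{generatorsAn+10cor} asserts. Since $\wp'(z)=-2z^{-3}+O(z)$ and $\wp(z)=z^{-2}+O(z^2)$ carry no mutual $z^{-2}$, $z^{-1}$ or $z^{0}$ contamination, matching Laurent coefficients at $z=0$ identifies $\varphi_3^{A_2}$ with the $z^{-3}$-coefficient of $\lambda^{A_2}$, $\varphi_2^{A_2}$ with the $z^{-2}$-coefficient, and $\varphi_0^{A_2}$ with the $z^{0}$-coefficient.

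First I would compute the leading coefficient, $\varphi_3^{A_2}=\lim_{z\to0}z^3\lambda^{A_2}(z)$. Using $\theta_1(z)=\theta_1'(0)z+O(z^3)$ and the oddness $\theta_1(-2v_2)=-\theta_1(2v_2)$, this limit is a product of three theta functions over $\theta_1'(0)^3$. Comparing with the explicit formula \eqref{thetaformulainvariant} for $\varphi_1^{\tilde A_1}$ and with the definition of $\varphi_2^{A_1}$, the single $\theta_1(2v_2)$ in the denominator of $\varphi_1^{\tilde A_1}$ cancels one of the two factors $\theta_1(2v_2)$ in the numerator of $\varphi_2^{A_1}$; after identifying the $u$-coordinates and absorbing the overall sign from $\theta_1(-2v_2)=-\theta_1(2v_2)$, the exponential prefactors combine correctly and one obtains $\varphi_3^{A_2}=\varphi_1^{\tilde A_1}\varphi_2^{A_1}$.

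Next I would push the expansion one order further. Writing $N(z)=\theta_1(z+v_0+v_2)\theta_1(z-v_0+v_2)\theta_1(z-2v_2)$, the $z^{-2}$-coefficient of $\lambda^{A_2}$ equals $\varphi_3^{A_2}\cdot N'(0)/N(0)$, and $N'(0)/N(0)$ is the sum of the three logarithmic derivatives $\theta_1'/\theta_1$ at the shifts $v_0+v_2$, $-v_0+v_2$, $-2v_2$. Invoking the relation $\theta_1'(w,\tau)/\theta_1(w,\tau)=\zeta(w,\tau)+4\pi\mathrm{i}\,g_1(\tau)w$ (which follows from $\sigma(w,\tau)=\theta_1(w,\tau)\mathrm{e}^{-2\pi\mathrm{i}g_1(\tau)w^2}/\theta_1'(0,\tau)$, as in the proof of Lemma \ref{jacobiform1 Jtildea1}) together with the fact that the three shifts sum to zero, the linear $g_1(\tau)$-terms cancel and
\begin{gather*}
N'(0)/N(0)=\zeta(v_0+v_2,\tau)-\zeta(v_0-v_2,\tau)-\zeta(2v_2,\tau).
\end{gather*}
Using \eqref{thetaformulainvariant}, which gives $\varphi_0^{\tilde A_1}/\varphi_1^{\tilde A_1}=\zeta(v_0+v_2)-\zeta(v_0-v_2)-2\zeta(v_2)$, I would split off the purely $v_2$-dependent remainder $a_2(v_2,\tau):=2\zeta(v_2,\tau)-\zeta(2v_2,\tau)$, which is elliptic in $v_2$ by the quasi-periodicity of $\zeta$. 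This yields $\varphi_2^{A_2}=\varphi_0^{\tilde A_1}\varphi_2^{A_1}+a_2(v_2,\tau)\varphi_1^{\tilde A_1}\varphi_2^{A_1}$.

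Finally, rather than grind through $N'''(0)$ for the constant term, I would evaluate the identity \eqref{generatorsAn+10cor} at the point $z=2v_2$, which is a zero of $\lambda^{A_2}$ (it kills $\theta_1(z-2v_2)$) and, for generic $v_2$, a regular point of $\wp$ and $\wp'$. This gives $\varphi_0^{A_2}=\tfrac12\varphi_3^{A_2}\wp'(2v_2,\tau)-\varphi_2^{A_2}\wp(2v_2,\tau)$; substituting the two relations already obtained and collecting terms produces $\varphi_0^{A_2}=a_0(v_2,\tau)\varphi_0^{\tilde A_1}\varphi_2^{A_1}+b_0(v_2,\tau)\varphi_1^{\tilde A_1}\varphi_2^{A_1}$ with $a_0(v_2,\tau)=-\wp(2v_2,\tau)$ and $b_0(v_2,\tau)=\tfrac12\wp'(2v_2,\tau)-a_2(v_2,\tau)\wp(2v_2,\tau)$. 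All three coefficients $a_2$, $a_0$, $b_0$ are elliptic functions of $v_2$ with poles only at the half-periods, hence lie in $E_{\bullet,\bullet}$ by Lemma \ref{ringcoef Jtildea1}. The one genuinely delicate step is the subleading computation: one must verify that the non-elliptic linear correction relating $\theta_1'/\theta_1$ to $\zeta$ drops out (it does, precisely because the three theta-arguments sum to zero) so that the leftover coefficient is a bona fide elliptic function of $v_2$ alone; once this is in place, the constant term follows for free from the evaluation at $z=2v_2$.
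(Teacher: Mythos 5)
Your proof is correct (up to the same level of sign bookkeeping the paper itself is loose about), but it takes a genuinely different route from the paper's. The paper's proof pivots on the explicit factorization
\begin{gather*}
\frac{\lambda^{A_{2}}}{\lambda^{\tilde A_1}}
=\frac{\theta_1(z-2v_{2},\tau)\,\theta_1(z+2v_{2},\tau)}{\theta_1^2(z,\tau)}\,{\rm e}^{2\pi {\rm i}(-u_2+u_1)}
=\varphi_2^{A_1}\big(\wp(z,\tau)-\wp(2v_2,\tau)\big),
\end{gather*}
so that $\lambda^{A_2}$ becomes the product of the $\zeta$-representation \eqref{superpotentialA1} of $\lambda^{\tilde A_1}$ with an explicit elliptic factor of degree two; Laurent-expanding this product against the $\wp$-representation \eqref{generatorsAn+10cor} yields all three identities at once. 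You never invoke this ratio: instead you read $\varphi_3^{A_2}$ and $\varphi_2^{A_2}$ directly off the $z^{-3}$ and $z^{-2}$ coefficients of the theta-quotient (the subleading one via $N'(0)/N(0)$, with the correct and essential observation that the non-elliptic linear correction relating $\theta_1'/\theta_1$ to $\zeta$ cancels because the three theta-arguments sum to zero), and you obtain $\varphi_0^{A_2}$ by evaluating \eqref{generatorsAn+10cor} at the zero $z=2v_2$, which neatly bypasses any third-order expansion. Your route is more self-contained — it avoids the product formula $\theta_1(z-a)\theta_1(z+a)/\theta_1^2(z)\propto\wp(z)-\wp(a)$ — and it delivers the coefficients explicitly, namely $a_2=2\zeta(v_2)-\zeta(2v_2)$, $a_0=-\wp(2v_2)$, $b_0=\frac{1}{2}\wp'(2v_2)-a_2\wp(2v_2)$, which the paper leaves unspecified; the paper's route is shorter and exhibits the structural content of the lemma (passing from $\lambda^{\tilde A_1}$ to $\lambda^{A_2}$ is multiplication by $\varphi_2^{A_1}(\wp-\wp(2v_2))$, which is precisely why each $A_2$-generator is an $E_{\bullet,\bullet}$-combination of $\varphi_0^{\tilde A_1}\varphi_2^{A_1}$ and $\varphi_1^{\tilde A_1}\varphi_2^{A_1}$). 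One caveat: with the paper's formulas taken literally, $\lim_{z\to0}z^3\lambda^{A_2}$ carries the factor $\theta_1(-2v_2)=-\theta_1(2v_2)$, so your first identity holds only up to a sign; this is the same sign the paper itself silently drops, and it is immaterial here since the lemma's coefficients are determined only modulo such normalizations.
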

\begin{proof}
Note the following relation
\begin{gather*}
\frac{\lambda^{A_{2}}}{\lambda^{\tilde A_1}}=\frac{\theta_1(z-2v_{2},\tau)\theta_1(z+2v_{2}),\tau} {\theta_1^2(z,\tau)}{\rm e}^{2\pi {\rm i}(-u_2+u_1)}
=\varphi_2^{A_1}\wp(z,\tau)-\varphi_2^{A_1}\wp(2v_2,\tau).
\end{gather*}
Hence,
\begin{gather*}
-\frac{\varphi_{3}^{A_{2}}}{2}\wp^{\prime}(z,\tau)+\varphi_{2}^{A_{2}}\wp(z,\tau)+\varphi_{0}^{A_{2}}
=\big(\varphi_{1}^{\tilde A_1}[\zeta(z,\tau)-\zeta(z+2v_{2},\tau)+2\zeta(v_2,\tau)]+\varphi_0^{\tilde A_n}\big)
\\ \hphantom{-\frac{\varphi_{3}^{A_{2}}}{2}\wp^{\prime}(z,\tau)+ \varphi_{2}^{A_{2}}\wp(z,\tau)+\varphi_{0}^{A_{2}}=}
{}\times\big(\varphi_2^{A_1}\wp(z,\tau)-\varphi_2^{A_1}\wp(2v_{2},\tau)\big).
\end{gather*}
Then, the desired result is obtained by doing a Laurent expansion in the variable $z$ on both sides of the equality.
\end{proof}

\begin{Corollary}
\begin{gather*}
E_{\bullet,\bullet}\big[\varphi_0^{\tilde A_1 },\varphi_1^{\tilde A_1 }\big]=E_{\bullet,\bullet}\left[\frac{\varphi_0^{A_{2}}}{\varphi_2^{A_1}}, \frac{\varphi_2^{A_{2}}}{\varphi_2^{A_1}},\frac{\varphi_3^{A_{2}}}{\varphi_2^{A_1}}\right].
\end{gather*}
\end{Corollary}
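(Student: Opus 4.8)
The plan is to deduce the claimed equality of $E_{\bullet,\bullet}$-algebras directly from the change-of-basis relations recorded in the preceding Lemma, by proving the two inclusions separately. The key point is that, after dividing those relations through by $\varphi_2^{A_1}$, they become \emph{linear} in $\varphi_0^{\tilde A_1}$ and $\varphi_1^{\tilde A_1}$ with all coefficients in $E_{\bullet,\bullet}$; the first two of them form an invertible (unipotent triangular) system, so one passes back and forth between the two sets of generators by pure linear algebra over the coefficient ring.

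First I would establish that each generator of the right-hand algebra lies in the left-hand one. Dividing the three identities of the previous Lemma by $\varphi_2^{A_1}$ gives
\begin{gather*}
\frac{\varphi_3^{A_2}}{\varphi_2^{A_1}}=\varphi_1^{\tilde A_1},\qquad
\frac{\varphi_2^{A_2}}{\varphi_2^{A_1}}=\varphi_0^{\tilde A_1}+a_2\varphi_1^{\tilde A_1},\qquad
\frac{\varphi_0^{A_2}}{\varphi_2^{A_1}}=a_0\varphi_0^{\tilde A_1}+b_0\varphi_1^{\tilde A_1}.
\end{gather*}
Since the coefficients $a_0$, $a_2$, $b_0$ are elliptic functions of $v_2$, they belong to $E_{\bullet,\bullet}$ by Lemma~\ref{ringcoef Jtildea1}; hence each ratio on the left is an $E_{\bullet,\bullet}$-linear combination of $\varphi_0^{\tilde A_1}$ and $\varphi_1^{\tilde A_1}$, and the inclusion of the right-hand algebra into the left-hand one follows.

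For the reverse inclusion I would invert the first two of these relations. The first already gives $\varphi_1^{\tilde A_1}=\varphi_3^{A_2}/\varphi_2^{A_1}$, and substituting this into the second yields
\begin{gather*}
\varphi_0^{\tilde A_1}=\frac{\varphi_2^{A_2}}{\varphi_2^{A_1}}-a_2\frac{\varphi_3^{A_2}}{\varphi_2^{A_1}},
\end{gather*}
with $a_2\in E_{\bullet,\bullet}$. Thus both fundamental forms $\varphi_0^{\tilde A_1}$ and $\varphi_1^{\tilde A_1}$ lie in the algebra generated over $E_{\bullet,\bullet}$ by the three ratios, giving the opposite inclusion. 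Combining the two inclusions proves the equality.

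There is essentially no analytic obstacle remaining, since the genuine work was carried out in the previous Lemma (where the product-of-theta identities were Laurent-expanded in $z$ to produce the relations) and in Lemma~\ref{ringcoef Jtildea1} (which certifies that the coefficient functions are legitimate elements of $E_{\bullet,\bullet}$). The only point deserving a moment's care is the bookkeeping of modular weights: one must check that dividing by $\varphi_2^{A_1}$ leaves each quotient a bona fide element of the graded ring. This is automatic, however, because $\varphi_3^{A_2}/\varphi_2^{A_1}=\varphi_1^{\tilde A_1}$ and the other two quotients are likewise identified with honest $\tilde A_1$-invariant Jacobi forms, whose weights are already fixed by the preceding corollaries.
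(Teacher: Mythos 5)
Your proposal is correct and takes essentially the same route as the paper: the paper states this corollary without a separate proof, as an immediate consequence of the preceding Lemma, and your argument---dividing the Lemma's three relations by $\varphi_2^{A_1}$ so they become $E_{\bullet,\bullet}$-linear in $\varphi_0^{\tilde A_1}$, $\varphi_1^{\tilde A_1}$, then inverting the unipotent triangular pair to get both inclusions---is exactly the intended justification. Note only that you have (correctly) repaired the paper's typos in the Lemma, reading $\varphi_{j}^{\tilde A_{n}}$ and $\varphi_{2}^{\tilde A_{1}}$ as $\varphi_{1}^{\tilde A_{1}}$, which is what the argument requires.
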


Moreover, we have the following lemma:
\begin{Lemma}
Let be $\varphi\in J_{\bullet,\bullet,m}^{\tilde A_1}$, then $\varphi \in E_{\bullet,\bullet}\left[\frac{\varphi_0^{A_{2}}}{\varphi_2^{A_1}}, \frac{\varphi_2^{A_{2}}}{\varphi_2^{A_1}},\frac{\varphi_3^{A_{2}}}{\varphi_2^{A_1}}\right]$.
\end{Lemma}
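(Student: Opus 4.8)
The plan is to reduce, through the Corollary just stated, to proving that every $\varphi\in J_{\bullet,\bullet,m}^{\tilde A_1}$ belongs to $E_{\bullet,\bullet}\big[\varphi_0^{\tilde A_1},\varphi_1^{\tilde A_1}\big]$, and then to build such an expression by hand from the elliptic-function structure of $\varphi$ in the variable $v_0$. By linearity I may take $\varphi$ of pure weight $k$ and order $l$; the index-$0$ case is Lemma~\ref{ringcoef Jtildea1}, so assume $m\ge 1$. First I would pass to the quotient $\Phi:=\varphi/\big(\varphi_1^{\tilde A_1}\big)^m$. Since numerator and denominator both have index $m$, Remark~\ref{Remark Zagier Ja1} shows $\Phi$ has index $0$ and hence does not depend on $u$; as a ratio of translation-invariant forms it is elliptic in $v_0$ and in $v_2$, and as a ratio of $A_1$-invariant forms it is even in $v_0$. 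From the product formula~\eqref{varphi1 def2 Jtildea1} the only zeros of $\varphi_1^{\tilde A_1}$ in $v_0$ are simple zeros at $v_0=\pm v_2$, while condition~3 of Definition~\ref{meromorphic jacobi forms} makes $\varphi$ holomorphic in $v_0$. Thus $v_0\mapsto\Phi$ is an even elliptic function whose only poles lie at $v_0=\pm v_2\bmod\mathbb{Z}\oplus\tau\mathbb{Z}$, of order at most $m$.

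Next I would invoke the structure of even elliptic functions: for fixed $v_2,\tau$ such a function is a rational function of $\wp(v_0,\tau)$, and holomorphy at $v_0=0$ together with the pole bound at $v_0=\pm v_2$ forces
\[
\Phi=\sum_{j=0}^{m}\tilde c_j(v_2,\tau)\left(\frac{\varphi_0^{\tilde A_1}}{\varphi_1^{\tilde A_1}}\right)^{\!j},
\]
where I use the identity $\varphi_0^{\tilde A_1}/\varphi_1^{\tilde A_1}=\wp'(v_2,\tau)/\big(\wp(v_0,\tau)-\wp(v_2,\tau)\big)$ established above and where the $\tilde c_j$ are functions of $v_2,\tau$ only. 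Multiplying by $\big(\varphi_1^{\tilde A_1}\big)^m$ and using $\varphi_1^{\tilde A_1}\cdot\big(\varphi_0^{\tilde A_1}/\varphi_1^{\tilde A_1}\big)=\varphi_0^{\tilde A_1}$ then yields
\[
\varphi=\sum_{j=0}^{m}\tilde c_j(v_2,\tau)\,\big(\varphi_0^{\tilde A_1}\big)^{j}\big(\varphi_1^{\tilde A_1}\big)^{m-j}.
\]

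It then remains to check $\tilde c_j\in E_{\bullet,\bullet}$. Near $v_0=v_2$ one has $\varphi_0^{\tilde A_1}/\varphi_1^{\tilde A_1}\sim (v_0-v_2)^{-1}$, so the $\tilde c_j$ are recovered from the principal part of $\Phi$ at $v_0=v_2$ by a triangular system with unit diagonal; in particular they are honest functions of $(v_2,\tau)$ and the extraction creates no new poles. Ellipticity of $\Phi$ and of $\varphi_0^{\tilde A_1}/\varphi_1^{\tilde A_1}$ in $v_2$, together with uniqueness of the expansion, gives ellipticity of $\tilde c_j$ in $v_2$; comparing the ${\rm SL}_2(\mathbb{Z})$-weights of $\Phi$ (weight $k+m$) and of $\varphi_0^{\tilde A_1}/\varphi_1^{\tilde A_1}$ (weight $1$) shows $\tilde c_j$ has weight $k+m-j$, so that each summand $\tilde c_j\big(\varphi_0^{\tilde A_1}\big)^{j}\big(\varphi_1^{\tilde A_1}\big)^{m-j}$ has weight $k$. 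Crucially, because $\tilde c_j$ is independent of $v_0$, it cannot carry a pole at the moving locus $v_2=\pm v_0$; hence its only $v_2$-poles are at the fixed half-lattice points $0,\tfrac12,\tfrac{\tau}{2},\tfrac{1+\tau}{2}$, of finite order, and it inherits the weak boundedness as $\Im\tau\to+\infty$ from $\varphi$ and $\varphi_1^{\tilde A_1}$. Therefore $\tilde c_j\in E_{\bullet,\bullet}$, and the Corollary converts the displayed expression into membership in $E_{\bullet,\bullet}\big[\tfrac{\varphi_0^{A_2}}{\varphi_2^{A_1}},\tfrac{\varphi_2^{A_2}}{\varphi_2^{A_1}},\tfrac{\varphi_3^{A_2}}{\varphi_2^{A_1}}\big]$.

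I expect the main obstacle to be precisely this pole-bookkeeping in $v_2$: dividing by $\big(\varphi_1^{\tilde A_1}\big)^m$ introduces, a priori, poles at the $v_0$-dependent points $v_2=\pm v_0$, and one must ensure they cancel so that the coefficients remain in $E_{\bullet,\bullet}$ with order compatible with the grading $E_{k+m-j,\bullet}$. The clean way around it is the observation that the $\tilde c_j$ do not depend on $v_0$, which forces every moving pole to disappear and reduces the order count to the four fixed half-lattice points.
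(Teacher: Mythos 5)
Your proposal is correct, and it reaches the result by a genuinely different route in its second half. Both arguments begin identically: divide by $\big(\varphi_1^{\tilde A_1}\big)^m$ to obtain an index-zero, $u$-independent function, doubly elliptic and even in $v_0$, whose only $v_0$-poles sit at $v_0=\pm v_2$ with order at most $m$. From there the paper performs a Mittag--Leffler expansion in $\zeta,\wp,\wp',\dots$ centred at $v_0\mp v_2$ (this is why its formula~\eqref{eq lemma 1 for chevalley} needs the $i=-1$ term), symmetrizes by $A_1$-invariance, and then \emph{completes to an $A_2$-invariant expression} by adding and subtracting $\sum_i a^i\wp^{(i)}(2v_2)$; multiplying back produces holomorphic Jacobi forms of type $A_2$, so Bertola's Chevalley theorem~\eqref{generatorsAn+10cor} for $\Ja(A_2)$ can be invoked, landing directly in $E_{\bullet,\bullet}\big[\varphi_0^{A_2}/\varphi_2^{A_1},\varphi_2^{A_2}/\varphi_2^{A_1},\varphi_3^{A_2}/\varphi_2^{A_1}\big]$. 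You instead exploit evenness fully: partial fractions in $\wp(v_0)$ (no $\zeta$-terms are needed, since evenness forces the residues at $\pm v_2$ to cancel) give a polynomial of degree $m$ in $\varphi_0^{\tilde A_1}/\varphi_1^{\tilde A_1}=\wp'(v_2)/(\wp(v_0)-\wp(v_2))$, hence $\varphi\in E_{\bullet,\bullet}\big[\varphi_0^{\tilde A_1},\varphi_1^{\tilde A_1}\big]$, and you convert to the stated ring through the Corollary preceding the Lemma (this is legitimate: that Corollary is proved independently, so there is no circularity). The trade-off: the paper's completion trick stays inside the holomorphic theory and leans on the known $\Ja(A_2)$ result, keeping the statement's $A_2$-generators in view; your route is more elementary and self-contained, it proves the inclusion $J_{\bullet,\bullet,\bullet}^{\tilde A_1}\subset E_{\bullet,\bullet}\big[\varphi_0^{\tilde A_1},\varphi_1^{\tilde A_1}\big]$ needed for Theorem~\ref{chevalley Jtildea1} directly, and it is actually more explicit than the paper on the one delicate point both proofs face, namely why the $v_2$-coefficients have poles only at the four half-lattice points (the paper simply asserts that $c(v_2,\tau)$, $g(v_2,\tau)$ are elliptic, without locating their poles); your triangular-extraction argument, made rigorous by noting that for each fixed $v_2$ off the half-lattice the decomposition is nondegenerate and depends holomorphically on $v_2$, settles this cleanly.
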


\begin{proof}
Let be $\varphi\in J_{\bullet,\bullet,m}^{\tilde A_1}$, then the function $\frac{\varphi}{\big(\varphi_1^{\tilde A_1}\big)^m}$ is an elliptic function on the variables $(v_0,v_2)$ with poles on $v_0-v_2$, $v_0+v_2$, $2v_2$ due to the zeros of $\varphi_1^{\tilde A_1}$ and the poles of $\varphi$, which are by definition in $2v_2$. Expanding the function $\frac{\varphi}{\big(\varphi_1^{\tilde A_1}\big)^m}$ in the variables $v_0$, $v_2$ we get{\samepage
\begin{gather}
\label{eq lemma 1 for chevalley}
\frac{\varphi}{\big(\varphi_1^{\tilde A_1}\big)^m}=\sum_{i=-1}^{m} a^i\wp^{(i)}(v_0+v_{2})+\sum_{i=-1}^{m} b^i\wp^{(i)}(-v_0+v_{2})+c(v_2,\tau),
\end{gather}
where $\wp^{-1}(v):=\zeta(v)$, and $c(v_2,\tau)$ is an elliptic function in the variable $v_2$.}

However, the function $\frac{\varphi}{\big(\varphi_1^{\tilde A_1}\big)^m}$ is invariant under the permutations of the variables $v_0$, so the equation~\eqref{eq lemma 1 for chevalley} is
\begin{gather}
\label{eq lemma 2 for chevalley}
\frac{\varphi}{\big(\varphi_1^{\tilde A_1}\big)^m}=\sum_{i=-1}^{m} a^i\big(\wp^{(i)}(v_0+v_{2})+\wp^{(i)}(-v_0+v_{2})\big)+c(v_2,\tau).
\end{gather}

Now we complete this function to $A_{2}$-invariant function by summing and subtracting the following function in equation~\eqref{eq lemma 2 for chevalley}
\begin{gather*}
f(v_{2},\tau)=\sum_{i=-1}^{m} a^i\wp^{(i)}(2v_2).
\end{gather*}
Hence,
\begin{gather}
\label{eq lemma 3 for chevalley}
\frac{\varphi}{\big(\varphi_1^{\tilde A_1}\big)^m}=\sum_{i=-1}^{m} a^i\big(\wp^{(i)}(v_0+v_{2})+\wp^{(i)}(-v_0+v_{2})+\wp^{(i)}(2v_2)\big)+g(v_2,\tau).
\end{gather}

Multiplying both side of the equation~\eqref{eq lemma 3 for chevalley} by $\varphi_1^{A_1}$, we get
\begin{gather*}
\varphi= \left(\sum_{i=-1}^{m}\! a^i\big(\wp^{(i)}(v_0+v_{2})+\wp^{(i)}(-v_0+v_{2})+\wp^{(i)}(2v_2)\big)\! \right) \big(\varphi_3^{A_2}\big)^m +g(v_2,\tau) \big(\varphi_3^{A_2}\big)^m\!.\!\!
\end{gather*}
To finish the proof, we will show that
\begin{gather*}
 \left(\sum_{i=-1}^{m} a^i\big(\wp^{(i)}(v_0+v_{2})+\wp^{(i)}(-v_0+v_{2})+\wp^{(i)}(2v_2)\big) \right) \big(\varphi_3^{A_2}\big)^m
\end{gather*}
is a weak holomorphic Jacobi form of type $A_2$. To finish the proof, note the following:
\begin{enumerate}\itemsep=0pt
\item The functions
\begin{gather}\label{gather of last chevalley proof}
\big(\varphi_{3}^{A_2}\big)^m\big(\wp^{(i)}(v_0+v_{2})+\wp^{(i)}(-v_0+v_{2})+\wp^{(i)}(2v_2)\big)
\end{gather}
 are $A_{2}$-invariant by construction.
\item The functions~\eqref{gather of last chevalley proof} are invariant under the action of $(\mathbb{Z}\oplus\tau\mathbb{Z})^{2}$, because $\varphi_{3}^{A_{2}}$ is invariant, and
\begin{gather}\label{gather of last chevalley proof 1}
\wp^{(i)}(v_0+v_{2})+\wp^{(i)}(-v_0+v_{2})+\wp^{(i)}(2v_2)
\end{gather}
 are elliptic functions.
\item The functions~\eqref{gather of last chevalley proof} are equivariant under the action of ${\rm SL}_2(\mathbb{Z})$, because $\varphi_{3}^{A_{2}}$ is equivariant, and
\eqref{gather of last chevalley proof 1} are elliptic functions.
\item The function $\varphi_{3}^{A_{2}}$ has zeros on~$v_0-v_2$, $v_0+v_2$, $2v_2$ of order $m$, and
\eqref{gather of last chevalley proof 1} has poles on~$v_0-v_2$, $v_0+v_2$, $2v_2$ of order $i+2\leq m$. So, the functions~\eqref{gather of last chevalley proof} are holomorphic.
\end{enumerate}
Hence,
\begin{gather*}
\varphi \in E_{\bullet,\bullet}\left[\frac{\varphi_0^{A_{2}}} {\varphi_2^{A_1}},\frac{\varphi_2^{A_{2}}}{\varphi_2^{A_1}},\frac{\varphi_3^{A_{2}}}{\varphi_2^{A_1}}\right].
\tag*{\qed}
\end{gather*}\renewcommand{\qed}{}
\end{proof}

At this stage, the principal theorem can be stated in a precise way as follows.

\begin{Theorem}\label{chevalley Jtildea1}
The trigraded algebra of Jacobi forms $J_{\bullet,\bullet,\bullet}^{\Ja(\tilde A_1)}=\bigoplus _{k,l,m}J_{k,l,m}^{\tilde A_1}$ is freely generated by $2$ fundamental Jacobi forms $\big(\varphi_0^{\tilde A_1},\varphi_1^{\tilde A_1}\big)$ over the graded ring $E_{\bullet,\bullet}$
\begin{gather*}
J_{\bullet,\bullet,\bullet}^{\Ja(\tilde A_1)}=E_{\bullet,\bullet}\left [\varphi_0^{\tilde A_1},\varphi_1^{\tilde A_1}\right ].
\end{gather*}
\end{Theorem}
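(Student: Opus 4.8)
The plan is to establish the statement in two independent halves: first that the pair $\big(\varphi_0^{\tilde A_1},\varphi_1^{\tilde A_1}\big)$ \emph{generates} $J_{\bullet,\bullet,\bullet}^{\Ja(\tilde A_1)}$ as an $E_{\bullet,\bullet}$-algebra, and second that this generation is \emph{free}, i.e.\ that the two forms are algebraically independent over $E_{\bullet,\bullet}$. Together these say exactly that the natural homomorphism $E_{\bullet,\bullet}[X,Y]\to J_{\bullet,\bullet,\bullet}^{\Ja(\tilde A_1)}$, $X\mapsto\varphi_0^{\tilde A_1}$, $Y\mapsto\varphi_1^{\tilde A_1}$, is an isomorphism.

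For the generation I would first reduce to a fixed index. The index $m$ is read off by the Euler operator $E$ in~\eqref{jacobiform} and is additive under multiplication, so the trigraded algebra splits as $\bigoplus_m J_{\bullet,\bullet,m}^{\tilde A_1}$ and it suffices to handle a homogeneous form $\varphi\in J_{\bullet,\bullet,m}^{\tilde A_1}$. The expansion Lemma proved just above shows that every such $\varphi$ lies in $E_{\bullet,\bullet}\big[\varphi_0^{A_2}/\varphi_2^{A_1},\varphi_2^{A_2}/\varphi_2^{A_1},\varphi_3^{A_2}/\varphi_2^{A_1}\big]$, and the Corollary identifying this ring with $E_{\bullet,\bullet}\big[\varphi_0^{\tilde A_1},\varphi_1^{\tilde A_1}\big]$ then gives $\varphi\in E_{\bullet,\bullet}\big[\varphi_0^{\tilde A_1},\varphi_1^{\tilde A_1}\big]$. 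Summing over $m$ yields the inclusion $J_{\bullet,\bullet,\bullet}^{\Ja(\tilde A_1)}\subseteq E_{\bullet,\bullet}\big[\varphi_0^{\tilde A_1},\varphi_1^{\tilde A_1}\big]$. The reverse inclusion is immediate: $E_{\bullet,\bullet}=J_{\bullet,\bullet,0}^{\Ja(\tilde A_1)}$ sits inside the algebra, both generators are genuine Jacobi forms (index $1$, weights $0$ and $-1$, order $0$), and the defining conditions are stable under products and $E_{\bullet,\bullet}$-linear combinations, so the subring they generate is contained in $J_{\bullet,\bullet,\bullet}^{\Ja(\tilde A_1)}$.

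For the freeness I would invoke the algebraic independence Lemma directly. Any relation in $E_{\bullet,\bullet}\big[\varphi_0^{\tilde A_1},\varphi_1^{\tilde A_1}\big]$ can be sorted by index, and since the generators each carry index $1$, the index-$m$ component $\sum_{a+b=m}c_{ab}\,(\varphi_0^{\tilde A_1})^a(\varphi_1^{\tilde A_1})^b$ with $c_{ab}\in E_{\bullet,\bullet}$ must vanish separately. Dividing by $(\varphi_1^{\tilde A_1})^m$ converts this into a polynomial identity of degree $m$ in the single ratio $\varphi_0^{\tilde A_1}/\varphi_1^{\tilde A_1}=\wp'(v_2,\tau)/(\wp(v_0,\tau)-\wp(v_2,\tau))$ with coefficients in $E_{\bullet,\bullet}$. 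Because this ratio genuinely depends on $v_0$ while, by Lemma~\ref{ringcoef Jtildea1}, every element of $E_{\bullet,\bullet}$ depends only on $(v_2,\tau)$, the ratio is transcendental over $E_{\bullet,\bullet}$, so all the $c_{ab}$ vanish. Hence the presentation is free and the theorem follows.

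Given the machinery already assembled, this final step is essentially formal bookkeeping; the genuinely hard input is the expansion Lemma that I quote in the generation half. There the difficulty is the lift of a meromorphic $\tilde A_1$-form to a \emph{holomorphic} $A_2$-invariant form: one Laurent-expands $\varphi/(\varphi_1^{\tilde A_1})^m$ in $v_0$, uses the $A_1$-symmetry to symmetrise the principal parts at $v_0=\pm v_2$, completes the symmetric blocks $\wp^{(i)}(v_0+v_2)+\wp^{(i)}(-v_0+v_2)+\wp^{(i)}(2v_2)$ to $A_2$-invariants, and must then verify the pole-order count ensuring that the poles of order $i+2\le m$ are killed by the order-$m$ zeros of $(\varphi_3^{A_2})^m$, so that Bertola's structure theorem for $A_2$-forms applies. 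Once that holomorphicity bound is secured, the theorem above is an immediate consequence of the two cited results.
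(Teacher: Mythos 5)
Your proposal is correct and follows essentially the same route as the paper: the generation half is exactly the paper's chain of inclusions $J_{\bullet,\bullet,\bullet}^{\tilde A_1}\subset E_{\bullet,\bullet}\big[\varphi_0^{A_2}/\varphi_2^{A_1},\varphi_2^{A_2}/\varphi_2^{A_1},\varphi_3^{A_2}/\varphi_2^{A_1}\big]=E_{\bullet,\bullet}\big[\varphi_0^{\tilde A_1},\varphi_1^{\tilde A_1}\big]\subset J_{\bullet,\bullet,\bullet}^{\tilde A_1}$, resting on the expansion Lemma and the identification Corollary, while the freeness half is the paper's separate algebraic independence Lemma via the $v_0$-dependence of the ratio $\wp'(v_2,\tau)/(\wp(v_0,\tau)-\wp(v_2,\tau))$. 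Your write-up is in fact more explicit than the paper's one-line proof about how these ingredients combine, but the mathematical content is identical.
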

\begin{proof}
\begin{gather*}
J_{\bullet,\bullet,\bullet}^{\tilde A_1}\subset E_{\bullet,\bullet}\left [\frac{\varphi_0^{A_{2}}}{\varphi_2^{A_1}},\frac{\varphi_2^{A_{2}}}{\varphi_2^{A_1}},\frac{\varphi_3^{A_{2}}}{\varphi_2^{A_1}}\right ]=E_{\bullet,\bullet}\left [\varphi_0^{\tilde A_1},\varphi_1^{\tilde A_1}\right ]\subset J_{\bullet,\bullet,\bullet}^{\tilde A_1}.
\tag*{\qed}
\end{gather*}
\renewcommand{\qed}{}
\end{proof}

\begin{Remark}
The structural difference between the Chevalley theorems of the groups $J(A_1)$ and $\Ja\big(\tilde A_1\big)$ lies in the ring of coefficients. The ring of coefficients of Jacobi forms with respect to $J(A_1)$ are modular forms, and the ring of coefficients of Jacobi forms with respect to~$\Ja\big(\tilde A_1\big)$
are the ring of elliptic functions with poles on $0, \frac{1}{2}, \frac{\tau}{2}, \frac{1+\tau}{2} \mod \mathbb{Z}\oplus\tau\mathbb{Z}$, for fixed~$\tau$.
See Lemma~\ref{ringcoef Jtildea1}.
\end{Remark}

\begin{Remark}The geometry of $\Omega^{\Ja(\tilde A_1)}/\Ja\big(\tilde A_1\big)$ is similar to $\Omega^{\Ja( A_1)}/\Ja( A_1)$. Indeed, the orbit space of $\Ja\big(\tilde A_1\big)$ is locally a line bundle over a family of two elliptic curves, $E_{\tau}/A_1\otimes E_{\tau}$, where the first one is quotient by $A_1$, and both are parametrised by $\mathbb{H}/{\rm SL}_2(\mathbb{Z})$.
\end{Remark}

\section[Frobenius structure on the orbit space of Ja(A1)]{Frobenius structure on the orbit space of $\boldsymbol{\Ja\big(\tilde A_1\big)}$}\label{Frobenius structure A1}

In this section, a Dubrovin--Frobenius manifold structure will be constructed on the orbit space of~$\Ja\big(\tilde A_1\big)$. More precisely, it will define the data $\big(\Omega^{\Ja(\tilde A_1}/\Ja\big(\tilde A_1\big),g^{*},e,E\big)$, with the intersection form $g^*$, unit vector field~$e$, and Euler vector field~$E$. This data will be written naturally in terms of the invariant functions of $\Ja\big(\tilde A_1\big)$. Thereafter, it will be proved that this data is enough to the construction of the Dubrovin--Frobenius structure.

\subsection{Intersection form}

The first step to be done is to construct the intersection form. It will be shown that such metric can be constructed by using just the data of the group $\Ja\big(\tilde A_1\big)$. The strategy is to~combine the intersection form of the group $\tilde A_1$ and $\Ja(A_1)$. Recall that the intersection form of the group~$\tilde A_1$~\cite{B.Dubrovin2, B.DubrovinandY.Zhang} is
\begin{gather*}
{\rm d}s^2=2{\rm d}v_0^2-2{\rm d}v_2^2,
\end{gather*}
and the intersection form of $\Ja(A_1)$~\cite{BertolaM.1, BertolaM.2, B.Dubrovin2} is
\begin{gather*}
{\rm d}s^2={\rm d}v_0^2+2{\rm d}u {\rm d}\tau.
\end{gather*}
Therefore, the natural candidate to be the intersection form of $\Ja\big(\tilde A_1\big)$ is
\begin{gather*}
{\rm d}s^2=2{\rm d}v_0^2-2{\rm d}v_2^2+2{\rm d}u {\rm d}\tau.
\end{gather*}
The following lemma proves that this metric is invariant metric of the group $\Ja\big(\tilde A_1\big)$. To be precise, the metric will be invariant under the action of $A_1$, and translations, and equivariant under the action of ${\rm SL}_2(\mathbb{Z})$.

\begin{Lemma}
The metric
\begin{gather}\label{metrich100}
{\rm d}s^2=2{\rm d}v_0^2-2{\rm d}v_2^2+2{\rm d}u {\rm d}\tau
\end{gather}
is invariant under the transformations~\eqref{mon 1}, \eqref{mon 2}.
Moreover, the transformations~\eqref{mon 3} deter\-mine a conformal transformation of the metric ${\rm d}s^2$, i.e.,
\begin{gather*}
2{\rm d}v_0^2-2{\rm d}v_2^2+2{\rm d}u {\rm d}\tau\mapsto \frac{2{\rm d}v_0^2-2{\rm d}v_2^2+2{\rm d}u {\rm d}\tau}{(c\tau+d)^2}.
\end{gather*}
\end{Lemma}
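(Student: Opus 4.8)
The plan is to verify each of the three claims by substituting the induced differentials directly into~\eqref{metrich100}, exploiting the fact that the coefficients in the $u$-components of~\eqref{mon 2} and~\eqref{mon 3} are designed exactly to absorb the unwanted cross terms. The $A_1$-action~\eqref{mon 1} is immediate: it sends $v_0\mapsto -v_0$ and fixes $v_2$, $u$, $\tau$, so ${\rm d}v_0\mapsto -{\rm d}v_0$ while the other differentials are unchanged; since ${\rm d}v_0$ enters~\eqref{metrich100} only through ${\rm d}v_0^2$, the metric is manifestly preserved.

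For the translation~\eqref{mon 2} I would record ${\rm d}v_0\mapsto {\rm d}v_0+\lambda_0{\rm d}\tau$, ${\rm d}v_2\mapsto {\rm d}v_2+\lambda_2{\rm d}\tau$, ${\rm d}\tau\mapsto{\rm d}\tau$, together with ${\rm d}u\mapsto {\rm d}u-2\lambda_0{\rm d}v_0+2\lambda_2{\rm d}v_2+(\lambda_2^2-\lambda_0^2){\rm d}\tau$. Substituting and expanding, the leading part reproduces $2{\rm d}v_0^2-2{\rm d}v_2^2+2{\rm d}u\,{\rm d}\tau$; the cross terms $4\lambda_0{\rm d}v_0\,{\rm d}\tau-4\lambda_2{\rm d}v_2\,{\rm d}\tau$ coming from the two squares are cancelled by the contribution $2(-2\lambda_0{\rm d}v_0+2\lambda_2{\rm d}v_2){\rm d}\tau$ of the $u$-shift, and the residual $2(\lambda_0^2-\lambda_2^2){\rm d}\tau^2$ is cancelled by $2(\lambda_2^2-\lambda_0^2){\rm d}\tau^2$. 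Hence~\eqref{metrich100} is invariant.

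The main step is the ${\rm SL}_2(\mathbb{Z})$-action~\eqref{mon 3}. Setting $J:=c\tau+d$, so that ${\rm d}J=c\,{\rm d}\tau$, and using $ad-bc=1$, I would compute ${\rm d}\tau\mapsto J^{-2}{\rm d}\tau$, ${\rm d}v_i\mapsto J^{-1}{\rm d}v_i-cv_iJ^{-2}{\rm d}\tau$ for $i=0,2$, and, differentiating the shift $c(v_0^2-v_2^2)/J$, ${\rm d}u\mapsto {\rm d}u+2cJ^{-1}(v_0{\rm d}v_0-v_2{\rm d}v_2)-c^2J^{-2}(v_0^2-v_2^2){\rm d}\tau$. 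Substituting into~\eqref{metrich100}, every surviving monomial carries the common factor $J^{-2}$: the diagonal pieces assemble into $J^{-2}(2{\rm d}v_0^2-2{\rm d}v_2^2+2{\rm d}u\,{\rm d}\tau)$, the ${\rm d}v_i\,{\rm d}\tau$ cross terms cancel between the expansion of the squares and the ${\rm d}u$-contribution, and the ${\rm d}\tau^2$ terms cancel because $v_0^2-v_2^2-(v_0^2-v_2^2)=0$. This yields the asserted conformal factor $(c\tau+d)^{-2}$.

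The computations are elementary bookkeeping; the only place requiring care is the ${\rm SL}_2(\mathbb{Z})$ case, where the ${\rm d}\tau$-contributions produced by differentiating $J^{-1}$ and by differentiating the quadratic $u$-shift must be tracked simultaneously. The point is that the specific coefficient $c$ together with the relation $ad-bc=1$ forces all off-diagonal corrections to vanish, which is precisely why the $u$-component of the action was defined with the shift $c\langle v,v\rangle_{\tilde A_1}/\big(2(c\tau+d)\big)$.
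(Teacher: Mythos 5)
Your proof is correct and follows essentially the same route as the paper: transform the differentials under each generator of $\Ja\big(\tilde A_1\big)$, substitute into~\eqref{metrich100}, and check that the cross terms and ${\rm d}\tau^2$ terms cancel. In fact your bookkeeping in the ${\rm SL}_2(\mathbb{Z})$ case is more accurate than the paper's displayed formulas, which omit the factor $c$ in ${\rm d}v_i\mapsto (c\tau+d)^{-1}{\rm d}v_i-cv_i(c\tau+d)^{-2}{\rm d}\tau$ and write $c$ instead of $c^2$ in the ${\rm d}\tau$-coefficient of the transformed ${\rm d}u$ (typos, since with those coefficients the cancellations would not occur); your versions are the ones that make the conformal factor $(c\tau+d)^{-2}$ come out.
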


\begin{proof}Under~\eqref{mon 1}, \eqref{mon 2}, the differentials transform to
\begin{gather*}
{\rm d}v_0\mapsto-{\rm d}v_0,\qquad
{\rm d}v_0\mapsto {\rm d}v_0+\lambda_0{\rm d}\tau,\qquad
{\rm d}v_2\mapsto {\rm d}v_2+\lambda_2{\rm d}\tau,\\
{\rm d}u\mapsto {\rm d}u-\lambda_0^2{\rm d}\tau-2\lambda_0{\rm d}v_0+\lambda_2^2{\rm d}\tau+2\lambda_2{\rm d}v_2,\qquad
{\rm d}\tau\mapsto {\rm d}\tau.
\end{gather*}
Hence,
\begin{gather*}
{\rm d}v_0^2\mapsto dv^2_0,\qquad
{\rm d}v_0^2\mapsto {\rm d}v_0^2+2\lambda_0{\rm d}v_0{\rm d}\tau+\lambda_0^2{\rm d}\tau^2,\qquad
{\rm d}v_2^2\mapsto {\rm d}v_2^2+2\lambda_2{\rm d}v_2{\rm d}\tau+\lambda_2^2{\rm d}\tau^2,\\
2{\rm d}u {\rm d}\tau\mapsto 2{\rm d}u {\rm d}\tau-2\lambda_0^2{\rm d}\tau^2-4\lambda_0{\rm d}v_0{\rm d}\tau+2\lambda_2^2{\rm d}\tau^2+4\lambda_2{\rm d}v_2{\rm d}\tau.
\end{gather*}
So,
\begin{gather*}
2{\rm d}v_0^2-2{\rm d}v_2^2+2{\rm d}u {\rm d}\tau\mapsto 2{\rm d}v_0^2-2{\rm d}v_2^2+2{\rm d}u {\rm d}\tau.
\end{gather*}
Let us show that the metric has conformal transformation under the transformations~\eqref{mon 3}
\begin{gather*}
{\rm d}v_0\mapsto \frac{{\rm d}v_0}{c\tau+d}-\frac{v_0{\rm d}\tau}{(c\tau+d)^2},\qquad
{\rm d}v_2\mapsto \frac{{\rm d}v_2}{c\tau+d}-\frac{v_2{\rm d}\tau}{(c\tau+d)^2},\qquad
{\rm d}\tau\mapsto \frac{{\rm d}\tau}{(c\tau+d)^2},\\
{\rm d}u\mapsto {\rm d}u+\frac{c(2v_0{\rm d}v_0-2v_2{\rm d}v_2^2)}{c\tau+d}-\frac{c(v_0^2-v_2^2){\rm d}\tau}{(c\tau+d)^2}.
\end{gather*}
Then,
\begin{gather*}
{\rm d}v_0^2\mapsto \frac{{\rm d}v_0^2}{(c\tau+d)^2}-\frac{2v_0{\rm d}v_0{\rm d}\tau}{(c\tau+d)^3}+\frac{v_0^2{\rm d}\tau^2}{(c\tau+d)^4},\qquad
{\rm d}v_2^2\mapsto \frac{{\rm d}v_2^2}{(c\tau+d)^2}-\frac{2v_2{\rm d}v_2{\rm d}\tau}{(c\tau+d)^3}+\frac{v_2^2{\rm d}\tau^2}{(c\tau+d)^4},\\
2{\rm d}u {\rm d}\tau \mapsto \frac{2{\rm d}u {\rm d}\tau}{(c\tau+d)^2}+\frac{c(4v_0{\rm d}v_0-4v_2{\rm d}v_2){\rm d}\tau}{(c\tau+d)^3}-\frac{c(2v_0^2-2v_2^2){\rm d}\tau^2}{(c\tau+d)^4}.
\end{gather*}
Then,
\begin{gather*}
2{\rm d}v_0^2-2{\rm d}v_2^2+2{\rm d}u {\rm d}\tau\mapsto \frac{2{\rm d}v_0^2-2{\rm d}v_2^2+2{\rm d}u {\rm d}\tau}{(c\tau+d)^2}.
\tag*{\qed}
\end{gather*}
\renewcommand{\qed}{}
\end{proof}

\subsection{Euler and unit vector field}

The next step is to construct a two vector field, which is a intrinsic object of the orbit spa\-ce~$\Ja\big(\tilde A_1\big)$. The first one is the Euler vector
\begin{gather}\label{eulerh100}
E=-\frac{1}{2\pi {\rm i}}\frac{\partial}{\partial u},
\end{gather}
which was already defined in the last equation of~\eqref{jacobiform}. Therefore, it is an already intrinsic object, since it comes from the definition of meromorphic Jacobi forms associated with $\Ja\big(\tilde A_1\big)$. In the invariant coordinates, the vector field~\eqref{eulerh100} reads as
\begin{gather*}
E=\varphi_0\frac{\partial}{\partial \varphi_0}+\varphi_1\frac{\partial}{\partial \varphi_1}.
\end{gather*}
The second one is given by the coordinates $(\varphi_0,\varphi_1,v_2,\tau)$ as
\begin{gather}\label{unith100}
e=\frac{\partial}{\partial \varphi_0},
\end{gather}
and it is denoted by the unit vector field. This object is intrinsic to the orbit space of $\Ja\big(\tilde A_1\big)$, because it is written in terms of a meromorphic Jacobi forms associated with $\Ja\big(\tilde A_1\big)$.

\subsection{Flat coordinates of the Saito metric}

In order to construct the Dubrovin--Frobenius structure, it will be necessary to introduce the coordinates $\big(t^1,t^2,t^3,t^4\big)$.

\begin{Lemma}
There is a change of coordinates in $\Omega^{\Ja(\tilde A_1)}/\Ja\big(\tilde A_1\big)$, given by
\begin{gather*}
t^1=\varphi_0+2t^2\frac{\theta_1^{\prime}(v_2|\tau)}{\theta_1(v_2|\tau)},\qquad
t^2=\varphi_1,\qquad
t^3=v_2,\qquad
t^4=\tau.
\end{gather*}
\end{Lemma}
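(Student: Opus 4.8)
The plan is to read the claim as the assertion that $(t^1,t^2,t^3,t^4)$ is a genuine coordinate system on (a covering of) the orbit space, the substantive point being that these are the coordinates adapted to the Saito metric. The verification splits into an elementary part and a computational part. First I would dispose of the elementary part: since $t^2=\varphi_1$, $t^3=v_2$ and $t^4=\tau$ are three of the four invariant coordinates $(\varphi_0,\varphi_1,v_2,\tau)$ furnished by Theorem~\ref{chevalley Jtildea1}, and since $t^1=\varphi_0+2t^2\,\theta_1'(v_2|\tau)/\theta_1(v_2|\tau)$ differs from $\varphi_0$ only by a function of $(t^2,t^3,t^4)$, the Jacobian of the map $(\varphi_0,\varphi_1,v_2,\tau)\mapsto(t^1,t^2,t^3,t^4)$ is lower triangular with unit diagonal. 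Hence its determinant is $1$ wherever the right-hand sides are regular, that is away from the divisor $2v_2\in\mathbb{Z}\oplus\tau\mathbb{Z}$, so the map is a local biholomorphism and therefore a bona fide change of coordinates.

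The real content is that these are the flat coordinates of the Saito metric $\eta^{*}={\rm Lie}_e g^{*}$, with $g$ the intersection form~\eqref{metrich100} and $e=\partial/\partial\varphi_0$ as in~\eqref{unith100}. To establish this I would first rewrite $g^{*}$, the inverse of~\eqref{metrich100}, in the invariant frame. Concretely, I would compute the differentials $d\varphi_0$ and $d\varphi_1$ in terms of $du,dv_0,dv_2,d\tau$ by differentiating the explicit expressions~\eqref{thetaformulainvariant}, and then evaluate $g^{*}(dx^i,dx^j)$ for $x^i\in\{\varphi_0,\varphi_1,v_2,\tau\}$. This produces the contravariant components $g^{*ij}$ as explicit combinations of $\theta_1$, its logarithmic derivatives, and the Weierstrass functions $\zeta,\wp$ occurring in $\varphi_0$.

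Next I would form $\eta^{*}={\rm Lie}_e g^{*}$. Because $e=\partial_{\varphi_0}$ is a coordinate vector field in the frame $(\varphi_0,\varphi_1,v_2,\tau)$, one has ${\rm Lie}_e g^{*}=(\partial_{\varphi_0}g^{*ij})\,\partial_i\otimes\partial_j$, so the Lie derivative merely differentiates the components in $\varphi_0$, lowering their $\varphi_0$-degree and yielding the much simpler tensor $\eta^{*}$. The key observation is that the surviving $v_2,\tau$-dependence of the off-diagonal components is carried by the logarithmic derivative $\theta_1'(v_2|\tau)/\theta_1(v_2|\tau)$; here I would use the identity $\theta_1'(v)/\theta_1(v)=\zeta(v)+4\pi{\rm i}\,g_1(\tau)v$ that follows from the $\sigma$-function formula in Lemma~\ref{jacobiform1 Jtildea1}, which ties this term to the $\zeta(v_2)$ inherited from the formula for $\varphi_0$. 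The shift $t^1=\varphi_0+2t^2\,\theta_1'(v_2|\tau)/\theta_1(v_2|\tau)$ is precisely what absorbs this term, after which the components of $\eta$ in $(t^1,t^2,t^3,t^4)$ become constant, confirming flatness and pinning down the shift uniquely.

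The main obstacle is this middle computation: expressing $g^{*}$ in the invariant coordinates and then simplifying $\eta^{*}$ demands sustained use of theta-function machinery, namely the heat equation $\partial_\tau\theta_1=\tfrac{1}{4\pi{\rm i}}\partial_v^2\theta_1$ to control the $\tau$-derivatives generated by differentiating $\varphi_0,\varphi_1$, together with the derivative identities $\zeta'=-\wp$ and the formula for $\wp'$, and the quasi-periodicity of $\zeta$. The delicate point is to show that, after the single shift by $2t^2\,\theta_1'/\theta_1$, every non-constant component of $\eta$ cancels simultaneously, equivalently that this shift is \emph{forced} by the flatness requirement rather than merely convenient. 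Verifying this collective cancellation, and not the immediate invertibility of the coordinate map, is where the actual work resides.
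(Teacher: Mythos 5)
Your first paragraph alone already proves the lemma as stated, and it does so by a route genuinely different from the paper's. You observe that $(\varphi_0,\varphi_1,v_2,\tau)$ are already (local) coordinates on the orbit space and that the map to $(t^1,t^2,t^3,t^4)$ is triangular with unit diagonal, hence a local biholomorphism away from the polar divisor; this is a clean, elementary argument. The paper instead proves the lemma through the Landau--Ginzburg superpotential: it rewrites $\lambda$ of Lemma~\ref{jacobiform1 Jtildea1} using $\zeta(v,\tau)=\theta_1^{\prime}(v,\tau)/\theta_1(v,\tau)+4\pi {\rm i}g_1(\tau)v$ (the $g_1$-terms cancel in the combination $\zeta(v-v_2)-\zeta(v+v_2)+2\zeta(v_2)$), obtaining $\lambda=t^1+t^2\bigl[\theta_1^{\prime}\big(v-t^3|t^4\big)/\theta_1\big(v-t^3|t^4\big)-\theta_1^{\prime}\big(v+t^3|t^4\big)/\theta_1\big(v+t^3|t^4\big)\bigr]$, and then invokes the local isomorphism with $H_{1,0,0}$ to conclude that $(t^1,t^2,t^3,t^4)$ are local coordinates. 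What the paper's route buys is precisely this closed form of $\lambda$ in the $t$-variables, which is reused later; what your route buys is brevity and independence from the Hurwitz-space identification.

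One point of scope, however: the lemma does \emph{not} assert that $(t^1,t^2,t^3,t^4)$ are flat coordinates of the Saito metric, so the bulk of your proposal (computing $g^{*}$ in the invariant frame, forming ${\rm Lie}_e g^{*}$, and showing the shift by $2t^2\theta_1^{\prime}(v_2|\tau)/\theta_1(v_2|\tau)$ makes $\eta$ constant) is work the paper defers to the proof of Theorem~\ref{mainresult}, where $g^{\alpha\beta}$ is computed directly in the $t$-coordinates via~\eqref{metriceta} and the Appendix~\ref{appendixA} theta identities, and constancy of $\eta^{\alpha\beta}=\partial_{t^1}g^{\alpha\beta}$ is then read off. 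Your plan for that later computation is consistent in spirit with the paper's, but it is not needed here; also note the sign in your quoted identity should be $\theta_1^{\prime}(v)/\theta_1(v)=\zeta(v,\tau)-4\pi {\rm i}g_1(\tau)v$, given the paper's convention $\zeta(v,\tau)=\theta_1^{\prime}(v,\tau)/\theta_1(v,\tau)+4\pi {\rm i}g_1(\tau)v$.
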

\begin{proof}
Note that the function~\eqref{lambda0 Jtildea1} can be parametrised by $\big(t^1,t^2,t^3,t^4\big)$ as follows:
\begin{gather*}
\lambda=\varphi_0+\varphi_1[\zeta(v-v_2|\tau)-\zeta(v+v_2|\tau)+2\zeta(v_2)]
\\ \hphantom{\lambda}
{}=\varphi_0+\varphi_1\left[\frac{\theta_1^{\prime}(v-v_2|\tau)}{\theta_1(v-v_2|\tau)} -\frac{\theta_1^{\prime}(v+v_2|\tau)}{\theta_1(v+v_2|\tau)} +2\frac{\theta_1^{\prime}(v_2|\tau)}{\theta_1(v_2|\tau)}\right]
\\ \hphantom{\lambda}
{}=\varphi_0+2\frac{\theta_1^{\prime}(v_2|\tau)}{\theta_1(v_2|\tau)} +\varphi_1\left[\frac{\theta_1^{\prime}(v-v_2|\tau)}{\theta_1(v-v_2|\tau)} -\frac{\theta_1^{\prime}(v+v_2|\tau)}{\theta_1(v+v_2|\tau)}\right]
\\ \hphantom{\lambda}
{}=t^1+t^2\left[\frac{\theta_1^{\prime}\big(v-t^3|t^4\big)}{\theta_1\big(v-t^3|t^4\big)} -\frac{\theta_1^{\prime}\big(v+t^3|t^4\big)}{\theta_1\big(v+t^3|t^4\big)}\right]
\end{gather*}
from the first line to the second line, the following equation was used:
\begin{gather*}
\zeta(v-v_2,\tau)=\frac{\theta_1^{\prime}(v-v_2|\tau)}{\theta_1(v-v_2|\tau)}+4\pi {\rm i}g_1(\tau)(v-v_2).
\end{gather*}
 In this way, $\big(t^1,t^2,t^3,t^4\big)$ are local coordinates of $\Omega^{\Ja(\tilde A_1)}/\Ja\big(\tilde A_1\big)$ due to Lemma~\ref{jacobiform1 Jtildea1}.
\end{proof}

The side back effect of the coordinates $\big(t^1,t^2,t^3,t^4\big)$ is the fact that they are not globally single valued functions on the quotient.

\begin{Lemma}
The coordinates $\big(t^1,t^2,t^3,t^4\big)$ have the following transformation laws under the action of the group $\Ja\big(\tilde A_1\big){:}$ they are invariant under~\eqref{mon 1}.
They transform as follows under~\eqref{mon 2}$:$
\begin{gather*}
t^1\mapsto t^1-\lambda_2t^2,\qquad
t^2\mapsto t^2,\qquad
t^3\mapsto t^3+\mu_2+\lambda_2t^4,\qquad
t^4\mapsto t^4.
\end{gather*}

Moreover, they transform as follows under~\eqref{mon 3}
\begin{gather*}
t^1\mapsto t^1+\frac{2ct^2t^3}{ct^4+d},\qquad
t^2\mapsto \frac{t^2}{ct^4+d},\qquad
t^3\mapsto \frac{t^3}{ct^4+d},\qquad
t^4\mapsto \frac{at^4+b}{ct^4+d}.
\end{gather*}
\end{Lemma}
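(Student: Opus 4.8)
The plan is to reduce everything to the already-established transformation laws of the fundamental Jacobi forms $\varphi_0^{\tilde A_1},\varphi_1^{\tilde A_1}$ together with the (quasi-)transformation behaviour of the logarithmic derivative $\theta_1'/\theta_1$. Since $t^2=\varphi_1$, $t^3=v_2$, $t^4=\tau$ are literally coordinate functions, their laws are immediate. Under \eqref{mon 1} all three are fixed (together with $\varphi_0$), so all $t^i$ are $A_1$-invariant. Under \eqref{mon 2} the translation invariance of a Jacobi form gives $\varphi_1\mapsto\varphi_1$ and $\tau\mapsto\tau$, while $v_2\mapsto v_2+\mu_2+\lambda_2\tau$, which is exactly the stated $t^2\mapsto t^2$, $t^3\mapsto t^3+\mu_2+\lambda_2 t^4$, $t^4\mapsto t^4$. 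Under \eqref{mon 3} the $\mathrm{SL}_2(\mathbb{Z})$-equivariance recorded in the preceding Corollary gives $\varphi_1\mapsto\varphi_1/(c\tau+d)$, together with $v_2\mapsto v_2/(c\tau+d)$ and $\tau\mapsto(a\tau+b)/(c\tau+d)$, i.e.\ $t^2\mapsto t^2/(ct^4+d)$, $t^3\mapsto t^3/(ct^4+d)$, $t^4\mapsto(at^4+b)/(ct^4+d)$.

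The only nontrivial coordinate is $t^1=\varphi_0+2\varphi_1\,\theta_1'(v_2|\tau)/\theta_1(v_2|\tau)$. The $A_1$-invariance is clear, since $\varphi_0,\varphi_1$ are $A_1$-invariant and the correction factor $Z(v_2,\tau):=\theta_1'(v_2|\tau)/\theta_1(v_2|\tau)$ does not involve $v_0$. For the other two actions I would isolate $Z(v_2,\tau)$ and compute its anomaly from the quasi-periodicity \eqref{transformtheta Jtildea1 0} and the modular law \eqref{transformtheta2 Jtildea1 0}. Differentiating \eqref{transformtheta Jtildea1 0} logarithmically in the first argument shows that $Z(v_2+\mu_2+\lambda_2\tau,\tau)-Z(v_2,\tau)$ is a constant multiple of $\lambda_2$; since $\varphi_0,\varphi_1$ are translation invariant, the correction term then produces a single extra summand proportional to $\lambda_2\varphi_1$, yielding $t^1\mapsto t^1-\lambda_2 t^2$. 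Likewise, differentiating \eqref{transformtheta2 Jtildea1 0} shows that $Z\big(\tfrac{v_2}{c\tau+d},\tfrac{a\tau+b}{c\tau+d}\big)-(c\tau+d)Z(v_2,\tau)$ is a constant multiple of $c\,v_2$; combining this with $\varphi_0\mapsto\varphi_0$ and $\varphi_1\mapsto\varphi_1/(c\tau+d)$, the homogeneous part $(c\tau+d)Z$ exactly cancels the weight factor of $\varphi_1$ and reproduces $2\varphi_1 Z$, while the anomalous term contributes a piece proportional to $c\,v_2\,\varphi_1/(c\tau+d)$, i.e.\ the stated $t^1\mapsto t^1+2ct^2t^3/(ct^4+d)$.

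The conceptual heart, and the main obstacle, is the observation that $t^1$ is \emph{not} itself a Jacobi form: the correction $2\varphi_1\,\theta_1'/\theta_1$ is built from the only quasi-periodic, quasi-modular object available (equivalently the Weierstrass $\zeta$ appearing in \eqref{thetaformulainvariant}), and it is precisely the failure of periodicity and of modularity of this factor that generates the affine, non-invariant terms $-\lambda_2 t^2$ and $2ct^2t^3/(ct^4+d)$. The delicate part of the calculation is therefore the bookkeeping of the exact anomaly constants of $\theta_1'/\theta_1$ and the verification that the homogeneous (weight) part of its modular transformation balances the $(c\tau+d)^{-1}$ coming from $\varphi_1$, leaving only the advertised affine remainder.

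As an alternative route that sidesteps tracking these constants, I could invoke the invariance of the superpotential $\lambda$ from Lemma~\ref{jacobiform1 Jtildea1}: substituting the group action (including $v\mapsto v/(c\tau+d)$ for the spectral variable under \eqref{mon 3}) into the parametrisation $\lambda=t^1+t^2\big[\theta_1'(v-t^3|t^4)/\theta_1(v-t^3|t^4)-\theta_1'(v+t^3|t^4)/\theta_1(v+t^3|t^4)\big]$ and matching the two sides as elliptic functions of $v$ — reading the transformation of $t^1$ off the $v$-independent part and that of $t^2$ off the coefficient of the $\zeta$-difference — forces the same laws by consistency with the invariance of $\lambda$ in \eqref{lambda0 Jtildea1}.
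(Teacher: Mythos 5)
Your proposal is correct and follows essentially the same route as the paper: both single out $t^1$ as the only nontrivial coordinate and reduce its transformation laws to the anomalies of $\theta_1'(v_2|\tau)/\theta_1(v_2|\tau)$ under translations and ${\rm SL}_2(\mathbb{Z})$, combined with the known behaviour of $\varphi_0$, $\varphi_1$, $v_2$, $\tau$. The only cosmetic difference is that the paper quotes these anomalies from Whittaker--Watson while you derive them by logarithmic differentiation of \eqref{transformtheta Jtildea1 0} and \eqref{transformtheta2 Jtildea1 0}, and like the paper you leave the overall $2\pi{\rm i}$ normalisation to a final rescaling of $t^1$.
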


\begin{proof}
The invariance under~\eqref{mon 1} is clear, since only $t^1$ depends on $v_0$, and its dependence is given by $\varphi_0$, which is invariant under~\eqref{mon 1}. Let us check how $t^{\alpha}$ transforms under~\eqref{mon 2}, \eqref{mon 3}: Since $t^3=v_2$, $t^4=\tau$, we have the desired transformations law defined as~$\Ja\big(\tilde A_1\big)$. The coordinate $t^2=\varphi_1$ is a invariant under~\eqref{mon 2} and transforms as modular form of~weight~$-1$ under~\eqref{mon 2}. The only non-trivial term is $t^1$, because it contains the term $\frac{\theta_1^{\prime}(v_2|\tau)}{\theta_1(v_2|\tau)}$, which transforms as follows under~\eqref{mon 2}, \eqref{mon 3}~\cite{E.T.WhittakerandG.N.Watson}
\begin{gather*}
\frac{\theta_1^{\prime}(v_2|\tau)}{\theta_1(v_2|\tau)}\mapsto \frac{\theta_1^{\prime}(v_2|\tau)}{\theta_1(v_2|\tau)}-2\pi {\rm i} n_2, \qquad
\frac{\theta_1^{\prime}(v_2|\tau)}{\theta_1(v_2|\tau)}\mapsto (c\tau+d)\frac{\theta_1^{\prime}(v_2|\tau)}{\theta_1(v_2|\tau)} +2\pi {\rm i} ct^3.
\end{gather*}
The proof is completed when we do the rescaling from $t^1$ to $\frac{t^1}{2\pi {\rm i}}$.
\end{proof}

In order to make the coordinates $\big(t^1,t^2,t^3,t^4\big)$ being well defined, it will be necessary to define them in a suitable covering over $\Omega^{\Ja(\tilde A_1)}/\Ja\big(\tilde A_1\big)$. It is clear that the multivaluedness comes from the coordinates $t^3$, $t^4$ essentially. Therefore, the problem is solved by defining a suitable covering over the orbit space of $\Ja\big(\tilde A_1\big)$. This can be done by fixing a lattice $\big(1,t^4\big)$ and a~representative of orbit given by the action
\begin{gather}\label{translation action of v2}
t^3\mapsto t^3+\mu_2+\lambda_2t^4.
\end{gather}
In order to also realise the coordinates $(u,v_0,v_2,\tau)$ as globally well-behaviour in the covering of~the~orbit space of $\Ja\big(\tilde A_1\big)$, we also forget the $A_1$-action by fixing a representative of each orbit. Therefore, in the following covering the problem
\begin{gather}
\label{covering Jtildea1}
\widetilde{\Omega^{\Ja(\tilde A_1)}/\Ja\big(\tilde A_1\big)}:=\Omega^{\Ja(\tilde A_1)}/\mathbb{Z}\oplus\tau\mathbb{Z},
\end{gather}
where $\mathbb{Z}\oplus\tau\mathbb{Z}$ acts on $\Omega^{\Ja(\tilde A_1)}$ as
\begin{gather*}
v_0 \mapsto v_0+\lambda_0\tau+\mu_0,\qquad
u\mapsto u-2\lambda_0v_0-n_0^2\tau,\qquad
v_2\mapsto v_2,\qquad
\tau\mapsto \tau.
\end{gather*}
 In the covering~\eqref{covering Jtildea1} the coordinates $t^{\alpha}$, and the intersection form $g^*$ are globally single valued. Hence, we have the necessary conditions to have Dubrovin--Frobenius manifold, since its geometry structure should be globally well defined. Note that, $\Omega^{\Ja(\tilde A_1)}/\Ja\big(\tilde A_1\big)$ has the structure of Twisted Frobenius manifold~\cite{B.Dubrovin2}.

\begin{Remark}
 $\big(t^1,t^2\big)$ lives in an enlargement of the algebra of $E_{\bullet,\bullet}[\varphi_0,\varphi_1]$. The extended algebra is the same as $E_{\bullet,\bullet}[\varphi_0,\varphi_1]$, but it is necessary to add the function $\frac{\theta_1^{\prime}(v_2,\tau)}{\theta_1(v_2,\tau)}$ in the ring of~coefficients $E_{\bullet,\bullet}$.
\end{Remark}

\begin{Remark}\label{covering Hurwitz space orbit space jtildea1}
Note that a covering in the orbit space corresponds to a covering in the Hurwitz space. The fixation of a lattice in the orbit space of $\Ja\big(\tilde A_1\big)$ is equivalent to a choice of homology basis in the Hurwitz space $H_{1,0,0}$. Moreover, a choice of the representative of the action~\eqref{translation action of v2} in~the variable $v_2$ is a choice of logarithm root in the Hurwitz space $H_{1,0,0}$. Furthermore, fixing a~representative of the $A_1$-action is to choice a pole or equivalently to choice a sheet in the Hurwitz space $H_{1,0,0}$.
\end{Remark}

\begin{Remark}
The Dubrovin--Frobenius structure in a Hurwitz space is based on an open dense domain of a solution of a Darboux--Egoroff system~\cite{B.Dubrovin2, V.Shramchenko}. Hence, it is a local construction. Indeed, the canonical coordinates associated to the Hurwitz spaces are local coordinates even in the covering space described in Remark~\ref{covering Hurwitz space orbit space jtildea1}. The construction of the orbit space of $\Ja\big(\tilde A_1\big)$ complements the construction of the Hurwitz space $H_{1,0,0}$, because now, there exists global object where the local Dubrovin--Frobenius structure of $H_{1,0,0}$ lives. Indeed, the coordinates $(\varphi_0,\varphi_1,v_2,\tau)$ are global coordinates for the covering space~\eqref{covering Jtildea1}, and from this fact, we derive that the Dubrovin--Frobenius structure is globally well defined in the covering. This is possible because we realise the group $\Ja\big(\tilde A_1\big)$ as a monodromy of orbit space $\Ja\big(\tilde A_1\big)$, and we know how the group $\Ja\big(\tilde A_1\big)$ acts on the Dubrovin--Frobenius structure.
\end{Remark}

\subsection{Construction of WDVV solution}\label{Construction of WDVV solution}

\begin{Theorem}\label{mainresult}
 There exists Dubrovin--Frobenius structure on the manifold $\widetilde{\Omega/\Ja\big(\tilde A_1\big)}$ with the intersection form~\eqref{metrich100}, the Euler vector field~\eqref{eulerh100}, and the unity vector field~\eqref{unith100}. Moreover, $\widetilde{\Omega/\Ja\big(\tilde A_1\big)}$ is isomorphic as Dubrovin--Frobenius manifold to $\widetilde H_{1,0,0}$.
\end{Theorem}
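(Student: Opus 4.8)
The plan is to realise the Dubrovin--Frobenius structure through Dubrovin's flat-pencil construction, feeding in the two metrics naturally carried by the orbit space. The first metric is the contravariant intersection form $g^{*}$ dual to~\eqref{metrich100}, whose conformal $\Ja\big(\tilde A_1\big)$-invariance was just proved, so that $g^{*}$ descends to a flat metric on the covering~\eqref{covering Jtildea1}. The second is the Saito metric $\eta^{*}:=\mathrm{Lie}_{e}g^{*}$, with $e$ the unit~\eqref{unith100}. The decisive step is to show that $\eta^{*}$ is flat. I would do this by writing $g^{*}$ in the invariant coordinates $(\varphi_0,\varphi_1,v_2,\tau)$, taking the Lie derivative along $e=\partial/\partial\varphi_0$, and then passing to the coordinates $(t^1,t^2,t^3,t^4)$ of the change-of-coordinates lemma; the aim is to verify that in these $t$-coordinates $\eta^{*}$ has constant entries, i.e.\ that the $t^{\alpha}$ are its flat coordinates. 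Note that $e=\partial/\partial t^1$, since $t^2,t^3,t^4$ are independent of $\varphi_0$, which is exactly the normalisation $e=\partial_{t^1}$ required later.

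Granting flatness of $\eta^{*}$, the second step is to check that $(g^{*},\eta^{*})$ is a flat pencil, i.e.\ that $g^{*}+s\,\eta^{*}$ is flat with contravariant Christoffel symbols depending linearly on $s$. Because $\eta^{*}=\mathrm{Lie}_{e}g^{*}$ by construction and $e$ is $\eta^{*}$-parallel (equal to $\partial_{t^1}$), this reduces to Dubrovin's flat-pencil criterion from~\cite{B.Dubrovin2}, which I would verify directly in the $t$-coordinates where both metrics are explicit. The Euler field $E$ of~\eqref{eulerh100}, which in invariant coordinates reads $E=\varphi_0\partial_{\varphi_0}+\varphi_1\partial_{\varphi_1}$, supplies the quasi-homogeneity: one records the weights of $t^1,t^2,t^3,t^4$ under~\eqref{mon 3} and checks $[e,E]=e$ together with the conformal factor of $g^{*}$.

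With a flat pencil, a unit $e$, and a compatible Euler field $E$ in hand, I would invoke Dubrovin's reconstruction theorem: $\eta^{*}$ becomes the contravariant flat metric of the Frobenius structure, the product is recovered from the relation $g^{\alpha\beta}=E^{\epsilon}c_{\epsilon}^{\alpha\beta}$ with indices raised by $\eta$, and the prepotential is obtained by integrating $c_{\alpha\beta\gamma}=\partial_\alpha\partial_\beta\partial_\gamma F$. I would then present $F$ in the flat coordinates $(t^1,t^2,t^3,t^4)$ and confirm both the WDVV associativity and the quasi-homogeneity fixed by $E$, recovering the explicit free energy stated in the Introduction.

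Finally, for the isomorphism with $\widetilde H_{1,0,0}$, I would use the superpotential $\lambda$ of~\eqref{lambda0 Jtildea1}, already shown in Lemma~\ref{jacobiform1 Jtildea1} to give a biholomorphism between $\Omega/\Ja\big(\tilde A_1\big)$ and $H_{1,0,0}$. On the Hurwitz side the Frobenius data are given by Dubrovin's Landau--Ginzburg residue formulas~\cite{B.Dubrovin2, V.Shramchenko} built from $\lambda$ and the chosen primary differential: the flat metric, the intersection form, the unit and Euler vector fields, and the multiplication are all expressed through residues at the critical points of $\lambda$. I would show that under the map of Lemma~\ref{jacobiform1 Jtildea1} these residue expressions reproduce precisely $\eta^{*}$, $g^{*}$, $e$, $E$ and the product constructed above; equality of the full flat data then upgrades the biholomorphism to an isomorphism of Dubrovin--Frobenius manifolds on the common covering. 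The main obstacle is the flatness of $\eta^{*}$: the entries of $g^{*}$ and $\eta^{*}$ involve $\wp$, $\zeta$ and $\theta_1'/\theta_1$ in both $v_0$ and $v_2$, and forcing the off-diagonal terms to collapse to constants in the $t$-coordinates requires nontrivial elliptic and modular identities, such as the heat equation for $\theta_1$, the relation between $\zeta$ and $\theta_1'/\theta_1$, and the addition theorems for $\wp$. Once this is secured, the remaining steps are Dubrovin's formalism applied to data already computed.
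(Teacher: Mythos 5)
Your proposal is correct in outline and shares the computational core of the paper's proof --- both hinge on writing $g^{*}$ in the coordinates $\big(t^1,t^2,t^3,t^4\big)$ and observing that $\eta^{*}=\partial_{t^1}g^{*}$ is a constant matrix --- but the logical superstructure is genuinely different. You route everything through Dubrovin's flat-pencil correspondence: verify flatness of $\eta^{*}$, check the pencil axioms (which here come almost for free, since $g^{*}$ is \emph{linear} in $t^1$, so $g^{*}+s\,\eta^{*}=g^{*}\big(t^1+s,t^2,t^3,t^4\big)$ is flat for all $s$), then invoke the reconstruction theorem so that associativity is part of the conclusion. The paper instead derives the prepotential explicitly from the degree formula $F^{\alpha\beta}=g^{\alpha\beta}/\deg\big(g^{\alpha\beta}\big)$, patches the entries of degree zero by hand (the term $h\big(t^2\big)$ is fixed from $g^{33}=\tfrac{t^2}{4}c_{222}$), and then proves the WDVV conditions directly; in particular, associativity is established concretely via semisimplicity: the multiplication by $E$ is identified with $\eta_{\alpha\mu}g^{\mu\beta}$, $\det g$ is computed in the invariant coordinates $(\varphi_0,\varphi_1,v_2,\tau)$ and shown to have four distinct roots, which yields canonical coordinates $u^i=t^1-y^i\big(t^2,t^3,t^4\big)$ in which $c_{ij}^k=\delta_{ij}$. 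If you keep the flat-pencil route, be aware that the reconstruction theorem carries hypotheses beyond flatness of the two metrics (exactness, quasi-homogeneity and regularity of the pencil), which you mention only implicitly; these must be checked, otherwise you need a separate associativity argument like the paper's. The two proofs also diverge on the final isomorphism: the paper simply observes that the free energy~\eqref{finalfreeenergy} coincides with the known WDVV solution of $\widetilde H_{1,0,0}$ (from the literature), whereas you propose to transport the Landau--Ginzburg residue formulas through the superpotential map of Lemma~\ref{jacobiform1 Jtildea1} and match all flat data. Your version is more laborious but self-contained; the paper's is shorter but leans on the previously computed Hurwitz-space prepotential.
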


\begin{proof}
The first step to be done is the computation of the intersection form in coordinates $\big(t^1,t^2,t^3,t^4\big)$. Hence, consider the transformation formula of~${\rm d}s^2$:
\begin{gather}\label{metriceta}
g^{\alpha\beta}(t)=\frac{\partial t^\alpha}{\partial x^i}\frac{\partial t^\beta}{\partial x^j}g^{ij},
\end{gather}
where $x^1=u$, $x^2=v_0$, $x^3=v_2$, $x^4=\tau$.

 From the expression:
\begin{gather*}
{\rm d}s^2=2{\rm d}v_0^2-2{\rm d}v_2^2+2{\rm d}u {\rm d}\tau=g_{ij}{\rm d}x^i{\rm d}x^j,
\end{gather*}
we have
\[
(g_{ij})=
\begin{bmatrix}
 0 & 0 & 0 &1 \\
 0 & 2 &0 &0 \\
 0& 0 & -2 & 0 \\
 1 & 0 & 0 &0
\end{bmatrix}.
\]
Therefore,
\[
\big(g^{ij}\big)=(g_{ij})^{-1}=
\begin{bmatrix}
 0 & 0 & 0 &1 \\
 0 & \frac{1}{2} &0 &0 \\
 0& 0 & -\frac{1}{2} & 0 \\
 1 & 0 & 0 &0
\end{bmatrix}.
\]
To compute $g^{\alpha\beta}(t)$, let us write $t^{\alpha}$ in terms of $x^i$:
\begin{gather*}
t^4=\tau,\qquad t^3=v_2,\qquad
t^2=-\frac{\theta_1(v_0+v_2,\tau)\theta_1(v_0-v_2,\tau)} {\theta_1(2v_2,\tau)\theta_1^{\prime}(0,\tau)}{\rm e}^{-2\pi {\rm i} u},
\end{gather*}
using the following formulae~\cite{E.T.WhittakerandG.N.Watson}:
\begin{gather*}
\frac{\wp^{\prime}(v_2)}{\wp(v_0)-\wp(v_0)}=\zeta(v_0-v_2,\tau)-\zeta(v_0+v_{2},\tau)+2\zeta(v_2,\tau),
\\
\wp(v_0,\tau)-\wp(v_2,\tau)=-\frac{\sigma(v_0+v_2,\tau)\sigma(v_0-v_2,\tau)} {\sigma^2(v_0,\tau)\sigma^2(v_2,\tau)},
\qquad
\frac{\sigma(2v_2,\tau)}{\sigma^4(v_2,\tau)}=-\wp^{\prime}(v_2,\tau),
\end{gather*}
it is possible to rewrite $t^1$ in a more suitable way
\begin{align*}
t^1&=-t^2[\zeta(v_0-v_2,\tau)-\zeta(v_0+v_{2},\tau)+2\zeta(v_2,\tau)] +2t^2\frac{\theta_1^{\prime}(v_2,\tau)}{\theta_1(v_2,\tau)}
\\
&=-t^2\frac{\wp^{\prime}(v_2,\tau)}{\wp(v_0,\tau)-\wp(v_2,\tau)} +2t^2\frac{\theta_1^{\prime}(v_2,\tau)}{\theta_1(v_2,\tau)}
\\
&=-t^2\frac{\wp^{\prime}(v_2,\tau)\theta_1^2(v_2,\tau)\theta_1^2(v_0,\tau)} {\theta_1(v_0+v_2,\tau)\theta_1(v_0-v_2,\tau){\theta_1^{\prime}(0,\tau)}^2} +2t^2\frac{\theta_1^{\prime}(v_2,\tau)}{\theta_1(v_2,\tau)}
\\
&=-\frac{\wp^{\prime}(v_2,\tau)\theta_1^2(v_2,\tau)\theta_1^2(v_0,\tau)} {\theta_1(2v_2,\tau){\theta_1^{\prime}(0,\tau)}^3}{\rm e}^{-2\pi {\rm i} u}+2t^2\frac{\theta_1^{\prime}(v_2,\tau)}{\theta_1(v_2,\tau)}
=\frac{\theta_1^2(v_0,\tau)}{\theta_1^2(v_2,\tau)}{\rm e}^{-2\pi {\rm i} u}+2t^2\frac{\theta_1^{\prime}(v_2,\tau)}{\theta_1(v_2,\tau)}.
\end{align*}

To summarize
\begin{gather*}
t^1=\frac{\theta_1^2(v_0,\tau)}{\theta_1^2(v_2,\tau)}{\rm e}^{-2\pi {\rm i} u}+2t^2\frac{\theta_1^{\prime}(v_2,\tau)}{\theta_1(v_2,\tau)},\nonumber
\\
t^2=-\frac{\theta_1(v_0+v_2,\tau)\theta_1(v_0-v_2,\tau)} {\theta_1(2v_2,\tau)\theta_1^{\prime}(0,\tau)}{\rm e}^{-2\pi {\rm i} u},
\qquad
t^3=v_2,
\qquad
t^4=\tau.
\end{gather*}
Computing $g^{\alpha\beta}$ according to~\eqref{metriceta}
\begin{gather*}
g^{\alpha\beta}=\frac{1}{2}\frac{\partial t^{\alpha}}{\partial v_0}\frac{\partial t^{\beta}}{\partial v_0}-\frac{1}{2}\frac{\partial t^{\alpha}}{\partial v_2}\frac{\partial t^{\beta}}{\partial v_2}+\frac{\partial t^{\alpha}}{\partial u}\frac{\partial t^{\beta}}{\partial \tau}+\frac{\partial t^{\alpha}}{\partial \tau}\frac{\partial t^{\beta}}{\partial u}.
\end{gather*}
Trivially, we get
\begin{gather*}
g^{44}=g^{34}=0,\qquad
g^{33}=-\frac{1}{2},\qquad
g^{24}=-2\pi {\rm i}t^2,\qquad
g^{14}=-2\pi {\rm i}t^1.
\end{gather*}

The following non-trivial terms are computed in Appendix~\ref{appendixA}:
\begin{gather}
g^{23}=-\frac{t^1}{2}+t^2\frac{\theta_1^{\prime}\big(2t^3,\tau\big)}{\theta_1\big(2t^3,\tau\big)},
\qquad
g^{13}=-2\pi {\rm i} t^2 \frac{\partial}{\partial \tau}\left(\log\frac{\theta_1^{\prime}(0,\tau)}{\theta_1\big(2t^3,\tau\big)}\right),\nonumber
\\
g^{22}=2\big(t^2\big)^2\left[\frac{\theta_1^{\prime\prime}\big(2t^3,\tau\big)} {\theta_1(2t^3,\tau)}-\frac{\theta_1^{\prime^2}\big(2t^3,\tau\big)} {\theta_1^2\big(2t^3,\tau\big)}\right],\nonumber
\\
g^{12}=-2\pi {\rm i}\big(t^2\big)^2\left[\frac{\partial^2}{\partial t^3\partial \tau}\left( \log\left(\frac{\theta_1^{\prime}(0,\tau)}{\theta_1\big(2t^3,\tau\big)}\right)\right)\right],
\label{g23}
\\
g^{11}=-4\big(t^2\big)^2\frac{\theta_1^{\prime}\big(t^3,\tau\big)}{\theta_1\big(t^3,\tau\big)}\frac{\partial} {\partial t^3}\left(\frac{\theta_1^{\prime}\big(t^3,\tau\big)}{\theta_1\big(t^3,\tau\big)}\right)\left[ 2\frac{\theta_1^{\prime}\big(t^3,\tau\big)}{\theta_1\big(t^3,\tau\big)} -2\frac{\theta_1^{\prime}\big(2t^3,\tau\big)}{\theta_1\big(2t^3,\tau\big)}\right]\nonumber
\\ \hphantom{g^{11}=}
{}+8\frac{\theta_1^{\prime^2}\big(t^3,\tau\big)}{\theta_1^2\big(t^3,\tau\big)}\big(t^2\big)^2 \left[\frac{\theta_1^{\prime\prime}\big(2t^3,\tau\big)}{\theta_1\big(2t^3,\tau\big)} -\frac{\theta_1^{\prime^2}\big(2t^3,\tau\big)}{\theta_1^2\big(2t^3,\tau\big)}\right]
-2\big(t^2\big)^2\left[\frac{\partial}{\partial t^3}\left(\frac{\theta_1^{\prime}\big(t^3,\tau\big)}{\theta_1\big(t^3,\tau\big)}\right)\right]^2\nonumber
\\ \hphantom{g^{11}=}
{}-16\pi {\rm i}\big(t^2\big)^2\frac{\theta_1^{\prime}\big(t^3,\tau\big)}{\theta_1\big(t^3,\tau\big)} \frac{\partial}{\partial \tau}\left(\frac{\theta_1^{\prime}\big(t^3,\tau\big)}{\theta_1\big(t^3,\tau\big)}\right).\label{g11}
\end{gather}

Differentiating $g^{\alpha\beta}$ w.r.t.\ $t^1$ we obtain a constant matrix $\eta^{*}$:
\[
\big(\eta^{\alpha\beta}\big)=\frac{\partial}{\partial t^1}\big(g^{\alpha\beta}\big)=
\begin{bmatrix}
 0 & 0 & 0 &-2\pi {\rm i} \\
 0 & 0 &-\frac{1}{2} &0 \\
 0& -\frac{1}{2} &0 & 0 \\
 -2\pi {\rm i} & 0 & 0 &0
\end{bmatrix}.
\]
So $t^1$, $t^2$, $t^3$, $t^4$ are the flat coordinates.

The next step is to calculate the matrix $F^{\alpha\beta}$ using the formula
\begin{gather}\label{Hessian formula}
F^{\alpha\beta}=\frac{g^{\alpha\beta}}{\deg \big( g^{\alpha\beta}\big)}.
\end{gather}
We can compute $\deg \big(g^{\alpha\beta}\big)$ using the fact that we compute $\deg (t^{\alpha})$. Indeed,
\begin{gather*}
E=-\frac{1}{2\pi {\rm i}}\frac{\partial}{\partial u}.
\end{gather*}
This implies that
\begin{gather*}
\deg \big(t^1\big)=\deg \big(t^2\big)=1,
\qquad
\deg \big(t^3\big)=\deg \big(t^4\big)=0.
\end{gather*}
Then, the function $F$ is obtained from the equation
\begin{gather*}
\frac{\partial^2 F}{\partial t^{\alpha}\partial t^{\beta}}=\eta_{\alpha\alpha^{\prime}}\eta_{\beta\beta^{\prime}}F^{\alpha^{\prime}\beta^{\prime}}.
\end{gather*}
Computing
\begin{gather*}
F^{\alpha4}=\frac{g^{\alpha 4}}{\deg \big(g^{\alpha 4}\big)},
\end{gather*}
we derive
\begin{gather*}\label{normalization}
\frac{\partial^3 F}{\partial t^1\partial t^{\alpha}\partial t^{\beta}}=\eta_{\alpha\beta}.
\end{gather*}
Hence,
\begin{gather}\label{free energy incomplete}
F=\frac{\rm i}{4\pi}\big(t^1\big)^2t^4-2t^1t^2t^3+f\big(t^2,t^3,t^4\big).
\end{gather}
Substituting $F^{23}$ and $F^{13}$ in~\eqref{free energy incomplete}
\begin{gather*}
F=\frac{\rm i}{4\pi}\big(t^1\big)^2t^4-2t^1t^2t^3-\big(t^2\big)^2\log \left(\frac{\theta_1^{\prime}\big(0,t^4\big)}{\theta_1\big(2t^3,t^4\big)}\right)+h\big(t^2\big) +A_{\alpha\beta}t^{\alpha}t^{\beta}+C_{\alpha}t^{\alpha}+D,
\end{gather*}
where $A_{\alpha\beta}$, $C_{\alpha}$, $C_{\alpha}$ are constants. Note that $F^{22}$, $F^{12}$ contains the same information, furthermore, there is no information in $F^{33}$, $F^{34}$, $F^{44}$ because
\begin{gather*}
\deg \big(g^{33}\big)=\deg \big(g^{34}\big)=\deg \big(g^{44}\big)=0.
\end{gather*}
However, $h\big(t^2\big)$ can be computed by using $g^{33}$
\begin{gather*}
g^{33}=-\frac{1}{2}=E^{\epsilon}\eta^{3\mu}\eta^{3\lambda}c_{\epsilon\mu\lambda}=\frac{t^2}{4}c_{222}.
\end{gather*}
Using the formula~\eqref{freeenergy}, we have
\begin{gather}\label{finalfreeenergy}
F\big(t^1,t^2,t^3,t^4\big)=\frac{\rm i}{4\pi}\big(t^1\big)^2t^4-2t^1t^2t^3-\big(t^2\big)^2 \log\left(t^2\frac{\theta_1^{\prime}\big(0,t^4\big)}{\theta_1\big(2t^3,t^4\big)}\right).
\end{gather}
The remaining part of proof is to show that the equation~\eqref{finalfreeenergy} satisfies WDDV equations. Let us prove it step by step.

1.\ \textit{Commutative of the algebra.}
Defining the structure constant of the algebra as
\begin{gather*}
c_{\alpha\beta\gamma}(t)=\frac{\partial^3F}{\partial t^{\alpha}\partial t^{\beta}\partial t^{\gamma} },
\end{gather*}
commutative is straightforward.

2. \textit{Normalization.}
Using equation~\eqref{finalfreeenergy}, we obtain
\begin{gather*}
c_{1\alpha\beta}(t)=\frac{\partial^3F}{\partial t^{1}\partial t^{\beta}\partial t^{\gamma} }=\eta_{\alpha\beta}.
\end{gather*}

3.\ \textit{Quasi homogeneity.}
Applying the Euler vector field in the function~\eqref{finalfreeenergy}, we have
\begin{gather*}
E(F)=2F-2t^2.
\end{gather*}

4.\ \textit{Associativity.}
In order to prove that the algebra is associativity, we will first shown that the algebra is semisimple. First of all, note that the multiplication by the Euler vector field is equivalent to the intersection form. Indeed,
\begin{gather}
\label{relation multiplication Euler and intersection form}
E\bullet\partial_{\alpha}=t^{\sigma}c_{\sigma\alpha}^{\beta}\partial_{\beta} =t^{\sigma}\partial_{\sigma}\big(\eta^{\beta\mu}\partial_{\alpha}\partial_{\mu}F \big) \partial_{\beta}
=({\rm d}_{\alpha}-{\rm d}_{\beta})\eta^{\beta\mu}\partial_{\alpha}\partial_{\mu}F \partial_{\beta}=\eta_{\alpha\mu}g^{\mu\beta} \partial_{\beta}.
\end{gather}
Therefore, the multiplication by the Euler vector field is semisimple if the following polynomial
\begin{gather}\label{gather det}
\det\big(\eta_{\alpha\mu}g^{\mu\beta}-u\delta_{\alpha}^{\beta}\big)=0,
\end{gather}
has only simple roots; since $\det(\eta_{\alpha\mu})\neq 0$, the equation~\eqref{gather det} is equivalent to
\begin{gather*}
\det\big(g^{\alpha\beta}-u\eta^{\alpha\beta}\big)=0.
\end{gather*}
Using that $\eta^{\alpha\beta}=\partial_1g^{\alpha\beta}$, we have that
\begin{gather*}
\det\big(g^{\alpha\beta}-u\eta^{\alpha\beta}\big)=\det \big(g^{\alpha\beta}\big(t^1-u,t^2,t^3,t^4\big)\big)=0.
\end{gather*}
So, this is enough to compute $\det g^{\alpha\beta}$. In particular, computing $\det g$ in the coordinates $(\varphi_0,\varphi_1,v_2, \tau)$. Recall that
\begin{gather*}
g^{\alpha\beta}=g\big({\rm d}t^{\alpha},{\rm d}t^{\beta}\big),\qquad\, \text{in coordinates } \big(t^1,t^2,t^3,t^4\big),
\\
g^{lm}=g({{\rm d}v_l,{\rm d}v_m}), \qquad \text{in coordinates } (u,v_0,v_2,\tau),
\\
g^{ij}=g({{\rm d}\varphi_i,{\rm d}\varphi_j}), \qquad\, \text{in coordinates } (\varphi_0,\varphi_1,v_2,\tau).
\end{gather*}
Then,
\begin{gather*}
\det g^{ij}= \det\left(\frac{\partial\varphi_i}{\partial v_l} \right) \det\left(\frac{\partial\varphi_j}{\partial v_m} \right)\det \big(g^{lm}\big).
\end{gather*}

\begin{Remark}The coordinates $(u,v_0,v_2,\tau)$ are defined away from the submanifold defined by $\det g=0$. Therefore, we have to change coordinates to compute the roots of $\det g=0$.
\end{Remark}

Hence, it is enough to compute the $\det\big(\frac{\partial\varphi_i}{\partial v_l}\big)$
\begin{gather}
\det\left(\frac{\partial\varphi_i}{\partial v_l} \right)=
\begin{bmatrix}
 \dfrac{\partial\varphi_0}{\partial v_0} & \dfrac{\partial\varphi_0}{\partial v_2} & \dfrac{\partial\varphi_0}{\partial \tau}&-2\pi {\rm i}\varphi_0 \vspace{1mm}\\
 \dfrac{\partial\varphi_1}{\partial v_0} & \dfrac{\partial\varphi_1}{\partial v_2} & \dfrac{\partial\varphi_1}{\partial \tau} &-2\pi {\rm i}\varphi_1 \\
 0& 1 &0 & 0 \\
 0& 0 & 1&0
\end{bmatrix}\nonumber
\\ \hphantom{\det\left(\frac{\partial\varphi_i}{\partial v_l} \right)}
{}=-2\pi {\rm i} \varphi_0\varphi_1\left[2\frac{\theta_1^{\prime}(v_0)}{\theta_1(v_0)} -\frac{\theta_1^{\prime}(-v_0+v_2)}{\theta_1(-v_0+v_2)} +\frac{\theta_1^{\prime}(v_0+v_2)}{\theta_1(v_0+v_2)} \right]\nonumber
\\ \hphantom{\det\left(\frac{\partial\varphi_i}{\partial v_l} \right)}
= -2\pi {\rm i} {\rm e}^{-4\pi {\rm i} u}\frac{\theta_1(2v_0)}{\theta_1(2v_2){\theta_1^{\prime}(0)}^2}.\label{roots of det g}
\end{gather}
Then, equation~\eqref{roots of det g} has four distinct roots $v_0=0,\frac{1}{2},\frac{\tau}{2},\frac{1+\tau}{2}$. Hence, the following system of equations
\begin{gather}\label{system of canonical coordinates}
\det \big( g^{\alpha\beta}\big(t^1,t^2,t^3,t^4\big)\big)=0,\qquad
\det \big(\eta^{\alpha\beta}\big(t^1,t^2,t^3,t^4\big)\big)\neq 0,
\end{gather}
implies in existence of four functions $y_i\big(t^2,t^3,t^4\big)$ such that
\begin{gather*}
t^1=y_i\big(t^2,t^3,t^4\big), \qquad i=1,2,3,4.
\end{gather*}
Sending $t^1\mapsto t^1-u$ in~\eqref{system of canonical coordinates}, we obtain
\begin{gather*}
u^i=t^1-y^i\big(t^2,t^3,t^4\big), \qquad i=1,2,3,4.
\end{gather*}
The multiplication by the Euler vector field
\begin{gather*}
g^{i}_j=\eta_{jk}g^{ki},\qquad \text{in canonical coordinates } \big(u^1,u^2,u^3,u^4\big)
\end{gather*}
 is diagonal, so
\begin{gather*}
g^{ij}=u^i\eta^{ij}\delta_{ij},
\end{gather*}
where $\eta^{ij}$ is the canonical coordinates $\big(u^1,u^2,u^3,u^4\big)$,
 and the unit vector field have the following form:
\begin{gather*}
\frac{\partial}{\partial t^1}=\sum_{i=1}^4 \frac{\partial u_i}{\partial t^1}\frac{\partial}{\partial u_i}=\sum_{i=1}^4 \frac{\partial}{\partial u_i}.
\end{gather*}
Moreover, since
\begin{gather*}
[E,e]=\left[t^1\frac{\partial}{\partial t^1} +t^2\frac{\partial}{\partial t^2},\frac{\partial}{\partial t^1}\right]=-e,
\end{gather*}
the Euler vector field in the coordinates $\big(u^1,u^2,u^3,u^4\big)$ takes the following form:
\begin{gather*}
E=\sum_{i=1}^4 u^i\frac{\partial}{\partial u_i}.
\end{gather*}
 Using the relationship~\eqref{relation multiplication Euler and intersection form} between the coordinates $\big(u^1,u^2,u^3,u^4\big)$, we have
 \begin{gather}\label{final equation associativite}
u^i\eta^{ij}\delta_{ij}=u^l\eta^{im}\eta^{jn}c_{lmn},
\end{gather}
differentiating both side of the equation~\eqref{final equation associativite} with respect $t^1$
 \begin{gather*}
 c_{ij}^k=\delta_{ij},
\end{gather*}
which proves that the algebra is associative and semisimple.

Therefore, we proved that the equation~\eqref{finalfreeenergy} satisfies the WDVV equation. Moreover, the function $F$~\eqref{finalfreeenergy} is exactly the Free energy of the Dubrovin--Frobenius manifold of the Hurwitz space $\widetilde H_{1,0,0}$. Hence, the covering of orbit space of $\Ja\big(\tilde A_1\big)$ and the covering over the Hurwtiz space $H_{1,0,0}$ are isomorphic as a Dubrovin--Frobenius manifold, because they have the same WDVV solution.
\end{proof}

\begin{Remark}
Even thought the Dubrovin--Frobenius structure constructed in a suitable covering of the orbit space of $\Ja\big(\tilde A_1\big)$ is isomorphic as a Dubrovin--Frobenius manifold to a suitable covering of the Hurwitz space $H_{1,0,0}$, this fact does not mean that the construction presented in~this paper is equivalent to the Hurwitz space construction, because:
\begin{enumerate}\itemsep=0pt
\item The constructions start with different hypotheses. Indeed, we derive a WDVV solution in~the Hurwitz space framework from the data of the Hurwitz space itself, and through the choice of a~suitable primary differential; see~\cite{B.Dubrovin2} and~\cite{V.Shramchenko} for the definition. On another hand, the~orbit space construction is derived from the data of the group $\Ja\big(\tilde A_1\big)$.
\item The Hurwitz space construction is based on domain of a solution of a Darboux--Egorrof system. The coordinate system associated with this system of equation is called canonical coordinates. Therefore, the Hurwitz space construction is a local construction, since it is based in a local solution of a system of equations. The orbit space construction, in the other hand, is built based on the invariant coordinates $(\varphi_0,\varphi_1, v_2, \tau)$, which some how have a global meaning. Furthermore, note that the existence of the invariant coordinates $(\varphi_0,\varphi_1, v_2, \tau)$ is not guaranteed in the Hurwitz space construction.
\item The intersection form~\eqref{metrich100}, the Euler vector field~\eqref{eulerh100}, and the unity vector field~\eqref{unith100} are intrinsic objects of the orbit space $\Ja\big(\tilde A_1\big)$. Therefore, the Theorem \ref{mainresult} derives the WDVV solution~\eqref{finalfreeenergy} by using the equation~\eqref{Hessian formula} without using the correspondence with the Hurwtiz space $H_{1,0,0}$. This argument was already used in the introduction of~\cite{B.Dubrovin1} to demonstrate the difference between the Hurwitz space construction on the $H_{0,n}$ and the orbit space construction of the orbit space of $A_n$.
\end{enumerate}
\end{Remark}

\begin{Remark}
The WDVV solution~\eqref{finalfreeenergy} was presented on p.~28 of~\cite{Dubrovin3}. However, there is a~typo in the last term of the WDVV solution in the paper~\cite{Dubrovin3}. The WDVV solution in a~correct form can also be found in~\cite{M.CutimancoandV.Shramchenko} and~\cite{E.V.FerapontovM.V.PavlovL.Xue}.
\end{Remark}

\section{Conclusion}
The WDVV solution of $H_{1,0,0}$, which is~\eqref{finalfreeenergy}, contains the term $\log\big(\frac{\theta_1^{\prime}(0,t^4)}{\theta_1(2t^3,t^4)}\big)$ on the two exceptional variables $\big(t^3,t^4\big)$. This is a reflection of how the ring of invariants affects the WDVV solution. The same pattern is obtained in $\Ja(A_1)$, and $\tilde A_1$. The equation~\eqref{WDVV bertola} contains $E_2(\tau)$ which is a quasi modular form, and the equation~\eqref{WDVV Dub-Zhang} contains ${\rm e}^{t^2}$. These facts could be useful in regards to the understanding of the WDVV/groups correspondence.

The arrows of the diagram of in Section~\ref{group def} may have a third meaning, which is an embedding of Dubrovin--Frobenius submanifolds~\cite{Strachan1, Strachan2} in to the ambient space $H_{1,0,0}$. The fact that $H_{1,0,0}$ contains three Dubrovin--Frobenius submanifolds is not an accident. This comes from the tri-Hamiltonian structure that $H_{1,0,0}$ has~\cite{Pavlov2, Romano1}. In a subsequent publication, we will study the Dubrovin--Frobenius manifolds of~$H_{1,0,0}$, and its associated integrable systems.

\appendix

\section{Appendix}\label{appendixA}
\textbf{Computing $\boldsymbol {g^{12}}$:}
\begin{gather*}
g^{23}=-\frac{1}{2}\frac{\partial t^{2}}{\partial v_2}=-\frac{t^2}{2}\left[-\frac{\theta_1^{\prime}(v_0-v_2,\tau)}{\theta_1(v_0-v_2,\tau)} +\frac{\theta_1^{\prime}(v_0+v_2,\tau)}{\theta_1(v_0+v_2,\tau)} -2\frac{\theta_1^{\prime}(2v_2,\tau)}{\theta_1(2v_2,\tau)}\right]
\\ \hphantom{g^{23}}
{}=-\frac{t^2}{2}\left[-\frac{\theta_1^{\prime}(v_0-v_2,\tau)}{\theta_1(v_0-v_2,\tau)} +\frac{\theta_1^{\prime}(v_0+v_2,\tau)}{\theta_1(v_0+v_2,\tau)} -2\frac{\theta_1^{\prime}(v_2,\tau)}{\theta_1(v_2,\tau)}\right] -t^2\frac{\theta_1^{\prime}(v_2,\tau)}{\theta_1(v_2,\tau)}
+t^2\frac{\theta_1^{\prime}(2v_2,\tau)}{\theta_1(2v_2,\tau)}
\\ \hphantom{g^{23}}
{}=-\frac{1}{2\wp^{\prime}(v_2,\tau)}\left[-\zeta(v_0-v_2,\tau)+\zeta(v_0+v_2,\tau) -2\zeta(v_2,\tau)\right]-t^2\frac{\theta_1^{\prime}(v_2,\tau)}{\theta_1(v_2,\tau)}
+t^2\frac{\theta_1^{\prime}(2v_2,\tau)}{\theta_1(2v_2,\tau)}
\\ \hphantom{g^{23}}
{}=\frac{1}{2}\frac{1}{\wp(z_0,\tau)-\wp(z_2,\tau)} -t^2\frac{\theta_1^{\prime}(v_2,\tau)}{\theta_1(v_2,\tau)} +t^2\frac{\theta_1^{\prime}(2v_2,\tau)}{\theta_1(2v_2,\tau)}
=-\frac{t^1}{2}+t^2\frac{\theta_1^{\prime}(2v_2,\tau)}{\theta_1(2v_2,\tau)}.
\end{gather*}
\textbf{Computing $\boldsymbol {g^{13}}$:}
\begin{gather}
g^{13}=-\frac{1}{2}\frac{\partial t^{1}}{\partial v_2}=-\frac{\theta_1^{\prime}(v_2,\tau)} {\theta_1(v_2,\tau)}\frac{1}{\wp(z_0) -\wp(z_2)}-\frac{\partial t^2}{\partial v_2} \frac{\theta_1^{\prime}(v_2,\tau)}{\theta_1(v_2,\tau)} -t^2\left[\frac{\theta_1^{\prime\prime}(v,\tau)}{\theta_1(v,\tau)} -\frac{\theta_1^{\prime^2}(v,\tau)}{\theta_1^2(v,\tau)}\right]\nonumber
\\ \hphantom{g^{13}}
{}=-\frac{\theta_1^{\prime}(v_2,\tau)}{\theta_1(v_2,\tau)}\frac{1}{\wp(z_0)-\wp(z_2)} -t^1\frac{\theta_1^{\prime}(v_2,\tau)}{\theta_1(v_2,\tau)} -2t^2\frac{\theta_1^{\prime}(2v_2,\tau)}{\theta_1(2v_2,\tau)} \frac{\theta_1^{\prime}(v_2,\tau)}{\theta_1(v_2,\tau)}\nonumber
\\ \hphantom{g^{13}=}
-t^2\left[\frac{\theta_1^{\prime\prime}(v,\tau)}{\theta_1(v,\tau)} -\frac{\theta_1^{\prime^2}(v,\tau)}{\theta_1^2(v,\tau)}\right]\nonumber
\\ \hphantom{g^{13}}
=-2t^2\frac{\theta_1^{\prime^2}(v_2,\tau)}{\theta_1^2(v_2,\tau)}-2t^2\frac{\theta_1^
{\prime}(2v_2,\tau)}{\theta_1(2v_2,\tau)}\frac{\theta_1^{\prime}(v_2,\tau)}{\theta_1(v_2,\tau)} -t^2\left[\frac{\theta_1^{\prime\prime}(v,\tau)}{\theta_1(v,\tau)} -\frac{\theta_1^{\prime^2}(v,\tau)}{\theta_1^2(v,\tau)}\right]\nonumber
\\ \hphantom{g^{13}}
=-t^2\frac{\theta_1^{\prime^2}(v_2,\tau)}{\theta_1^2(v_2,\tau)}
-2t^2\frac{\theta_1^{\prime}(2v_2,\tau)}{\theta_1(2v_2,\tau)}
\frac{\theta_1^{\prime}(v_2,\tau)}{\theta_1(v_2,\tau)}
-t^2\frac{\theta_1^{\prime\prime}(v,\tau)}{\theta_1(v,\tau)}.\label{g131}
\end{gather}

To simplify this expression we need the following lemma.

\begin{Lemma}[\cite{BertolaM.1}]\label{lemmaa1l}
When $x+y+z=0$ holds
\begin{gather}
\frac{\theta_1^{\prime\prime}(x,\tau)}{\theta_1(x,\tau)}
+\frac{\theta_1^{\prime\prime}(y,\tau)}{\theta_1(y,\tau)}
-2\frac{\theta_1^{\prime}(x,\tau)}{\theta_1(x,\tau)}
\frac{\theta_1^{\prime}(y,\tau)}{\theta_1(y,\tau)}\nonumber
\\ \qquad
=4\pi {\rm i} \frac{\partial}{\partial \tau}\left(\log\left(\frac{\theta_1^{\prime}(0,\tau)}{\theta(x-y,\tau)}\right)\right) +2\frac{\theta_1^{\prime}(x-y,\tau)}{\theta_1(x-y,\tau)} \left[\frac{\theta_1^{\prime}(x,\tau)}{\theta_1(x,\tau)} -\frac{\theta_1^{\prime}(y,\tau)}{\theta_1(y,\tau)}\right].
\label{lemmaa1}
\end{gather}
\end{Lemma}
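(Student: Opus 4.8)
The plan is to eliminate the $\tau$-derivative on the right by means of the heat equation and then to close the resulting $\tau$-free identity with the classical Weierstrass addition theorems. I treat the relation as an identity in the two independent variables $x$ and $y$; the constraint $x+y+z=0$ only names $z=-(x+y)$ and plays no role, since $z$ does not appear. Writing $L(v):=\theta_1'(v,\tau)/\theta_1(v,\tau)=\partial_v\log\theta_1(v,\tau)$, so that $\theta_1''/\theta_1=L'+L^2$, the left-hand side collapses to $L'(x)+L'(y)+\big(L(x)-L(y)\big)^2$.

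First I would remove the modular term. The function $\theta_1$ satisfies a heat equation $\partial_\tau\theta_1=\kappa\,\partial_v^2\theta_1$ for a constant $\kappa$ fixed by the normalization \eqref{theta def}; the coefficient $4\pi i$ in the statement is exactly $\kappa^{-1}$, which is what makes the $\tau$-derivative collapse. Thus $4\pi i\,\partial_\tau\log\theta_1(w)=\theta_1''(w)/\theta_1(w)=L'(w)+L(w)^2$, while differentiating the heat equation once in $v$ and evaluating at $v=0$ (where $\theta_1$ is odd) gives $4\pi i\,\partial_\tau\log\theta_1'(0)=\theta_1'''(0)/\theta_1'(0)$. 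Substituting these into the right-hand side and transposing, the whole claim reduces, after recognizing the perfect square $\big(L(x)-L(y)\big)^2+L(x-y)^2-2L(x-y)\big(L(x)-L(y)\big)$, to
\begin{gather*}
L'(x)+L'(y)+L'(x-y)+\big(L(x)-L(y)-L(x-y)\big)^2=\frac{\theta_1'''(0,\tau)}{\theta_1'(0,\tau)}.
\end{gather*}

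Next I would pass to Weierstrass functions for the lattice $(1,\tau)$. Since $\zeta(v,\tau)-L(v)$ is linear in $v$, write $\zeta(v)=L(v)+c_1v$ with $c_1$ constant; then $L'(v)=-\wp(v)-c_1$, and, because the linear parts cancel, $L(x)-L(y)-L(x-y)=\zeta(x)-\zeta(y)-\zeta(x-y)$. Evaluating the three-term $\zeta$ addition formula at arguments $x,-y$ gives $\zeta(x)-\zeta(y)-\zeta(x-y)=-\frac{1}{2}\big(\wp'(x)+\wp'(y)\big)/\big(\wp(x)-\wp(y)\big)$, and the $\wp$ addition formula gives $\frac{1}{4}\big[\big(\wp'(x)+\wp'(y)\big)/\big(\wp(x)-\wp(y)\big)\big]^2=\wp(x)+\wp(y)+\wp(x-y)$. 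Feeding these in, the left-hand side of the displayed claim becomes $-\wp(x)-\wp(y)-\wp(x-y)-3c_1+\wp(x)+\wp(y)+\wp(x-y)=-3c_1$, so all $\wp$-terms cancel. A Laurent expansion at $v=0$ (using $\zeta(v)=v^{-1}+O(v^3)$ and $\theta_1(v)=\theta_1'(0)v+\frac{1}{6}\theta_1'''(0)v^3+\cdots$) shows $c_1=-\frac{1}{3}\theta_1'''(0)/\theta_1'(0)$, whence $-3c_1=\theta_1'''(0)/\theta_1'(0)$, closing the argument.

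The main obstacle I anticipate is the bookkeeping of the normalization constants rather than any structural difficulty: one must pin down $\kappa$ for the precise convention \eqref{theta def}, verify it produces exactly the factor $4\pi i$, and match the additive constant $c_1$ with $\theta_1'''(0)/\theta_1'(0)$ through the Laurent expansion. Should one wish to bypass the $\zeta$/$\wp$ dictionary, an alternative is a Liouville argument: regard both sides as functions of $x$, check that they have identical poles and principal parts at $x\in\{0,y\}\bmod(\mathbb{Z}\oplus\tau\mathbb{Z})$ together with the same quasi-periodicity, conclude that they differ by an $x$-independent quantity, and fix that quantity by a single expansion; this replaces the addition theorems by a residue computation but confronts the same constant-matching step at the end.
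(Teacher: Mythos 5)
Your proposal is correct and takes essentially the same route as the paper's proof: both arguments combine the dictionary $\zeta(v,\tau)=\theta_1'(v,\tau)/\theta_1(v,\tau)+c_1v$, $\wp=-\zeta'$, with the heat equation $\partial_v^2\theta_1=4\pi{\rm i}\,\partial_\tau\theta_1$ and a Weierstrass addition identity. The differences are organizational rather than substantive: the paper starts from the symmetric identity $\left[\zeta(x)+\zeta(y)+\zeta(z)\right]^2=\wp(x)+\wp(y)+\wp(z)$ for $x+y+z=0$ and performs the substitution $y\mapsto-y$, $z\mapsto x-y$ at the end, whereas you use the equivalent factored pair of two-variable addition formulas (your displayed $\zeta$- and $\wp$-relations compose to exactly that identity) and pin down the constant $c_1=-\theta_1'''(0,\tau)/\bigl(3\theta_1'(0,\tau)\bigr)$ by Laurent expansion where the paper instead invokes $4\pi{\rm i}\,\partial_\tau\log\theta_1'(0,\tau)=-12\pi{\rm i}g_1(\tau)$.
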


\begin{proof}
Applying the formulas\vspace{-.5ex}
\begin{gather*}
\zeta(v,\tau)=\frac{\theta_1^{\prime}(v,\tau)}{\theta_1(v,\tau)}+4\pi {\rm i}g_1(\tau)v,\qquad
\wp(v,\tau)=-\frac{\theta_1^{\prime\prime}(v,\tau)}{\theta_1(v,\tau)} +{\left(\frac{\theta_1^{\prime}(v,\tau)}{\theta_1(v,\tau)}\right)}^2-4\pi {\rm i}g_1(\tau),
\end{gather*}
in the identity~\cite{E.T.WhittakerandG.N.Watson}\vspace{-.5ex}
\begin{gather*}
\left[\zeta(x)+\zeta(y)+\zeta(z)\right]^2=\wp(x)+\wp(y)+\wp(z),
\end{gather*}
we get
\begin{gather*}
\left(\frac{\theta_1^{\prime}(x,\tau)}{\theta_1(x,\tau)}+\frac{\theta_1^{\prime}(y,\tau)}{\theta_1(y,\tau)} +\frac{\theta_1^{\prime}(z,\tau)}{\theta_1(z,\tau)}\right)^2
\\[-.5ex] \qquad
{}=-12\pi {\rm i}g_1(\tau)-\frac{\theta_1^{\prime\prime}(x,\tau)}{\theta_1(x,\tau)} +\frac{\theta_1^{\prime^2}(x,\tau)}{\theta_1^2(x,\tau)} -\frac{\theta_1^{\prime\prime}(y,\tau)}{\theta_1(y,\tau)}
+\frac{\theta_1^{\prime^2}(y,\tau)}{\theta_1^2(y,\tau)} -\frac{\theta_1^{\prime\prime}(z,\tau)}{\theta_1(z,\tau)} +\frac{\theta_1^{\prime^2}(z,\tau)}{\theta_1^2(z,\tau)}.
\end{gather*}
Simplifying
\begin{gather*}
2\frac{\theta_1^{\prime}(x-y,\tau)}{\theta_1(x-y,\tau)} \left[\frac{\theta_1^{\prime}(x,\tau)}{\theta_1(x,\tau)} -\frac{\theta_1^{\prime}(y,\tau)}{\theta_1(y,\tau)}\right] +2\frac{\theta_1^{\prime}(x,\tau)}{\theta_1(x,\tau)} \frac{\theta_1^{\prime}(y,\tau)}{\theta_1(y,\tau)}
\\[-.5ex] \qquad
{}=3\frac{\eta}{\omega}-\frac{\theta_1^{\prime\prime}(x,\tau)}{\theta_1(x,\tau)} -\frac{\theta_1^{\prime\prime}(y,\tau)}{\theta_1(y,\tau)} -\frac{\theta_1^{\prime\prime}(z,\tau)}{\theta_1(z,\tau)},
\end{gather*}
using the fact that
\begin{gather}\label{difeqtheta}
4\pi {\rm i}\frac{\partial_{\tau}\theta_1^{\prime}(0,\tau)}{\theta_1^{\prime}(0,\tau)}=-12\pi {\rm i}g_1(\tau),
\qquad
\frac{\partial^2}{\partial^2 v}\theta_1(v,\tau)=4\pi {\rm i}\frac{\partial}{\partial \tau}\theta_1(v,\tau),
\end{gather}
and doing the substitution $y\mapsto -y$, $z\mapsto x-y$, we get the desired identity.
\end{proof}

Substituting in the lemma $x=v_2$, $y=-v_2$ we get
\begin{gather}
\label{idtheta1a}
2\frac{\theta_1^{\prime\prime}(v_2,\tau)}{\theta_1(v_2,\tau)} +2\frac{\theta_1^{\prime^2}(v_2,\tau)}{\theta_1^2(v_2,\tau)}
=4\pi {\rm i} \frac{\partial}{\partial \tau}\left(\log\left(\frac{\theta_1^{\prime}(0,\tau)}{\theta_1(2v_2,\tau)}\right)\right) +4\frac{\theta_1^{\prime}(2v_2,\tau)}{\theta_1(2v_2,\tau)} \frac{\theta_1^{\prime}(v_2,\tau)}{\theta_1(v_2,\tau)}.
\end{gather}
Substituting~\eqref{idtheta1a} in~\eqref{g131}
\begin{gather*}
g^{13}=-2\pi {\rm i} t^2 \frac{\partial}{\partial \tau}\left(\log\left(\frac{\theta_1^{\prime}(0,\tau)}{\theta_1(2v_2,\tau)}\right)\right).
\end{gather*}
\textbf{Computing $\boldsymbol {g^{22}}$:}
\begin{gather*}
g^{22}=\frac{1}{2}\left(\frac{\partial t^2}{\partial v_0}\right)^2-\frac{1}{2}\left(\frac{\partial t^2}{\partial v_2}\right)^2+2\frac{\partial t^2}{\partial u}\frac{\partial t^2}{\partial \tau}=\frac{1}{2}\left(\frac{\partial t^2}{\partial v_0}\right)^2-\frac{1}{2}\left(\frac{\partial t^2}{\partial v_2}\right)^2-4\pi {\rm i}t^2\frac{\partial t^2}{\partial \tau}.
\end{gather*}
First, we separately compute $\frac{\partial t^2}{\partial v_2}$, $\frac{\partial t^2}{\partial v_0}$, $\frac{\partial t^2}{\partial \tau}$
\begin{gather*}
\frac{1}{2}\left(\frac{\partial t^2}{\partial v_0}\right)^2=\frac{\big(t^2\big)^2}{2}\left[\frac{\theta_1^{\prime}(v_0+v_2,\tau)}{\theta_1(v_0+v_2,\tau)} +\frac{\theta_1^{\prime}(v_0-v_2,\tau)}{\theta_1(v_0-v_2,\tau)}\right]^2,
\\[-.5ex]
-\frac{1}{2}\left(\frac{\partial t^2}{\partial v_2}\right)^2 =-\frac{\big(t^2\big)^2}{2}\left[-\frac{\theta_1^{\prime}(v_0-v_2,\tau)}{\theta_1(v_0-v_2,\tau)} +\frac{\theta_1^{\prime}(v_0+v_2,\tau)}{\theta_1(v_0+v_2,\tau)} -2\frac{\theta_1^{\prime}(2v_2,\tau)}{\theta_1(2v_2,\tau)}\right]^2,
\\[-.5ex]
-4\pi {\rm i} t^2\frac{\partial t^2}{\partial \tau}=-4\pi {\rm i} \frac{\big(t^2\big)^2}{2}\left[\frac{\partial_{\tau}\theta_1(v_0+v_2,\tau)}{\theta_1(v_0+v_2,\tau)} +\frac{\partial_{\tau}\theta_1(v_0-v_2,\tau)}{\theta_1(v_0-v_2,\tau)} -\frac{\partial_{\tau}\theta_1(2v_2,\tau)}{\theta_1(2v_2,\tau)}\right]
\\ \hphantom{-4\pi {\rm i} t^2\frac{\partial t^2}{\partial \tau}=}
{}-4\pi {\rm i} \frac{\big(t^2\big)^2}{2}\left[-\frac{\partial_{\tau}\theta_1^{\prime}(0,\tau)}{\theta_1^{\prime}(0,\tau)}\right].
\end{gather*}
Summing the equations we get
\begin{gather*}
g^{22}=\frac{\big(t^2\big)^2}{2}\left[4\frac{\theta_1^{\prime}(v_0+v_2,\tau)}{\theta_1(v_0+v_2,\tau)} \frac{\theta_1^{\prime}(v_0-v_2,\tau)}{\theta_1(v_0-v_2,\tau)}\right]
\\[.3ex] \hphantom{g^{22}=}
{}+\frac{\big(t^2\big)^2}{2}\left[4\frac{\theta_1^{\prime}(2v_2,\tau)}{\theta_1(2v_2,\tau)} \left[-\frac{\theta_1^{\prime}(v_0-v_2,\tau)}{\theta_1(v_0-v_2,\tau)} +\frac{\theta_1^{\prime}(v_0+v_2,\tau)}{\theta_1(v_0+v_2,\tau)}\right] -4\frac{\theta_1^{\prime^2}(2v_2,\tau)}{\theta_1^2(2v_2,\tau)}\right]
\\[.3ex] \hphantom{g^{22}=}
{}+\frac{\big(t^2\big)^2}{2}\left[-2\frac{\theta_1^{\prime\prime}(v_0+v_2,\tau)}{\theta_1(v_0+v_2,\tau)} -2\frac{\theta_1^{\prime\prime}(v_0-v_2,\tau)}{\theta_1(v_0-v_2,\tau)}-8\pi {\rm i}\left[-\frac{\partial_{\tau}\theta_1(2v_2,\tau)}{\theta_1(2v_2,\tau)} -\frac{\partial_{\tau}\theta_1^{\prime}(0,\tau)}{\theta_1^{\prime}(0,\tau)}\right]\right],
\end{gather*}
where was used~\eqref{difeqtheta}. Substituting in Lemma~\ref{lemmaa1l} $x=v_0+v_2$, $y=v_0-v_2$ we get
\begin{gather*}
\frac{\theta_1^{\prime\prime}(v_0-v_2,\tau)}{\theta_1(v_0-v_2,\tau)} +\frac{\theta_1^{\prime\prime}(v_0+v_2,\tau)}{\theta_1(v_0+v_2,\tau)} -2\frac{\theta_1^{\prime}(v_0-v_2,\tau)}{\theta_1(v_0-v_2,\tau)} \frac{\theta_1^{\prime}(v_0+v_2,\tau)}{\theta_1(v_0+v_2,\tau)}
\\[.3ex] \qquad
{}=4\pi {\rm i} \frac{\partial}{\partial \tau}\left(\log\left(\frac{\theta_1^{\prime}(0,\tau)}{\theta(2v_2,\tau)}\right)\right) +2\frac{\theta_1^{\prime}(2v_2,\tau)}{\theta_1(2v_2,\tau)} \left[\frac{\theta_1^{\prime}(v_0+v_2,\tau)}{\theta_1
(v_0+v_2,\tau)}-\frac{\theta_1^{\prime}(v_0-v_2,\tau)}{\theta_1(v_0-v_2,\tau)}\right].
\end{gather*}
Substituting the last identity in $g^{22}$ we get
\begin{gather*}
g^{22}=2\big(t^2\big)^2\left[\frac{\theta_1^{\prime\prime}(2v_2,\tau)}{\theta_1(2v_2,\tau)} -\frac{\theta_1^{\prime^2}(2v_2,\tau)}{\theta_1^2(2v_2,\tau)}\right].
\end{gather*}
\textbf{Computing $\boldsymbol {g^{12}}$:}
\begin{gather*}
g^{12}=\frac{1}{2}\frac{\partial t^1}{\partial v_0}\frac{\partial t^2}{\partial v_0}-\frac{1}{2}\frac{\partial t^1}{\partial v_2}\frac{\partial t^2}{\partial v_2}+\frac{\partial t^1}{\partial u}\frac{\partial t^2}{\partial \tau}+\frac{\partial t^2}{\partial u}\frac{\partial t^1}{\partial \tau}
\\ \hphantom{g^{12}}
{}=\frac{1}{2}\frac{\partial t^1}{\partial v_0}\frac{\partial t^2}{\partial v_0}-\frac{1}{2}\frac{\partial t^1}{\partial v_2}\frac{\partial t^2}{\partial v_2}-2\pi {\rm i}t^2\frac{\partial t^1}{\partial \tau}-2\pi {\rm i}t^1\frac{\partial t^2}{\partial \tau}.
\end{gather*}
We have that
\begin{gather*}
\frac{\partial t^1}{\partial v_0}=2\frac{\theta_1^{\prime}(v_0,\tau)}{\theta_1(v_0,\tau)} \frac{\theta_1^2(v_0,\tau)}{\theta_1^2(v_2,\tau)}{\rm e}^{-2\pi {\rm i} u} +2\frac{\partial t^2}{\partial v_0}\frac{\theta_1^{\prime}(v_2,\tau)}{\theta_1(v_2,\tau)},
\\
\frac{\partial t^1}{\partial v_2}=-2\frac{\theta_1^{\prime}(v_2,\tau)}{\theta_1(v_2,\tau)} \frac{\theta_1^2(v_0,\tau)}{\theta_1^2(v_2,\tau)}{\rm e}^{-2\pi {\rm i} u} +2\frac{\partial t^2}{\partial v_2}\frac{\theta_1^{\prime}(v_2,\tau)}{\theta_1(v_2,\tau)} +2t^2\left[\frac{\theta_1^{\prime\prime}(v_2,\tau)}{\theta_1(v_2,\tau)} -\frac{\theta_1^{\prime^2}(v_2,\tau)}{\theta_1^2(v_2,\tau)}\right],
\\
\frac{\partial t^1}{\partial \tau}=2\left[\frac{\partial_{\tau}\theta_1(v_0,\tau)}{\theta_1(v_0,\tau)} -\frac{\partial_{\tau}\theta_1(v_2,\tau)}{\theta_1(v_2,\tau)}\right] \frac{\theta_1^2(v_0,\tau)}{\theta_1^2(v_2,\tau)}{\rm e}^{-2\pi {\rm i} u}\!+2\frac{\partial t^2}{\partial \tau}\frac{\theta_1^{\prime}(v_2,\tau)}{\theta_1(v_2,\tau)}
+2t^2\frac{\partial}{\partial \tau}\left(\frac{\theta_1^{\prime}(v_2,\tau)}{\theta_1(v_2,\tau)}\right).
\end{gather*}
Therefore
\begin{gather*}
\frac{1}{2}\frac{\partial t^1}{\partial v_0}\frac{\partial t^2}{\partial v_0}=t^2\left[\frac{\theta_1^{\prime}(v_0\!+\!v_2,\tau)}{\theta_1(v_0\!+\!v_2,\tau)} +\frac{\theta_1^{\prime}(v_0\!-\!v_2,\tau)}{\theta_1(v_0\!-\!v_2,\tau)}\right] \frac{\theta_1^{\prime}(v_0,\tau)}{\theta_1(v_0,\tau)} \frac{\theta_1^2(v_0,\tau)}{\theta_1^2(v_2,\tau)}{\rm e}^{-2\pi {\rm i} u}\!
+\!\left(\frac{\partial t^2}{\partial v_0}\right)^2\!\frac{\theta_1^{\prime}(v_2,\tau)}{\theta_1(v_2,\tau)},
\\
-\frac{1}{2}\frac{\partial t^1}{\partial v_2}\frac{\partial t^2}{\partial v_2}=-t^2\left[ - \frac{\theta_1^{\prime}(v_0-v_2,\tau)}{\theta_1(v_0-v_2,\tau)}
+\frac{\theta_1^{\prime}(v_0+v_2,\tau)}{\theta_1(v_0+v_2,\tau)}\right] \frac{\theta_1^{\prime}(v_2,\tau)}{\theta_1(v_2,\tau)} \frac{\theta_1^2(v_0,\tau)}{\theta_1^2(v_2,\tau)}{\rm e}^{-2\pi {\rm i} u}
\\ \hphantom{-\frac{1}{2}\frac{\partial t^1}{\partial v_2}\frac{\partial t^2}{\partial v_2}=}
{}-t^2\left[-2\frac{\theta_1^{\prime}(2v_2,\tau)}{\theta_1(2v_2,\tau)}\right] \frac{\theta_1^{\prime}(v_2,\tau)}{\theta_1(v_2,\tau)} \frac{\theta_1^2(v_0,\tau)}{\theta_1^2(v_2,\tau)}{\rm e}^{-2\pi {\rm i} u}
-\left(\frac{\partial t^2}{\partial v_2}\right)^2\frac{\theta_1^{\prime}(v_2,\tau)}{\theta_1(v_2,\tau)}
\\ \hphantom{-\frac{1}{2}\frac{\partial t^1}{\partial v_2}\frac{\partial t^2}{\partial v_2}=-t^2\bigg[}
{}-t^2\frac{\partial t^2}{\partial v_2}\left[\frac{\theta_1^{\prime\prime}(v_2,\tau)}{\theta_1(v_2,\tau)} -\frac{\theta_1^{\prime^2}(v_2,\tau)}{\theta_1^2(v_2,\tau)}\right],
\\
-2\pi {\rm i}t^1\frac{\partial t^2}{\partial \tau}=-2\pi {\rm i}\left[\frac{\partial_{\tau}\theta_1(v_0+v_2,\tau)}{\theta_1(v_0+v_2,\tau)} +\frac{\partial_{\tau}\theta_1(v_0-v_2,\tau)}{\theta_1(v_0-v_2,\tau)}\right]t^2 \frac{\theta_1^2(v_0,\tau)}{\theta_1^2(v_2,\tau)}{\rm e}^{-2\pi {\rm i} u}
\\ \hphantom{-2\pi {\rm i}t^1\frac{\partial t^2}{\partial \tau}=}
{}-2\pi {\rm i}\left[-\frac{\partial_{\tau}\theta_1(2v_2,\tau)}{\theta_1(2v_2,\tau)} -\frac{\partial_{\tau}\theta_1^{\prime}(0,\tau)}{\theta_1^{\prime}(0,\tau)}\right]t^2 \frac{\theta_1^2(v_0,\tau)}{\theta_1^2(v_2,\tau)}{\rm e}^{-2\pi {\rm i} u}
-4\pi {\rm i}t^2\frac{\partial t^2}{\partial \tau}\frac{\theta_1^{\prime}(v_2,\tau)}{\theta_1(v_2,\tau)},
\\
-2\pi {\rm i}t^2\frac{\partial t^1}{\partial \tau}=-4\pi {\rm i}t^2\left[\frac{\partial_{\tau}\theta_1(v_0,\tau)}{\theta_1(v_0,\tau)} -\frac{\partial_{\tau}\theta_1(v_2,\tau)}{\theta_1(v_2,\tau)}\right] \frac{\theta_1^2(v_0,\tau)}{\theta_1^2(v_2,\tau)}{\rm e}^{-2\pi {\rm i} u}
-4\pi {\rm i}t^2\frac{\partial t^2}{\partial \tau}\frac{\theta_1^{\prime}(v_2,\tau)}{\theta_1(v_2,\tau)}
\\ \hphantom{-2\pi {\rm i}t^2\frac{\partial t^1}{\partial \tau}=}
-4\pi {\rm i}\big(t^2\big)^2\frac{\partial}{\partial \tau}\left(\frac{\theta_1^{\prime}(v_2,\tau)}{\theta_1(v_2,\tau)}\right).
\end{gather*}
Let us separate $g^{12}$ in three terms
\begin{gather*}
g^{12}=(1)+(2)+(3),
\end{gather*}
where
\begin{gather*}
(1)=t^2\frac{\theta_1^2(v_0,\tau)}{\theta_1^2(v_2,\tau)}{\rm e}^{-2\pi {\rm i} u}\left[\frac{\theta_1^{\prime}(v_0,\tau)}{\theta_1(v_0,\tau)} \left(\frac{\theta_1^{\prime}(v_0+v_2,\tau)}{\theta_1(v_0+v_2,\tau)} +\frac{\theta_1^{\prime}(v_0-v_2,\tau)}{\theta_1(v_0-v_2,\tau)}\right)\right]
\\ \hphantom{(1)=}
{}+t^2\frac{\theta_1^2(v_0,\tau)}{\theta_1^2(v_2,\tau)}{\rm e}^{-2\pi {\rm i} u}\left[\frac{\theta_1^{\prime}(v_2,\tau)}{\theta_1(v_2,\tau)} \left( - \frac{\theta_1^{\prime}(v_0-v_2,\tau)}{\theta_1(v_0-v_2,\tau)} +\frac{\theta_1^{\prime}(v_0+v_2,\tau)}{\theta_1(v_0+v_2,\tau)} -2\frac{\theta_1^{\prime}(2v_2,\tau)}{\theta_1(2v_2,\tau)}\right)\right]
\\ \hphantom{(1)=}
{}+t^2\frac{\theta_1^2(v_0,\tau)}{\theta_1^2(v_2,\tau)}{\rm e}^{-2\pi {\rm i} u}\left[ - 2\pi {\rm i}\left(\frac{\partial_{\tau}\theta_1(v_0+v_2,\tau)}{\theta_1(v_0+v_2,\tau)} +\frac{\partial_{\tau}\theta_1(v_0-v_2,\tau)}{\theta_1(v_0-v_2,\tau)}\right)\right]
\\ \hphantom{(1)=}
{}+t^2\frac{\theta_1^2(v_0,\tau)}{\theta_1^2(v_2,\tau)}{\rm e}^{-2\pi {\rm i} u}\left[-2\pi {\rm i}\left(-\frac{\partial_{\tau}\theta_1(2v_2,\tau)}{\theta_1(2v_2,\tau)} -\frac{\partial_{\tau}\theta_1^{\prime}(0,\tau)}{\theta_1^{\prime}(0,\tau)}\right)\right]
\\ \hphantom{(1)=}
{}+t^2\frac{\theta_1^2(v_0,\tau)}{\theta_1^2(v_2,\tau)}{\rm e}^{-2\pi {\rm i} u}\left[ - 4\pi {\rm i}\left(\frac{\partial_{\tau}\theta_1(v_0,\tau)}{\theta_1(v_0,\tau)} -\frac{\partial_{\tau}\theta_1(v_2,\tau)}{\theta_1(v_2,\tau)}\right)\right],
\\
(2)=\frac{\theta_1^{\prime}(v_2,\tau)}{\theta_1(v_2,\tau)}
\left[\left(\frac{\partial t^2}{\partial v_0}\right)^2
 - \left(\frac{\partial t^2}{\partial v_2}\right)^2 \!\!\!-8\pi {\rm i}t^2\frac{\partial t^2}{\partial \tau}\right]\\
 \hphantom{(2)}{}
=4\frac{\theta_1^{\prime}(v_2,\tau)}{\theta_1(v_2,\tau)}\big(t^2\big)^2
\left[\frac{\theta_1^{\prime\prime}(2v_2,\tau)}{\theta_1(2v_2,\tau)} -\frac{\theta_1^{\prime^2}(2v_2,\tau)}{\theta_1^2(2v_2,\tau)}\right],
\end{gather*}
where was used the previous computation of $g^{22}$
\begin{gather*}
(3)=-4\pi {\rm i}\big(t^2\big)^2\frac{\partial}{\partial \tau}\left(\frac{\theta_1^{\prime}(v_2,\tau)}{\theta_1(v_2,\tau)}\right)
-t^2\frac{\partial t^2}{\partial v_2}\left[\frac{\theta_1^{\prime\prime}(v_2,\tau)}{\theta_1(v_2,\tau)} -\frac{\theta_1^{\prime^2}(v_2,\tau)}{\theta_1^2(v_2,\tau)}\right].
\end{gather*}
To simplify the expression $(1)$ we need to use the Lemma~\ref{lemmaa1l} with the following substitutions $x=v_0$, $y=v_2$
\begin{gather}
\frac{\theta_1^{\prime\prime}(v_0,\tau)}{\theta_1(v_0,\tau)} +\frac{\theta_1^{\prime\prime}(v_2,\tau)}{\theta_1(v_2,\tau)} -2\frac{\theta_1^{\prime}(v_0,\tau)}{\theta_1(v_0,\tau)} \frac{\theta_1^{\prime}(v_2,\tau)}{\theta_1(v_2,\tau)}\nonumber
\\ \qquad
{}=4\pi {\rm i} \frac{\partial}{\partial \tau}\left(\log\left(\frac{\theta_1^{\prime}(0,\tau)}{\theta(v_0-v_2,\tau)}\right)\right)
+2\frac{\theta_1^{\prime}(v_0-v_2,\tau)}{\theta_1(v_0-v_2,\tau)} \left[\frac{\theta_1^{\prime}(v_0,\tau)}{\theta_1(v_0,\tau)} -\frac{\theta_1^{\prime}(v_2,\tau)}{\theta_1(v_2,\tau)}\right].
\label{eq theta id g12 1}
\end{gather}
Using the substitutions $x=v_0$, $y=-v_2$
\begin{gather}
\frac{\theta_1^{\prime\prime}(v_0,\tau)}{\theta_1(v_0,\tau)} +\frac{\theta_1^{\prime\prime}(v_2,\tau)}{\theta_1(v_2,\tau)} +2\frac{\theta_1^{\prime}(v_0,\tau)}{\theta_1(v_0,\tau)} \frac{\theta_1^{\prime}(v_2,\tau)}{\theta_1(v_2,\tau)}\nonumber
\\ \qquad
{}=4\pi {\rm i} \frac{\partial}{\partial \tau}\left(\log\left(\frac{\theta_1^{\prime}(0,\tau)}{\theta(v_0+v_2,\tau)}\right)\right) +2\frac{\theta_1^{\prime}(v_0+v_2,\tau)}{\theta_1(v_0+v_2,\tau)} \left[\frac{\theta_1^{\prime}(v_0,\tau)}{\theta_1(v_0,\tau)} +\frac{\theta_1^{\prime}(v_2,\tau)}{\theta_1(v_2,\tau)}\right].
\label{eq theta id g12 2}
\end{gather}
Summing~\eqref{eq theta id g12 1} with~\eqref{eq theta id g12 2}
\begin{gather*}
2\frac{\theta_1^{\prime\prime}(v_0,\tau)}{\theta_1(v_0,\tau)} +2\frac{\theta_1^{\prime\prime}(v_2,\tau)}{\theta_1(v_2,\tau)} -4\pi {\rm i} \frac{\partial}{\partial \tau}\left( \log\left(\frac{\theta_1^{\prime}(0,\tau)}{\theta(v_0-v_2,\tau)}\right)\right)
-4\pi {\rm i} \frac{\partial}{\partial \tau}\left( \log\left(\frac{\theta_1^{\prime}(0,\tau)}{\theta(v_0+v_2,\tau)}\right)\right)
\\ \qquad
{}=2\frac{\theta_1^{\prime}(v_0,\tau)}{\theta_1(v_0,\tau)} \left(\frac{\theta_1^{\prime}(v_0+v_2,\tau)}{\theta_1(v_0+v_2,\tau)} +\frac{\theta_1^{\prime}(v_0-v_2,\tau)}{\theta_1(v_0-v_2,\tau)}\right)
\\ \qquad \phantom{=}
+2\frac{\theta_1^{\prime}(v_2,\tau)}{\theta_1(v_2,\tau)} \left(-\frac{\theta_1^{\prime}(v_0-v_2,\tau)}{\theta_1(v_0-v_2,\tau)} +\frac{\theta_1^{\prime}(v_0+v_2,\tau)}{\theta_1(v_0+v_2,\tau)}\right).
\end{gather*}
Substituting in $(1)$ we get
\begin{gather*}
(1)=t^2\frac{\theta_1^2(v_0,\tau)}{\theta_1^2(v_2,\tau)}{\rm e}^{-2\pi {\rm i} u}\left[ - 2\frac{\theta_1^{\prime}(2v_2,\tau)}{\theta_1(2v_2,\tau)} \frac{\theta_1^{\prime}(v_2,\tau)}{\theta_1(v_2,\tau)}\right]
\\ \hphantom{(1)=}
+t^2\frac{\theta_1^2(v_0,\tau)}{\theta_1^2(v_2,\tau)}{\rm e}^{-2\pi {\rm i} u}\left[ - 2\pi {\rm i}\left( - \frac{\partial_{\tau}\theta_1(2v_2,\tau)}{\theta_1(2v_2,\tau)} +\frac{\partial_{\tau}\theta_1^{\prime}(0,\tau)}{\theta_1^{\prime}(0,\tau)}\right)+8\pi {\rm i}\frac{\partial_{\tau}\theta_1(v_2,\tau)}{\theta_1(v_2,\tau)}\right]
\\ \hphantom{(1)}
{}=t^2\frac{\theta_1^2(v_0,\tau)}{\theta_1^2(v_2,\tau)}{\rm e}^{-2\pi {\rm i} u}\left[ - 2\frac{\theta_1^{\prime}(2v_2,\tau)}{\theta_1(2v_2,\tau)} \frac{\theta_1^{\prime}(v_2,\tau)}{\theta_1(v_2,\tau)}-2\pi {\rm i}\frac{\partial}{\partial \tau}\left( \log\frac{\theta_1^{\prime}(0,\tau)}{\theta_1(2v_2,\tau)}\right) +2\frac{\theta_1^{\prime\prime}(v_2,\tau)}{\theta_1(v_2,\tau)}\right].
\end{gather*}
Using the identity~\eqref{lemmaa1}, we get
\begin{gather*}
(1)=t^2\frac{\theta_1^2(v_0,\tau)}{\theta_1^2(v_2,\tau)}{\rm e}^{-2\pi {\rm i} u}\left[\frac{\theta_1^{\prime\prime}(v_2,\tau)}{\theta_1(v_2,\tau)} -\frac{\theta_1^{\prime^2}(v_2,\tau)}{\theta_1^2(v_2,\tau)}\right].
\end{gather*}
We compute $(3)$
\begin{gather*}
(3)=-4\pi {\rm i}\big(t^2\big)^2\frac{\partial}{\partial \tau}\left(\frac{\theta_1^{\prime}(v_2,\tau)}{\theta_1(v_2,\tau)}\right)
-t^2\frac{\partial t^2}{\partial v_2}\left[\frac{\theta_1^{\prime\prime}(v_2,\tau)}{\theta_1(v_2,\tau)} -\frac{\theta_1^{\prime^2}(v_2,\tau)}{\theta_1^2(v_2,\tau)}\right]
\\ \hphantom{(3)}
{}=-4\pi {\rm i}\big(t^2\big)^2\frac{\partial}{\partial \tau}\left(\frac{\theta_1^{\prime}(v_2,\tau)}{\theta_1(v_2,\tau)}\right) -t^2\left(t^1-2t^2\frac{\theta_1^{\prime}(2v_2,\tau)}{\theta_1(2v_2,\tau)}\right) \left[\frac{\theta_1^{\prime\prime}(v_2,\tau)}{\theta_1(v_2,\tau)} -\frac{\theta_1^{\prime^2}(v_2,\tau)}{\theta_1^2(v_2,\tau)}\right]
\\ \hphantom{(3)}
{}=-4\pi {\rm i}\big(t^2\big)^2\frac{\partial}{\partial \tau}\left(\frac{\theta_1^{\prime}(v_2,\tau)}{\theta_1(v_2,\tau)}\right)
+2\big(t^2\big)^2\frac{\theta_1^{\prime}(2v_2,\tau)}{\theta_1(2v_2,\tau)} \left[\frac{\theta_1^{\prime\prime}(v_2,\tau)}{\theta_1(v_2,\tau)} -\frac{\theta_1^{\prime^2}(v_2,\tau)}{\theta_1^2(v_2,\tau)}\right]
\\ \hphantom{(3)=}
{}-t^2\frac{\theta_1^2(v_0,\tau)}{\theta_1^2(v_2,\tau)}{\rm e}^{-2\pi {\rm i} u}\left[\frac{\theta_1^{\prime\prime}(v_2,\tau)}{\theta_1(v_2,\tau)} \!-\!\frac{\theta_1^{\prime^2}(v_2,\tau)}{\theta_1^2(v_2,\tau)}\right]
\!-\!2\big(t^2\big)^2\frac{\theta_1^{\prime}(v_2,\tau)}{\theta_1(v_2,\tau)} \left[\frac{\theta_1^{\prime\prime}(v_2,\tau)}{\theta_1(v_2,\tau)} \!-\!\frac{\theta_1^{\prime^2}(v_2,\tau)}{\theta_1^2(v_2,\tau)}\right].
\end{gather*}
The result implies
\begin{gather*}
(1) + (3) =-4\pi {\rm i}\big(t^2\big)^2\frac{\partial}{\partial \tau}\left(\frac{\theta_1^{\prime}(v_2,\tau)}{\theta_1(v_2,\tau)}\right)\\
\hphantom{(1) + (3) =}{}
 - 2\big(t^2\big)^2\left[\frac{\theta_1^{\prime}(v_2,\tau)}{\theta_1(v_2,\tau)} - \frac{\theta_1^{\prime}(2v_2,\tau)}{\theta_1(2v_2,\tau)}\right] \left[\frac{\theta_1^{\prime\prime}(v_2,\tau)}{\theta_1(v_2,\tau)} - \frac{\theta_1^{\prime^2}(v_2,\tau)}{\theta_1^2(v_2,\tau)}\right] .
\end{gather*}
\textbf{Computing $\boldsymbol {g^{12}}$:}
\begin{gather*}
g^{12}=-4\pi {\rm i}\big(t^2\big)^2\frac{\partial}{\partial \tau}\left(\frac{\theta_1^{\prime}(v_2,\tau)}{\theta_1(v_2,\tau)}\right)
-2\big(t^2\big)^2\left[\frac{\theta_1^{\prime}(v_2,\tau)}{\theta_1(v_2,\tau)} -\frac{\theta_1^{\prime}(2v_2,\tau)}{\theta_1(2v_2,\tau)}\right] \left[\frac{\theta_1^{\prime\prime}(v_2,\tau)}{\theta_1(v_2,\tau)} -\frac{\theta_1^{\prime^2}(v_2,\tau)}{\theta_1^2(v_2,\tau)}\right]
\\ \hphantom{g^{12}=}
{}+4\frac{\theta_1^{\prime}(v_2,\tau)}{\theta_1(v_2,\tau)}\big(t^2\big)^2 \left[\frac{\theta_1^{\prime\prime}(2v_2,\tau)}{\theta_1(2v_2,\tau)} -\frac{\theta_1^{\prime^2}(2v_2,\tau)}{\theta_1^2(2v_2,\tau)}\right].
\end{gather*}
To simplify this expression we need to prove one more lemma.

\begin{Lemma}
\begin{gather}
2\frac{\theta_1^{\prime\prime\prime}(v_2,\tau)}{\theta_1(v_2,\tau)} +2\frac{\theta_1^{\prime\prime}(v_2,\tau)\theta_1^{\prime}(v_2,\tau)}{\theta_1(v_2,\tau)} -4\frac{\theta_1^{\prime^3}(v_2,\tau)}{\theta_1^3(v_2,\tau)}\nonumber
\\ \qquad
{}=4\pi {\rm i} \frac{\partial^2}{\partial v_2\partial \tau}\left( \log\left(\frac{\theta_1^{\prime}(0,\tau)}{\theta_1(2v_2,\tau)}\right)\right)
+8\frac{\theta_1^{\prime\prime}(2v_2,\tau)}{\theta_1(2v_2,\tau)} \frac{\theta_1^{\prime}(v_2,\tau)}{\theta_1(v_2,\tau)} -8\frac{\theta_1^{\prime^2}(2v_2,\tau)}{\theta_1^2(2v_2,\tau)} \frac{\theta_1^{\prime}(v_2,\tau)}{\theta_1(v_2,\tau)}\nonumber
\\ \qquad\hphantom{=}
{}+4\frac{\theta_1^{\prime}(2v_2,\tau)}{\theta_1(2v_2,\tau)} \frac{\theta_1^{\prime\prime}(v_2,\tau)}{\theta_1(v_2,\tau)}
-4\frac{\theta_1^{\prime}(2v_2,\tau)}{\theta_1(2v_2,\tau)} \frac{\theta_1^{\prime^2}(v_2,\tau)}{\theta_1^2(v_2,\tau)}.
\label{lemmaa2}
\end{gather}
\end{Lemma}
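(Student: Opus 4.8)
The plan is to recognise the claimed identity~\eqref{lemmaa2} as nothing more than the partial derivative $\partial/\partial v_2$ of the identity~\eqref{idtheta1a} already established above, which in turn is the specialisation $x=v_2$, $y=-v_2$ of Lemma~\ref{lemmaa1l}. Thus I would not re-derive anything from the Weierstrass relation; instead I would simply differentiate in $v_2$ a relation that is already in hand, and match terms on the two sides.

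Concretely, write $L(v_2,\tau)$ and $R(v_2,\tau)$ for the left- and right-hand sides of~\eqref{idtheta1a}, so that $L=R$ identically. First I would differentiate $L=2\frac{\theta_1^{\prime\prime}}{\theta_1}+2\frac{\theta_1^{\prime2}}{\theta_1^2}$ (all arguments at $v_2$). Using $\frac{d}{dv_2}\frac{\theta_1^{\prime\prime}}{\theta_1}=\frac{\theta_1^{\prime\prime\prime}}{\theta_1}-\frac{\theta_1^{\prime\prime}\theta_1^{\prime}}{\theta_1^2}$ together with $\frac{d}{dv_2}\frac{\theta_1^{\prime2}}{\theta_1^2}=\frac{2\theta_1^{\prime}\theta_1^{\prime\prime}}{\theta_1^2}-\frac{2\theta_1^{\prime3}}{\theta_1^3}$, the two $\theta_1^{\prime}\theta_1^{\prime\prime}/\theta_1^2$ contributions combine, and one obtains exactly the left-hand side of~\eqref{lemmaa2}, namely $2\frac{\theta_1^{\prime\prime\prime}}{\theta_1}+2\frac{\theta_1^{\prime\prime}\theta_1^{\prime}}{\theta_1^2}-4\frac{\theta_1^{\prime3}}{\theta_1^3}$.

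Next I would differentiate $R=4\pi {\rm i}\,\frac{\partial}{\partial\tau}\log\frac{\theta_1^{\prime}(0)}{\theta_1(2v_2)}+4\frac{\theta_1^{\prime}(2v_2)}{\theta_1(2v_2)}\frac{\theta_1^{\prime}(v_2)}{\theta_1(v_2)}$ in $v_2$. The first summand immediately yields $4\pi {\rm i}\,\frac{\partial^2}{\partial v_2\partial\tau}\log\frac{\theta_1^{\prime}(0)}{\theta_1(2v_2)}$, the mixed-derivative term of~\eqref{lemmaa2}; here I only need that $\partial_\tau$ and $\partial_{v_2}$ commute, which holds since $\theta_1$ is holomorphic. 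For the product I would apply the Leibniz rule, remembering that each derivative landing on a $2v_2$-argument carries an extra factor $2$ by the chain rule, so that $\frac{d}{dv_2}\frac{\theta_1^{\prime}(2v_2)}{\theta_1(2v_2)}=2\bigl(\frac{\theta_1^{\prime\prime}(2v_2)}{\theta_1(2v_2)}-\frac{\theta_1^{\prime2}(2v_2)}{\theta_1^2(2v_2)}\bigr)$ while $\frac{d}{dv_2}\frac{\theta_1^{\prime}(v_2)}{\theta_1(v_2)}=\frac{\theta_1^{\prime\prime}(v_2)}{\theta_1(v_2)}-\frac{\theta_1^{\prime2}(v_2)}{\theta_1^2(v_2)}$. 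Collecting the resulting four terms reproduces precisely the four $\theta_1(2v_2)$-weighted summands on the right of~\eqref{lemmaa2}.

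Since there is genuinely no analytic content beyond differentiating an identity that already holds, the only real ``difficulty'' is bookkeeping: I must keep track of the factor-of-two chain-rule contributions on every $2v_2$-argument and of the signs produced by differentiating the quotients. Matching the differentiated $L$ and $R$ then closes the proof.
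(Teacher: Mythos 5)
Your proposal is correct and is exactly the paper's own proof: the paper's argument consists of the single line ``differentiating the identity~\eqref{idtheta1a} with respect to $v_2$ we obtain~\eqref{lemmaa2}'', and you have simply carried out that differentiation explicitly, including the chain-rule factors of $2$ on the $2v_2$-arguments. Note in passing that your computation also silently corrects a typo in the statement: the second term on the left of~\eqref{lemmaa2} should have denominator $\theta_1^2(v_2,\tau)$, as your differentiated form shows.
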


\begin{proof}
Differentiating the identity \label{idtheta1} with respect to $v_2$ we obtain~\eqref{lemmaa2}.
\end{proof}
\textbf{Computing $\boldsymbol {g^{12}}$:}
\begin{gather*}
g^{12}=\big(t^2\big)^2\left[-\frac{\theta_1^{\prime\prime\prime}(v_2,\tau)}{\theta_1(v_2,\tau)} +\frac{\theta_1^{\prime}(v_2,\tau)\theta_1^{\prime\prime}(v_2,\tau)}{\theta_1^2(v_2,\tau)}
-2\frac{\theta_1^{\prime}(v_2,\tau)}{\theta_1(v_2,\tau)} \frac{\theta_1^{\prime\prime}(v_2,\tau)}{\theta_1(v_2,\tau)} +2\frac{\theta_1^{\prime^3}(v_2,\tau)}{\theta_1^3(v_2,\tau)}\right]
\\ \hphantom{g^{12}=}
{}+\big(t^2\big)^2\left[2\frac{\theta_1^{\prime}(2v_2,\tau)}{\theta_1(2v_2,\tau)} \frac{\theta_1^{\prime\prime}(v_2,\tau)}{\theta_1(v_2,\tau)} -2\frac{\theta_1^{\prime}(2v_2,\tau)}{\theta_1(2v_2,\tau)} \frac{\theta_1^{\prime^2}(v_2,\tau)}{\theta_1^2(v_2,\tau)} +4\frac{\theta_1^{\prime}(v_2,\tau)}{\theta_1(v_2,\tau)} \frac{\theta_1^{\prime\prime}(2v_2,\tau)}{\theta_1(2v_2,\tau)}\right]
\\ \hphantom{g^{12}=}
{}+\big(t^2\big)^2 \left[ - 4\frac{\theta_1^{\prime}(v_2,\tau)}{\theta_1(v_2,\tau)} \frac{\theta_1^{\prime^2}(2v_2,\tau)}{\theta_1^2(2v_2,\tau)}\right].
\end{gather*}
Applying~\eqref{lemmaa2}, we get
\begin{gather*}
g^{12}=-2\pi {\rm i}\big(t^2\big)^2\left[\frac{\partial^2}{\partial v_2\partial \tau}\left(\!\log\left(\frac{\theta_1^{\prime}(0,\tau)}{\theta_1(2v_2,\tau)}\right)\right)\right].
\end{gather*}
\textbf{Computing $\boldsymbol {g^{11}}$:}
\begin{gather*}
g^{11}=\frac{1}{2}\left(\frac{\partial t^1}{\partial v_0}\right)^2-\frac{1}{2}\left(\frac{\partial t^1}{\partial v_2}\right)^2+2\frac{\partial t^1}{\partial u}\frac{\partial t^1}{\partial \tau}
=\frac{1}{2}\left(\frac{\partial t^1}{\partial v_0}\right)^2-\frac{1}{2}\left(\frac{\partial t^1}{\partial v_2}\right)^2-4\pi {\rm i}t^1\frac{\partial t^1}{\partial \tau}.
\end{gather*}
Computing $\frac{1}{2}\big(\frac{\partial t^1}{\partial v_0}\big)^2$, $\frac{1}{2}\big(\frac{\partial t^1}{\partial v_2}\big)^2$ and $-4\pi {\rm i}t^1\frac{\partial t^1}{\partial \tau}$:

To simplify the computation let us define
\begin{gather*}
A:=\frac{\theta_1^2(v_0,\tau)}{\theta_1^2(v_2,\tau)}{\rm e}^{-2\pi {\rm i} u}.
\end{gather*}
Then,
\begin{gather*}
\frac{1}{2}\left(\frac{\partial t^1}{\partial v_0}\right)^2 =2\frac{\theta_1^{\prime^2}(v_0,\tau)}{\theta_1^2(v_0,\tau)}A^2 +4A\frac{\theta_1^{\prime}(v_0,\tau)}{\theta_1(v_0,\tau)}
\frac{\partial t^2}{\partial v_0}\frac{\theta_1^{\prime}(v_2,\tau)}{\theta_1(v_2,\tau)} +2\left(\frac{\partial t^2}{\partial v_0}\right)^2\frac{\theta_1^{\prime^2}(v_2,\tau)}{\theta_1^2(v_2,\tau)},
\\
-\frac{1}{2}\left(\frac{\partial t^1}{\partial v_2}\right)^2\! =-2\frac{\theta_1^{\prime^2}(v_2\tau)}{\theta_1^2(v_2,\tau)}A^2
+2A\frac{\theta_1^{\prime}(v_2,\tau)}{\theta_1(v_2,\tau)}\! \left[2\frac{\partial t^2}{\partial v_2}\frac{\theta_1^{\prime}(v_2,\tau)}{\theta_1(v_2,\tau)} +2t^2\!\left[\frac{\theta_1^{\prime\prime}(v_2,\tau)}{\theta_1(v_2,\tau)} -\frac{\theta_1^{\prime^2}(v_2,\tau)}{\theta_1^2(v_2,\tau)}\right]\right]
\\ \hphantom{\frac{1}{2}\left(\frac{\partial t^1}{\partial v_0}\right)^2 =}
-2\left(\frac{\partial t^2}{\partial v_2}\right)^2 \frac{\theta_1^{\prime^2}(v_2,\tau)}{\theta_1^2(v_2,\tau)} -4t^2\frac{\partial t^2}{\partial v_2}\frac{\theta_1^{\prime}(v_2,\tau)}{\theta_1(v_2,\tau)} \left[\frac{\theta_1^{\prime\prime}(v_2,\tau)}{\theta_1(v_2,\tau)} -\frac{\theta_1^{\prime^2}(v_2,\tau)}{\theta_1^2(v_2,\tau)}\right]
\\ \hphantom{\frac{1}{2}\left(\frac{\partial t^1}{\partial v_0}\right)^2 =}
-2\big(t^2\big)^2\left[\frac{\theta_1^{\prime\prime}(v_2,\tau)}{\theta_1(v_2,\tau)} -\frac{\theta_1^{\prime^2}(v_2,\tau)}{\theta_1^2(v_2,\tau)}\right]^2,
\\
-4\pi {\rm i}t^1\frac{\partial t^1}{\partial \tau}=-8\pi {\rm i}A^2\left[\frac{\partial_{\tau}\theta_1(v_0,\tau)}{\theta_1(v_0,\tau)} -\frac{\partial_{\tau}\theta_1(v_2,\tau)}{\theta_1(v_2,\tau)}\right]
-8\pi {\rm i}A\frac{\partial t^2}{\partial \tau}\frac{\theta_1^{\prime}(v_2,\tau)}{\theta_1(v_2,\tau)}
\\ \hphantom{-4\pi {\rm i}t^1\frac{\partial t^1}{\partial \tau}=}
{}-8\pi {\rm i}At^2\frac{\partial}{\partial \tau}\left(\frac{\theta_1^{\prime}(v_2,\tau)}{\theta_1(v_2,\tau)}\right)-16\pi {\rm i}A t^2\frac{\theta_1^{\prime}(v_2,\tau)}{\theta_1(v_2,\tau)} \left[\frac{\partial_{\tau}\theta_1(v_0,\tau)}{\theta_1(v_0,\tau)} {}-\frac{\partial_{\tau}\theta_1(v_2,\tau)}{\theta_1(v_2,\tau)}\right]
\\ \hphantom{-4\pi {\rm i}t^1\frac{\partial t^1}{\partial \tau}=}
-16\pi {\rm i}t^2\frac{\partial t^2}{\partial \tau}\frac{\theta_1^{\prime^2}(v_2,\tau)}{\theta_1^2(v_2,\tau)}-16\pi {\rm i}\big(t^2\big)^2\frac{\theta_1^{\prime}(v_2,\tau)}{\theta_1(v_2,\tau)}\frac{\partial}{\partial \tau}\left(\frac{\theta_1^{\prime}(v_2,\tau)}{\theta_1(v_2,\tau)}\right).
\end{gather*}
Then, we have
\begin{gather*}
g^{11}=(1)+(2)+(3)+(4)+(5),
\end{gather*}
where
\begin{gather*}
(1)=A^2\left[ 2\frac{\theta_1^{\prime^2}(v_0,\tau)}{\theta_1^2(v_0,\tau)} -2\frac{\theta_1^{\prime^2}(v_2\tau)}{\theta_1^2(v_2,\tau)} -8\pi {\rm i}\left[\frac{\partial_{\tau}\theta_1(v_0,\tau)}{\theta_1(v_0,\tau)} -\frac{\partial_{\tau}\theta_1(v_2,\tau)}{\theta_1(v_2,\tau)}\right] \right]
\\ \hphantom{(1)}
=A^2\left[ 2\frac{\theta_1^{\prime^2}(v_0,\tau)}{\theta_1^2(v_0,\tau)} -2\frac{\theta_1^{\prime^2}(v_2\tau)}{\theta_1^2(v_2,\tau)} -2\frac{\theta_1^{\prime\prime}(v_0,\tau)}{\theta_1(v_0,\tau)} +2\frac{\theta_1^{\prime\prime}(v_2,\tau)}{\theta_1(v_2,\tau)}\right]
\\\hphantom{(1)}
=2A^2[\wp(v_0)-\wp(v_2)]=2\frac{16\omega^4}{[\wp(v_0)-\wp(v_2)]^2} [\wp(v_0)-\wp(v_2)]=32\frac{\omega^4}{\wp(v_0)-\wp(v_2)},
\\
(2)=-8\pi {\rm i}t^2A\frac{\partial}{\partial \tau}\! \left(\frac{\theta_1^{\prime}(v_2,\tau)}{\theta_1(v_2,\tau)}\right)
\!+2At^2\frac{\theta_1^{\prime^2}(v_2,\tau)}{\theta_1^2(v_2,\tau)} \!\left[2\frac{\theta_1^{\prime}(v_0,\tau)}{\theta_1(v_0,\tau)} \!\left[ \frac{\theta_1^{\prime}(v_0\!-\!v_2,\tau)}{\theta_1(v_0\!-\!v_2,\tau)} \! +\!\frac{\theta_1^{\prime}(v_0\!+\!v_2,\tau)}{\theta_1(v_0\!+\!v_2,\tau)}\right]\right]
\\ \hphantom{(2)=}
+2At^2\frac{\theta_1^{\prime^2}(v_2,\tau)}{\theta_1^2(v_2,\tau)} \left[2\frac{\theta_1^{\prime}(v_2,\tau)}{\theta_1(v_2,\tau)}\left[ \frac{-\theta_1^{\prime}(v_0-v_2,\tau)}{\theta_1(v_0-v_2,\tau)} +\frac{\theta_1^{\prime}(v_0+v_2,\tau)}{\theta_1(v_0+v_2,\tau)} -2\frac{\theta_1^{\prime}(2v_2,\tau)}{\theta_1(2v_2,\tau)}\right]\right]
\\ \hphantom{(2)=}
+2At^2\frac{\theta_1^{\prime^2}(v_2,\tau)}{\theta_1^2(v_2,\tau)} \left[2\left[\frac{\theta_1^{\prime\prime}(v_2,\tau)}{\theta_1(v_2,\tau)} -\frac{\theta_1^{\prime^2}(v_2,\tau)}{\theta_1^2(v_2,\tau)}\right]-8\pi {\rm i}\left[\frac{\partial_{\tau}\theta_1(v_0,\tau)}{\theta_1(v_0,\tau)} -\frac{\partial_{\tau}\theta_1(v_2,\tau)}{\theta_1(v_2,\tau)}\right]\right]
\\ \hphantom{(2)=}
+\!2At^2\frac{\theta_1^{\prime^2}(v_2,\tau)}{\theta_1^2(v_2,\tau)}\!\!\left[\!-4\pi {\rm i}\!\left[\frac{\partial_{\tau}\theta_1(v_0\!+\!v_2,\tau)}{\theta_1(v_0\!+\!v_2,\tau)} \!+\!\frac{\partial_{\tau}\theta_1(v_0\!-\!v_2,\tau)}{\theta_1(v_0\!-\!v_2,\tau)} \!-\!\frac{\partial_{\tau}\theta_1(2v_2,\tau)}{\theta_1(2v_2,\tau)} \!-\!\frac{\partial_{\tau}\theta_1^{\prime}(0,\tau)}{\theta_1^{\prime}(0,\tau)}\!\right]\! \right]\!.
\end{gather*}
Using~\eqref{lemmaa1},
\begin{gather*}
2\frac{\theta_1^{\prime\prime}(v_0,\tau)}{\theta_1(v_0,\tau)} +2\frac{\theta_1^{\prime\prime}(v_2,\tau)}{\theta_1(v_2,\tau)}-4\pi {\rm i} \frac{\partial}{\partial \tau}\left( \log\left(\frac{\theta_1^{\prime}(0,\tau)}{\theta(v_0-v_2,\tau)}\right)\right)-4\pi {\rm i} \frac{\partial}{\partial \tau}\left( \log\left(\frac{\theta_1^{\prime}(0,\tau)} {\theta(v_0+v_2,\tau)}\right)\right)
\\ \qquad
{}=2\frac{\theta_1^{\prime}(v_0,\tau)}{\theta_1(v_0,\tau)} \left(\frac{\theta_1^{\prime}(v_0+v_2,\tau)}{\theta_1(v_0+v_2,\tau)}
+\frac{\theta_1^{\prime}(v_0-v_2,\tau)}{\theta_1(v_0-v_2,\tau)}\right)
\\ \qquad\phantom{=} +2\frac{\theta_1^{\prime}(v_2,\tau)}{\theta_1(v_2,\tau)} \left( - \frac{\theta_1^{\prime}(v_0-v_2,\tau)}{\theta_1(v_0-v_2,\tau)} +\frac{\theta_1^{\prime}(v_0+v_2,\tau)}{\theta_1(v_0+v_2,\tau)}\right),
\\
(2)=-8\pi {\rm i}t^2A\frac{\partial}{\partial \tau}\left(\frac{\theta_1^{\prime}(v_2,\tau)}{\theta_1(v_2,\tau)}\right) +2At^2\frac{\theta_1^{\prime^2}(v_2,\tau)}{\theta_1^2(v_2,\tau)} \left[ - 4\frac{\theta_1^{\prime}(v_2,\tau)}{\theta_1(v_2,\tau)} \frac{\theta_1^{\prime}(2v_2,\tau)}{\theta_1(2v_2,\tau)}\right]
\\ \phantom{(2)=}
+2At^2\frac{\theta_1^{\prime^2}(v_2,\tau)}{\theta_1^2(v_2,\tau)} \left[2\left[\frac{\theta_1^{\prime\prime}(v_2,\tau)}{\theta_1(v_2,\tau)} -\frac{\theta_1^{\prime^2}(v_2,\tau)}{\theta_1^2(v_2,\tau)}\right]+4 \frac{\theta_1^{\prime\prime}(v_2,\tau)}{\theta_1(v_2,\tau)} \right]
\\ \hphantom{(2)=}
+2At^2\frac{\theta_1^{\prime^2}(v_2,\tau)}{\theta_1^2(v_2,\tau)}\left[-4\pi {\rm i}\frac{\partial_{\tau}\theta_1^{\prime}(0,\tau)}{\theta_1^{\prime}(0,\tau)}+4\pi {\rm i}\frac{\partial_{\tau}\theta_1(2v_2,\tau)}{\theta_1(2v_2,\tau)} \right].
\end{gather*}
Using again~\eqref{lemmaa1},
\begin{gather*}
(2)=-8\pi {\rm i}t^2A\frac{\partial}{\partial \tau}\left(\frac{\theta_1^{\prime}(v_2,\tau)}{\theta_1(v_2,\tau)}\right) +8At^2\frac{\theta_1^{\prime^2}(v_2,\tau)}{\theta_1^2(v_2,\tau)} \left[\frac{\theta_1^{\prime\prime}(v_2,\tau)}{\theta_1(v_2,\tau)} -\frac{\theta_1^{\prime^2}(v_2,\tau)}{\theta_1^2(v_2,\tau)}\right],
\\
(3)=4\frac{\theta_1^{\prime^2}(v_2,\tau)}{\theta_1^2(v_2,\tau)}\left[ \frac{1}{2}\left(\frac{\partial t^2}{\partial v_0}\right)^2- \frac{1}{2}\left(\frac{\partial t^2}{\partial v_2}\right)^2-4\pi {\rm i}t^2\frac{\partial t^2}{\partial \tau}\right]
\\ \hphantom{(3)}
=8\frac{\theta_1^{\prime^2}(v_2,\tau)}{\theta_1^2(v_2,\tau)}\big(t^2\big)^2 \left[\frac{\theta_1^{\prime\prime}(2v_2,\tau)}{\theta_1(2v_2,\tau)} -\frac{\theta_1^{\prime^2}(2v_2,\tau)}{\theta_1^2(2v_2,\tau)}\right],
\\
(4)=-2\big(t^2\big)^2\left[\frac{\partial}{\partial v_2}\left(\frac{\theta_1^{\prime}(v_2,\tau)}{\theta_1(v_2,\tau)}\right)\right]^2-16\pi {\rm i}\big(t^2\big)^2\frac{\theta_1^{\prime}(v_2,\tau)}{\theta_1(v_2,\tau)}\frac{\partial}{\partial \tau}\left(\frac{\theta_1^{\prime}(v_2,\tau)}{\theta_1(v_2,\tau)}\right),
\\
(5)=-4\big(t^2\big)\frac{\partial t^2} {\partial v_2}\frac{\theta_1^{\prime}(v_2,\tau)}{\theta_1(v_2,\tau)}\frac{\partial}{\partial v_2}\left(\frac{\theta_1^{\prime}(v_2,\tau)}{\theta_1(v_2,\tau)}\right)
\\ \hphantom{(5)}
=-4\big(t^2\big)^2\frac{\theta_1^{\prime}(v_2,\tau)}{\theta_1(v_2,\tau)}\frac{\partial}{\partial v_2}\left(\frac{\theta_1^{\prime}(v_2,\tau)}{\theta_1(v_2,\tau)}\right)\left[ \frac{-\theta_1^{\prime}(v_0-v_2,\tau)}{\theta_1(v_0-v_2,\tau)} +\frac{\theta_1^{\prime}(v_0+v_2,\tau)}{\theta_1(v_0+v_2,\tau)} -2\frac{\theta_1^{\prime}(2v_2,\tau)}{\theta_1(2v_2,\tau)}\right]
\\ \hphantom{(5)}
=-4\big(t^2\big)^2\frac{\theta_1^{\prime}(v_2,\tau)}{\theta_1(v_2,\tau)}\frac{\partial}{\partial v_2}\left(\frac{\theta_1^{\prime}(v_2,\tau)}{\theta_1(v_2,\tau)}\right)\left[ \frac{-\theta_1^{\prime}(v_0-v_2,\tau)}{\theta_1(v_0-v_2,\tau)} +\frac{\theta_1^{\prime}(v_0+v_2,\tau)}{\theta_1(v_0+v_2,\tau)} -2\frac{\theta_1^{\prime}(v_2,\tau)}{\theta_1(v_2,\tau)}\right]
\\ \hphantom{(5)=}
-4\big(t^2\big)^2\frac{\theta_1^{\prime}(v_2,\tau)}{\theta_1(v_2,\tau)}\frac{\partial}{\partial v_2}\left(\frac{\theta_1^{\prime}(v_2,\tau)}{\theta_1(v_2,\tau)}\right)\left[ 2\frac{\theta_1^{\prime}(v_2,\tau)}{\theta_1(v_2,\tau)} -2\frac{\theta_1^{\prime}(2v_2,\tau)}{\theta_1(2v_2,\tau)}\right]
\\ \hphantom{(5)}
=-4\big(t^2\big)A\frac{\theta_1^{\prime}(v_2,\tau)}{\theta_1(v_2,\tau)}\frac{\partial}{\partial v_2}\left(\frac{\theta_1^{\prime}(v_2,\tau)}{\theta_1(v_2,\tau)}\right)
\\ \hphantom{(5)=}
-4\big(t^2\big)^2\frac{\theta_1^{\prime}(v_2,\tau)}{\theta_1(v_2,\tau)}\frac{\partial}{\partial v_2}\left(\frac{\theta_1^{\prime}(v_2,\tau)}{\theta_1(v_2,\tau)}\right)\left[ 2\frac{\theta_1^{\prime}(v_2,\tau)}{\theta_1(v_2,\tau)} -2\frac{\theta_1^{\prime}(2v_2,\tau)}{\theta_1(2v_2,\tau)}\right].
\end{gather*}
Summing $(2)$ and $(5)$
\begin{gather*}
(2)+(5)=-8\pi {\rm i}t^2A\frac{\partial}{\partial \tau}\left(\frac{\theta_1^{\prime}(v_2,\tau)}{\theta_1(v_2,\tau)}\right) +8At^2\frac{\theta_1^{\prime^2}(v_2,\tau)}{\theta_1^2(v_2,\tau)} \left[\frac{\theta_1^{\prime\prime}(v_2,\tau)}{\theta_1(v_2,\tau)} -\frac{\theta_1^{\prime^2}(v_2,\tau)}{\theta_1^2(v_2,\tau)}\right]
\\ \hphantom{(2)+(5)=}
-4\big(t^2\big)A\frac{\theta_1^{\prime}(v_2,\tau)}{\theta_1(v_2,\tau)}\frac{\partial}{\partial v_2}\left(\frac{\theta_1^{\prime}(v_2,\tau)}{\theta_1(v_2,\tau)}\right)
\\ \hphantom{(2)+(5)=}
-4\big(t^2\big)^2\frac{\theta_1^{\prime}(v_2,\tau)}{\theta_1(v_2,\tau)}\frac{\partial}{\partial v_2}\left(\frac{\theta_1^{\prime}(v_2,\tau)}{\theta_1(v_2,\tau)}\right)\left[ 2\frac{\theta_1^{\prime}(v_2,\tau)}{\theta_1(v_2,\tau)} -2\frac{\theta_1^{\prime}(2v_2,\tau)}{\theta_1(2v_2,\tau)}\right]
\\ \hphantom{(2)+(5)}
=-4\big(t^2\big)^2\frac{\theta_1^{\prime}(v_2,\tau)}{\theta_1(v_2,\tau)}\frac{\partial}{\partial v_2}\left(\frac{\theta_1^{\prime}(v_2,\tau)}{\theta_1(v_2,\tau)}\right)\left[ 2\frac{\theta_1^{\prime}(v_2,\tau)}{\theta_1(v_2,\tau)} -2\frac{\theta_1^{\prime}(2v_2,\tau)}{\theta_1(2v_2,\tau)}\right]
\\ \hphantom{(2)+(5)=}
+At^2\left[{-}\,2\frac{\theta_1^{\prime\prime\prime}(v_2,\tau)}{\theta_1(v_2,\tau)} +6\frac{\theta_1^{\prime}(v_2,\tau)\theta_1^{\prime\prime}(v_2,\tau)}{\theta_1(v_2,\tau)} -4\frac{\theta_1^{\prime^3}(v_2,\tau)}{\theta^3_1(v_2,\tau)}\right]
\\ \hphantom{(2)+(5)}
=-4\big(t^2\big)^2\frac{\theta_1^{\prime}(v_2,\tau)}{\theta_1(v_2,\tau)}\frac{\partial}{\partial v_2}\left(\frac{\theta_1^{\prime}(v_2,\tau)}{\theta_1(v_2,\tau)}\right)\left[ 2\frac{\theta_1^{\prime}(v_2,\tau)}{\theta_1(v_2,\tau)} -2\frac{\theta_1^{\prime}(2v_2,\tau)}{\theta_1(2v_2,\tau)}\right]
+2At^2\wp^{\prime}(v_2)
\\ \hphantom{(2)+(5)}
=-4\big(t^2\big)^2\frac{\theta_1^{\prime}(v_2,\tau)}{\theta_1(v_2,\tau)}\frac{\partial}{\partial v_2}\!\left(\frac{\theta_1^{\prime}(v_2,\tau)}{\theta_1(v_2,\tau)}\right)\!\left[ 2\frac{\theta_1^{\prime}(v_2,\tau)}{\theta_1(v_2,\tau)} -2\frac{\theta_1^{\prime}(2v_2,\tau)}{\theta_1(2v_2,\tau)}\right]\!
-32\frac{\omega^4}{\wp(v_0)\!-\!\wp(v_2)}.
\end{gather*}
Summing $(1)$ and $(2)+(5)$
\begin{gather*}
(1)+(2)+(5)=-4\big(t^2\big)^2\frac{\theta_1^{\prime}(v_2,\tau)}{\theta_1(v_2,\tau)}\frac{\partial}{\partial v_2}\left(\frac{\theta_1^{\prime}(v_2,\tau)}{\theta_1(v_2,\tau)}\right)\left[ 2\frac{\theta_1^{\prime}(v_2,\tau)}{\theta_1(v_2,\tau)} -2\frac{\theta_1^{\prime}(2v_2,\tau)}{\theta_1(2v_2,\tau)}\right].
\end{gather*}
 From the above results, we find
\begin{align*}
g^{11}=&(1)+(2)+(5)+(3)+(4)\\
=&-4\big(t^2\big)^2\frac{\theta_1^{\prime}(v_2,\tau)}{\theta_1(v_2,\tau)}\frac{\partial}{\partial v_2}\left(\frac{\theta_1^{\prime}(v_2,\tau)}{\theta_1(v_2,\tau)}\right)\left[ 2\frac{\theta_1^{\prime}(v_2,\tau)}{\theta_1(v_2,\tau)} -2\frac{\theta_1^{\prime}(2v_2,\tau)}{\theta_1(2v_2,\tau)}\right]\\
&+8\frac{\theta_1^{\prime^2}(v_2,\tau)}{\theta_1^2(v_2,\tau)}\big(t^2\big)^2\left[\frac{\theta_1^{\prime\prime}(2v_2,\tau)}{\theta_1(2v_2,\tau)}-\frac{\theta_1^{\prime^2}(2v_2,\tau)}{\theta_1^2(2v_2,\tau)}\right]\\
&-2\big(t^2\big)^2\left[\frac{\partial}{\partial v_2}\left(\frac{\theta_1^{\prime}(v_2,\tau)}{\theta_1(v_2,\tau)}\right)\right]^2-16\pi {\rm i}\big(t^2\big)^2\frac{\theta_1^{\prime}(v_2,\tau)}{\theta_1(v_2,\tau)}\frac{\partial}{\partial \tau}\left(\frac{\theta_1^{\prime}(v_2,\tau)}{\theta_1(v_2,\tau)}\right).
\end{align*}
Summarizing, we have proved the identities~\eqref{g23} and \eqref{g11}.

\subsection*{Acknowledgements}
I am grateful to Professor Boris Dubrovin for proposing this problem, for his remarkable advi\-ses and guidance. I would like also to thanks Professors Davide Guzzetti and Marco Bertola for~helpful discussions, and guidance of this paper. In addition, I~thank the anonymous referees for~their valuable comments and remarks, that have helped to improve the paper.

\pdfbookmark[1]{References}{ref}
\LastPageEnding

\end{document}